\newcounter{pp}
\newtheorem*{BulletTheorem*}{$\bullet$ Theorem}
\newtheorem*{Problem*}{{\bfseries Problem}}
\newtheorem{rem}{{\bfseries Remark}}[section]
\numberwithin{equation}{section}
\newtheorem{cor}{{\bfseries Corollary}}[section]
\newtheorem*{Proposition*}{{\bfseries Proposition}}
\newtheorem{theorem}{Theorem}[section]
\newtheorem{proposition}{Proposition}[section]
\newtheorem{lemma}{Lemma}[section]
\newtheorem{definition}{Definition}[section]
\theoremstyle{definition}
\newtheorem*{Conjecture}{Conjecture}
\begin{document}
\title[Classification of singularities]{4d $\mathcal{N}=2$ SCFT and singularity theory Part II: Complete intersection}

\author{Bingyi \ Chen}
\address{Department of Mathematical Sciences,
Tsinghua University,
Beijing 100084
P. R. China.}
\author{Dan \ Xie}
\address{Center of Mathematical Sciences and Applications\\
Jefferson Physical Laboratory, Harvard University, Cambridge, 02138, USA}
\author{Shing-Tung\ Yau}
\address{Department of Mathematics \\
 Center of Mathematical Sciences and Applications\\
Jefferson Physical Laboratory, Harvard University, Cambridge, 02138, USA}

\author{Stephen S.-T.\ Yau}
\address{Department of Mathematical Sciences,
Tsinghua University,
Beijing 100084
P. R. China.}
\author{Huaiqing \ Zuo}
\address{Yau Mathematical Sciences Center, Tsinghua University, Beijing, 100084, P. R. China.}
\thanks{ The work of S.T Yau is supported by  NSF grant  DMS-1159412, NSF grant PHY-
0937443, and NSF grant DMS-0804454. The work of Stephen Yau and Huaiqing Zuo is supported by NSFC (grant nos. 11401335, 11531007) and  Tsinghua University Initiative Scientific Research Program. 
The work of DX is supported by Center for Mathematical Sciences and Applications at Harvard University, and in part by the Fundamental Laws Initiative of
the Center for the Fundamental Laws of Nature, Harvard University.}

\begin{abstract}
We classify  three dimensional isolated weighted homogeneous rational complete intersection singularities, which define many new  four dimensional $\mathcal{N}=2$ superconformal field theories. 
We also determine the mini-versal deformation of these singularities, and therefore solve the Coulomb branch spectrum and Seiberg-Witten solution. 
\end{abstract}

\maketitle

\section{Introduction}
This is the second  of a series of papers in which we try to classify four dimensional $\mathcal{N}=2$ superconformal field theories (SCFTs) using classification of singularity. 
This program has several interesting features: 
\begin{itemize}
\item The classification of field theory is reduced to the classification of singularities, which in many cases are much simpler than the classification using  field theory 
tools.
\item Many highly non-trivial physical questions such as Coulomb branch spectrum and the Seiberg-Witten solution \cite{SW1,SW2} can be easily found by studying the mini-versal 
deformation of the singularity. 
\end{itemize}

In \cite{XY}, we conjecture that any three dimensional rational Gorenstein graded isolated singularity should define a $\mathcal{N}=2$ SCFT. 
A complete list of hypersurface singularities was obtained in \cite{YY, YY1}, and this immediately gives us a large number of new
four dimensional $\mathcal{N}=2$ SCFTs. 

The natural next step  is to classify three dimensional rational weighted homogeneous isolated complete intersection singularities (ICIS). To our surprise, the 
space of such singularities is also very rich, and we succeed in giving a complete classification. Let's summarize our major findings:
\begin{itemize}
\item The number of polynomials defining ICIS is two, i.e. the singularity is defined as $f_1=f_2=0$. 
\item We find a total of 303 class of singularities, and some of them consist only finite number of models, but we do get 
many infinite sequences. 
\end{itemize}
Our classification gives many new interesting 4d $\mathcal{N}=2$ SCFTs. 
Some of these singularities describe the familiar gauge theory, i.e. the singularity $(f_1, f_2)=(z_1^2+z_2^2+z_3^2+z_4^2+z_5^{2N},z_1^2+2z_2^2+3z_3^2+4z_4^2+5z_5^{2N})$ 
describes the affine $D_5$ quiver gauge theory with SU type gauge group, see figure. \ref{D5}. The major purpose of this paper is to describe the classification, and more detailed study of the 
corresponding SCFTs will appear in a different publication. 
\begin{figure}[h]
  \centering
  \includegraphics[width=2in]{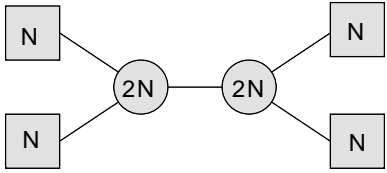}
  \caption
   {4d $\mathcal{N}=2$ SCFT described by the singularity $(f_1, f_2)=(z_1^2+z_2^2+z_3^2+z_4^2+z_5^{2N},z_1^2+2z_2^2+3z_3^2+4z_4^2+5z_5^{2N})$. Here the gauge group is $SU(2N)$.}
   \label{D5}
\end{figure}

This paper is organized as follows: 
Section two  reviews the connection between the physics of 4d $\mathcal{N}=2$ SCFT and the property of ICIS; Section III reviews some preliminary 
facts about  the general property of singularities; Section IV proves that the maximal embedding dimension of 3d rational ICIS is five, and section V classifies the general weighted homogeneous ICIS and 
we also compute the Milnor number and the monomial basis of the mini-versal deformations. 

\section{ICIS and  4d $\mathcal{N}=2$ SCFT}
Four dimensional $\mathcal{N}=2$ SCFT has an important $SU(2)_R\times U(1)_R$ R symmetry. These theories have 
 half BPS scalar operators  ${\cal E}_{r,(0,0)}$ which can get expectation value and parameterize the Coulomb branch, here $r$ is the $U(1)_R$ charge and this operator is a singlet under $SU(2)_R$ symmetry. 
The scaling dimension of  operator ${\cal E}_{r,(0,0)}$ is $\Delta=r$, and if  $1<r<2$, one can turn on the relevant deformation 
\begin{equation}
\delta S=\lambda \int d^4\theta {\cal E}_{r,(0,0)}+c.c.
\end{equation}
The scaling dimension of the coupling constant is determined by the relation $[\lambda]+[ {\cal E}_{r,(0,0)}]=2$. 
The coupling constants do not lift the Coulomb branch, but will change the infrared physics, so we need to include those coupling constants 
besides the expectation values of ${\cal E}_{r,(0,0)}$ to parameterize the Coulomb branch. We also have the dimension one mass parameters $m_i$ which 
can also change the IR physics, which should also be included as the parameters of Coulomb branch. To solve the Coulomb branch of a $\mathcal{N}=2$ SCFT, we need to achieve the following two goals:
\begin{itemize}
\item Determine the set of rational numbers which include the scaling dimension of the coupling constants $\lambda$, Coulomb branch operators  ${\cal E}_{r,(0,0)}$, and mass parameters $m_i$. 
\begin{equation}
(r_1,\ldots, 1,\ldots,1, \ldots r_{\mu}).
\end{equation}
Since the scaling dimension of the coupling constant is paired with that of Coulomb branch operator, this set is symmetric with respect to identity.
\item Once we find out the parameters on the Coulomb branch, we want to write down a Seiberg-Witten curve which describes the low energy effective 
theory on the Coulomb branch 
\begin{equation}
F(z_i, u)=0.
\end{equation}
Here $u$ includes all the parameters discussed above. 
\end{itemize}
These two questions are central in understanding Coulomb branch of a $\mathcal{N}=2$ SCFT, and in general are quite hard to answer.  

If our 4d SCFT is engineered using 3-fold singularity, the above two questions can be found from the miniversal deformation of the singularity\footnote{If the theory is engineered using M5 branes \cite{Ga,GMN, NX,CDY, Xie, WX}, these questions are solved by spectral curve of Hitchin system. If 
the theory is engineered using Kodaira singularity \cite{ALLM1,ALLM2,ALLM3}, these questions can also be solved by studying the deformations of the singularity.}. The formulae for the hypersurface case  has been given in \cite{SV, CNV}.
In the following, we will review the relevant formulas for ICIS case, which is first derived in \cite{XY}.

Consider a three dimensional ICIS defined by two polynomials $f=(f_1,f_2)$, where$f$ is the map $f:(C^5,0)\rightarrow (C^2,0)$ \footnote{ICIS defined by two polynomials is enough for our purpose, see section IV.}. We require the defining polynomials
to have a manifest $\mathbb{C}^*$ action, which is proportional to the $U(1)_R$ symmetry of the field theory. We normalize the $\mathbb{C}^*$ action so that the weights of the coordinates $(z_1,\ldots, z_5)$ are 
$(w_1,w_2,\ldots, w_5)$, and the degree of $f_1$ is one, the degree of $f_2$ is d:
\begin{equation}
f_1(\lambda^{w_i}z_i)=\lambda f_1(z_i),~~~f_2(\lambda^{w_i}z_i)=\lambda^d f_2(z_i).
\end{equation} 
This singularity has a distinguished $(3,0)$ form: 
\begin{equation}
\Omega={dz_1\wedge dz_2\wedge \ldots \wedge dz_5\over df_1\wedge df_2},
\end{equation}
which has charge $\sum w_i-1-d$ under the $\mathbb{C}^*$ action. To define a sensible 4d SCFT, we require this charge to be positive, which means that 
\begin{equation}
\sum w_i>1+d.
\end{equation}
We conjecture that this condition is necessary and sufficient to define a SCFT. Such singularity is called rational singularity, see section III for the definition.  
The SW solution is described by the mini-versal deformation of the singularity:
\begin{equation}
F(\lambda, z_i)=f(z_i)+\sum_{\alpha=1}^\mu \lambda_{\alpha} \phi_{\alpha},
\label{SW}
\end{equation}
here $\phi_{\alpha}$ is the monomial basis of the Jacobi module of $f$, and $\mu$ is the Milnor number. The coefficient $\lambda_{\alpha}$ is identified 
with the parameters on Coulomb branch. The scaling dimension of $\lambda_{\alpha}$ is determined by requiring $\Omega$ to have dimension one as its integration over
the middle homology cycle of Milnor fibration gives the mass of BPS particle, and we have 
\begin{align}
&\phi_{\alpha}=[\phi_{\alpha},0]:~~[\lambda_{\alpha}]={1-Q_{\alpha}\over \sum w_i-1-d}, \nonumber\\
&\phi_{\alpha}=[0,\phi_{\alpha}]:~~[\lambda_{\alpha}]={d-Q_{\alpha}\over \sum w_i-1-d}.
\label{spectrum}
\end{align}
Here $Q_{\alpha}$ is the $\mathbb{C}^*$ charge of the monomial $\phi_{\alpha}$. 
The spectrum is classified into  following categories:
\begin{itemize}
\item  Coulomb branch operator ${\cal E}_{r,(0,0)}$ if $[\lambda_{\alpha}]>1$.
\item  Mass parameters if $[\lambda_{\alpha}]=1$. 
\item  Coupling constants for relevant deformations if $0<[\lambda_{\alpha}]<1$.
\item  Exact marginal deformations if $[\lambda_{\alpha}]=0$. These deformations are related to the moduli of the singularity.
\item  Irrelevant deformations if $[\lambda_{\alpha}]<0$. 
\end{itemize}
The spectrum is paired and is symmetric with respect to one, which is in perfect agreement with the field theory expectation.

\textbf{Example}: Consider the singularity $f=(f_1,f_2)=(z_1^2+z_2^2+z_3^2+z_4^2+z_5^2, z_1^2+2z_2^2+3z_3^2+4z_4^2+5z_5^2)=0$.  The weights and degrees of two polynomials are 
$({1\over 2},{1\over2},{1\over2},{1\over2},{1\over2};1,1)$. 
The Jacobi module $J_f$
has the basis 
\begin{align}
&\phi_1=[0,z_5^2],~\phi_2=[0,z_4^2],~\phi_3=[0,z_5],~\phi_4=[0,z_4],~\phi_5=[0,z_3], \nonumber\\
&\phi_6=[0,z_2],~\phi_7=[0,z_1],~\phi_8=[0,1],~\phi_9=[1,0].
\end{align}
Using the formula (\ref{spectrum}), we find the scaling dimension of the coefficients $\lambda_i$ in mini-versal deformation:
\begin{equation}
[\lambda_1]=[\lambda_2]=0,~[\lambda_3]=[\lambda_4]=[\lambda_5]=[\lambda_6]=[\lambda_7]=1,~[\lambda_8]=[\lambda_9]=2; 
\end{equation}
So this theory has two exact marginal deformations, two Coulomb branch operators with dimension 2, and five mass parameters. 
The corresponding gauge theory is depicted in figure. \ref{D5} with $N=1$. 

In the following sections, we will classify all possible 3 dimensional weighted homogeneous ICIS, and describe the miniversal deformation. 
The Coulomb branch is then solved using formulas \ref{SW} and \ref{spectrum}.
\section{Preliminaries}

In this section, we recall some definitions and known results about the Gorenstein singularity. 

\begin{definition}
For a commutative Noetherian local ring $R$, the depth of $R$ (the maximum length of a regular sequence in the maximal ideal of $R$) is at most the Krull dimension of $R$. The ring $R$ is called Cohen-Macaulay if its depth is equal to its dimension. More generally, a commutative ring is called Cohen-Macaulay if it is Noetherian and all of its localizations at prime ideals are Cohen-Macaulay. In geometric terms, a scheme is called Cohen-Macaulay if it is locally Noetherian and its local ring at every point is Cohen-Macaulay.
\end{definition}

\begin{definition}
Let $(X, x)$ be an isolated singularity of dimension $n$. $(X, x)$ is said to be normal or Cohen-Macaulay if the local ring $\mathcal{O}_{X, x}$ has such a property.
\end{definition}

Let $(X, x)$ be an isolated singularity of dimension $n$. Then we have the following propositions.

\begin{proposition}(Corollary 3.10, \cite{Ha})
 $(X, x)$ is called Cohen-Macaulay iff $H^1_{\{x\}}(\mathcal{O}_{X})=\cdots=H^{n-1}_{\{x\}}(\mathcal{O}_{X})=0$. $(X, x)$ is normal iff $H^1_{\{x\}}(\mathcal{O}_{X})=0$.
 \end{proposition}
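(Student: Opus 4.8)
The plan is to deduce both equivalences from Grothendieck's cohomological characterization of depth and dimension for the local ring $R := \mathcal{O}_{X,x}$ with maximal ideal $\mathfrak{m}$, combined with Serre's criterion for normality. The essential input is Grothendieck's theorem that, for a finitely generated module over a Noetherian local ring, $H^i_{\mathfrak{m}}(R) = 0$ for $i < \operatorname{depth} R$ and for $i > \dim R$, while $H^i_{\mathfrak{m}}(R) \neq 0$ at $i = \operatorname{depth} R$ and at $i = \dim R$. Equivalently, $\operatorname{depth} R = \min\{i : H^i_{\mathfrak{m}}(R) \neq 0\}$ and $\dim R = n = \max\{i : H^i_{\mathfrak{m}}(R) \neq 0\}$. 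The first reduction is the standard identification of the local cohomology with support in the closed point, $H^i_{\{x\}}(\mathcal{O}_X)$, with the algebraic local cohomology $H^i_{\mathfrak{m}}(R)$ of the local ring.

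For the Cohen-Macaulay statement, recall that by definition $R$ is Cohen-Macaulay iff $\operatorname{depth} R = \dim R = n$. Since $\dim R = n$, Grothendieck's nonvanishing gives $H^n_{\mathfrak{m}}(R) \neq 0$ always. Moreover, because $(X,x)$ is an isolated singularity of dimension $n \geq 1$, the ring $R$ is reduced of positive dimension, so $\mathfrak{m}$ is not an associated prime and $R$ admits a nonzerodivisor in $\mathfrak{m}$; hence $\operatorname{depth} R \geq 1$ and $H^0_{\mathfrak{m}}(R) = 0$. Consequently $R$ is Cohen-Macaulay iff the least index of nonvanishing cohomology equals $n$, i.e. iff $H^1_{\mathfrak{m}}(R) = \cdots = H^{n-1}_{\mathfrak{m}}(R) = 0$, which is exactly the claimed criterion.

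For the normality statement, I would invoke Serre's criterion: a Noetherian ring is normal iff it satisfies $(R_1)$ (regular in codimension $\leq 1$) and $(S_2)$ (at every prime $\mathfrak{p}$, $\operatorname{depth} R_{\mathfrak{p}} \geq \min(2, \dim R_{\mathfrak{p}})$). Because $x$ is an isolated singularity, the punctured spectrum $\operatorname{Spec} R \setminus \{\mathfrak{m}\}$ is regular, so every localization $R_{\mathfrak{p}}$ with $\mathfrak{p} \neq \mathfrak{m}$ is regular, hence Cohen-Macaulay; thus both $(R_1)$ and $(S_2)$ hold automatically away from $\mathfrak{m}$. Since $\mathfrak{m}$ has height $n \geq 2$, it imposes no condition for $(R_1)$, so $(R_1)$ holds unconditionally, and $(S_2)$ reduces to the single requirement $\operatorname{depth} R \geq 2$ at $\mathfrak{m}$. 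Using $H^0_{\mathfrak{m}}(R) = 0$ together with Grothendieck's characterization, $\operatorname{depth} R \geq 2$ is equivalent to $H^1_{\mathfrak{m}}(R) = 0$. Therefore $R$ is normal iff $H^1_{\{x\}}(\mathcal{O}_X) = 0$.

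The genuine content, and the main obstacle, is Grothendieck's vanishing and nonvanishing theorem, which supplies the cohomological reading of depth and dimension; once this is available, both equivalences become bookkeeping. A secondary point demanding care is the pair of automatic facts $H^0_{\mathfrak{m}}(R) = 0$ and the unconditional validity of $(R_1)$: each relies on the isolatedness of the singularity (and, for normality, on $n \geq 2$), and implicitly on $R$ being reduced and of pure dimension $n$, which is part of the hypothesis of an isolated singularity of dimension $n$ treated here.
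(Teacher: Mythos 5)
The paper does not prove this proposition at all: it is quoted verbatim as Corollary 3.10 of Hartshorne's \emph{Local Cohomology} notes, so there is no in-paper argument to compare against. Your derivation is correct and is essentially the standard way to obtain the statement from first principles: Grothendieck's characterization $\operatorname{depth}R=\min\{i: H^i_{\mathfrak m}(R)\neq 0\}$ (which is, in substance, the content of the cited Corollary 3.10, stated there as $H^i_Y(F)=0$ for $i<n$ iff $\operatorname{depth}_Y F\geq n$) handles the Cohen--Macaulay equivalence, and Serre's criterion $(R_1)+(S_2)$ handles normality, with the isolatedness of the singularity making $(R_1)$ and the $(S_2)$ condition away from $\mathfrak m$ automatic. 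You correctly flag the two places where care is needed: the vanishing of $H^0_{\mathfrak m}(R)$ (which needs $R$ reduced of positive dimension, part of the convention for an analytic germ here) and the requirement $n\geq 2$ for the normality statement --- indeed for $n=1$ the criterion ``normal iff $H^1=0$'' is false, since $H^1_{\mathfrak m}(R)=H^{\dim R}_{\mathfrak m}(R)\neq 0$; the paper's applications all have $n=3$, so this is harmless. The only step you pass over quickly is the identification of the sheaf-theoretic $H^i_{\{x\}}(\mathcal O_X)$ on the analytic germ with the algebraic local cohomology $H^i_{\mathfrak m}(\mathcal O_{X,x})$; this is a standard comparison (via Cartan's Theorem B on a Stein neighborhood and flatness of completion) but is worth naming explicitly, since the paper's later Proposition 3.2 relies on exactly this kind of translation between the analytic and algebraic pictures.
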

 
 \begin{proposition} Let $(X, x)$ be an isolated singularity of dimension $n$ and  $\pi: (\widetilde{X}, E)\rightarrow (X, x)$ be a resolution of $(X, x)$. Then
$H^i(\widetilde{X}, \mathcal{O})\cong H^i_{\infty}(\widetilde{X}, \mathcal{O}) \cong H^{i+1}_{\{x\}}(X, \mathcal{O}), 1\leq i\leq n-2.$
\end{proposition}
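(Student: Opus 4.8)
My plan is to route all three groups through the single punctured space $\widetilde U:=\widetilde X\setminus E$, which the resolution identifies biholomorphically with $U:=X\setminus\{x\}$. After shrinking to a Stein representative $X$ of the germ, I read the middle term as the cohomology of this ``boundary'' region, i.e. $H^i_\infty(\widetilde X,\mathcal O)=H^i(\widetilde U,\mathcal O_{\widetilde X})\cong H^i(U,\mathcal O_X)$, the last isomorphism holding because $\pi\colon\widetilde U\to U$ is an isomorphism of ringed spaces. The statement then splits into two independent comparisons: one ``downstairs'' producing the right-hand isomorphism and one ``upstairs'' producing the left-hand one.

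For the right-hand isomorphism I would use the long exact sequence of local cohomology of $\mathcal O_X$ with support in $\{x\}$,
$$\cdots\to H^i(X,\mathcal O_X)\to H^i(U,\mathcal O_X)\to H^{i+1}_{\{x\}}(X,\mathcal O_X)\to H^{i+1}(X,\mathcal O_X)\to\cdots.$$
Since $X$ is Stein, Cartan's Theorem~B gives $H^i(X,\mathcal O_X)=0$ for all $i\ge 1$, so the connecting map yields $H^i(U,\mathcal O_X)\cong H^{i+1}_{\{x\}}(X,\mathcal O_X)$ for every $i\ge 1$. This half needs no hypothesis beyond $i\ge 1$.

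For the left-hand isomorphism I would compare $\widetilde X$ with $\widetilde U$ through the local cohomology sequence with support in the exceptional set $E$,
$$\cdots\to H^i_E(\widetilde X,\mathcal O)\to H^i(\widetilde X,\mathcal O)\to H^i(\widetilde U,\mathcal O)\to H^{i+1}_E(\widetilde X,\mathcal O)\to\cdots,$$
so the restriction map is an isomorphism as soon as $H^i_E(\widetilde X,\mathcal O)=H^{i+1}_E(\widetilde X,\mathcal O)=0$. The whole statement therefore reduces to the vanishing $H^i_E(\widetilde X,\mathcal O_{\widetilde X})=0$ for $0\le i\le n-1$; feeding the range $1\le i\le n-2$ into the sequence, so that both $i$ and $i+1$ lie in $\{1,\dots,n-1\}$, reproduces exactly the asserted range and explains the cutoff at $n-2$.

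Establishing this vanishing is the crux and where I expect the real work. The naive codimension bound is useless here, since $E$ may be a divisor and then $\operatorname{codim}(E,\widetilde X)=1$, so depth alone gives nothing beyond $i=0$; the vanishing must instead exploit that $E$ is \emph{compact}, being the fiber $\pi^{-1}(x)$ over a point of a Stein space. I would invoke Serre--Grothendieck duality with supports on the smooth $n$-manifold $\widetilde X$ (the relative/formal duality of Ramis--Ruget--Verdier for $\pi$ proper over a Stein base), which pairs $H^i_E(\widetilde X,\mathcal O_{\widetilde X})$ with the coherent cohomology $H^{n-i}(\widetilde X,\omega_{\widetilde X})$ of the canonical sheaf. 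By the Leray spectral sequence and Theorem~B, $H^{n-i}(\widetilde X,\omega_{\widetilde X})\cong H^0(X,R^{n-i}\pi_*\omega_{\widetilde X})$, and Grauert--Riemenschneider vanishing gives $R^{j}\pi_*\omega_{\widetilde X}=0$ for all $j\ge 1$; hence this group vanishes whenever $n-i\ge 1$, i.e. for $i\le n-1$, and duality forces $H^i_E(\widetilde X,\mathcal O_{\widetilde X})=0$ in that range. The main obstacle is thus packaging this duality-plus-vanishing input correctly; once it is in hand, the two long exact sequences assemble the chain $H^i(\widetilde X,\mathcal O)\cong H^i_\infty(\widetilde X,\mathcal O)\cong H^{i+1}_{\{x\}}(X,\mathcal O)$ for $1\le i\le n-2$ at once.
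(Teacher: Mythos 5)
Your argument is correct in substance, but it reorganizes the proof around a different dual pair of support conditions than the paper uses. The right-hand isomorphism $H^i(X\setminus\{x\},\mathcal O)\cong H^{i+1}_{\{x\}}(X,\mathcal O)$ is obtained exactly as in the paper (local cohomology sequence downstairs plus Theorem B on a Stein representative). For the left-hand isomorphism the paper never introduces $H^i_E$: it runs Laufer's sequence $\cdots\to H^i_c(\widetilde X,\mathcal O)\to H^i(\widetilde X,\mathcal O)\to H^i_\infty(\widetilde X,\mathcal O)\to H^{i+1}_c(\widetilde X,\mathcal O)\to\cdots$, kills the compactly supported groups via Serre duality $H^i_c(\widetilde X,\mathcal O)\cong H^{n-i}(\widetilde X,\mathcal O(K))'$ together with Grauert--Riemenschneider, and so compares $H^i(\widetilde X,\mathcal O)$ directly with $H^i_\infty$. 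You instead run the excision sequence for the pair $(\widetilde X,\widetilde X\setminus E)$ and kill $H^i_E(\widetilde X,\mathcal O)$ by relative/formal duality against $\omega_{\widetilde X}$ plus the same Grauert--Riemenschneider input; this is a standard and valid alternative (it is the usual route to ``rational implies Cohen--Macaulay''), and it has the advantage of comparing $H^i(\widetilde X,\mathcal O)$ directly with the geometrically transparent group $H^i(\widetilde X\setminus E,\mathcal O)\cong H^i(X\setminus\{x\},\mathcal O)$. (Strictly, the dual of $H^i_E(\widetilde X,\mathcal O)$ is the formal cohomology of $\omega$ along $E$, i.e.\ the completion of $(R^{n-i}\pi_*\omega)_x$, rather than $H^{n-i}(\widetilde X,\omega)$ itself; since both vanish for $n-i\ge 1$ by Grauert--Riemenschneider, nothing is lost.)

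The one place you are too quick is the middle term. You ``read'' $H^i_\infty(\widetilde X,\mathcal O)$ as $H^i(\widetilde X\setminus E,\mathcal O)$, but with Laufer's actual definition $H^q_\infty(\widetilde X,\mathcal O)=\varinjlim_r H^q(\widetilde X-\widetilde X_r,\mathcal O)$ over a $1$-convex exhaustion, this identification is a theorem rather than a convention: it is precisely the Andreotti--Grauert step (Th\'eor\`eme 15 of \cite{An-Gr62}) in the paper's proof, and it is what genuinely enforces the cutoff $q\le n-2$ (at $q=n-1$ one only gets injectivity). If you state and cite that comparison your chain closes; as written, you have fully proved $H^i(\widetilde X,\mathcal O)\cong H^{i+1}_{\{x\}}(X,\mathcal O)$ but not yet the asserted identification of either group with $H^i_\infty(\widetilde X,\mathcal O)$.
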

\begin{proof}

Following Laufer \cite{La1}, we consider the sheaf cohomology with support at infinity. The following sequence is exact:

$$0\rightarrow \Gamma(\widetilde{X}, \mathscr{O}) \rightarrow \Gamma_{\infty}(\widetilde{X}, \mathscr{O})\rightarrow H^1_{c} (\widetilde{X}, \mathscr{O})\rightarrow H^1(\widetilde{X}, \mathscr{O}) \rightarrow H^1_{\infty}(\widetilde{X}, \mathscr{O})\rightarrow H^2_{c} (\widetilde{X}, \mathscr{O})\rightarrow \cdots $$

By Serre duality,
$$ H^i_{c} (\widetilde{X}, \mathscr{O})\cong  H^{n-i} (\widetilde{X}, \mathscr{O}(K)).$$
by Grauert-Riemenschneider Vanishing Theorem, we have $$H^{n-i}(\widetilde{X}, \mathscr{O}(K))=0, \, \textup{for}\, \,  i\leq n-1.$$  It follows that 
\begin{align}\label{2.1}
H^i(\widetilde{X}, \mathcal{O})\cong H^i_{\infty}(\widetilde{X}, \mathcal{O}),\, 1\leq i\leq n-2.
\end{align}

On the other hand, we also  have the following exact sequence: 
 
\begin{align*}
0\rightarrow \Gamma(X, \mathscr{O}) \rightarrow \Gamma(X-\{x\}, \mathscr{O})\rightarrow H^1_{\{x\}} (X, \mathscr{O})\rightarrow H^1(X, \mathscr{O}) \rightarrow H^1(X-\{x\}, \mathscr{O})\\ \rightarrow H^2_{\{x\}} (\widetilde{X}, \mathscr{O})\rightarrow \cdots
\end{align*} 

Since $H^i(X, \mathscr{O})=0,\, i\geq 1$. Thus we have 

\begin{align}\label{2.2}
 H^i(X-\{x\}, \mathscr{O}) \cong H^{i+1}_{\{x\}}(X, \mathcal{O}),\, i\geq 1.
 \end{align}

 Take a 1-convex exhaustion function $\phi$ on $\widetilde{X}$ such that $\phi\geq 0$ on $\widetilde{X}$ and $\phi(y)=0$ if and only if $y\in E_i$ where $E_i$ is the irreducible component of $E$. Put $\widetilde{X}_r=\{y\in \widetilde{X}: \phi(y)\leq r\}$. Then by Laufer \cite{La1}, 
$$\varinjlim_{r} H^q(\widetilde{X}-\widetilde{X}_r, \mathcal{O})\cong H^q_{\infty}(\widetilde{X}, \mathcal{O}).$$
 On the other hand, by Andreoti and Grauert (Th\'{e}or\`{e}me 15 of \cite{An-Gr62}), $H^{q}(\widetilde{X}-E, \mathcal{O})$ is isomorphic to $H^{q}(\widetilde{X}-\widetilde{X}_r, \mathcal{O})$ for $q\leq n-2$ and $H^{n-1}(\widetilde{X}-E, \mathcal{O})\rightarrow H^{n-1}(\widetilde{X}-\widetilde{X}_r, \mathcal{O})$ is injective. Thus we have
\begin{align}\label{2.3}
 H^i_{\infty}(\widetilde{X}, \mathcal{O})\cong H^{i}(\widetilde{X}-E, \mathcal{O})\cong  H^i(X-\{x\}, \mathscr{O}),\, 1 \leq i\leq n-2.
 \end{align}
 
Combining with (\ref{2.1}),(\ref{2.2}) and (\ref{2.3}) we have 
$$H^i(\widetilde{X}, \mathcal{O})\cong H^i_{\infty}(\widetilde{X}, \mathcal{O}) \cong H^{i+1}_{\{x\}}(X, \mathcal{O}), 1\leq i\leq n-2.$$

This completes the proof.
\end{proof}

\begin{cor}
(1) $(X, x)$ is Cohen-Macaulay $\Rightarrow  H^i(\tilde{X}, \mathcal{O})=0, 1\leq i\leq n-2$. \\
(2) $(X, x)$ is normal and $ H^i(\tilde{X}, \mathcal{O})=0, 1\leq i\leq n-2 \Rightarrow (X, x)$ is  Cohen-Macaulay.
\end{cor}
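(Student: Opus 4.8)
The plan is to read this corollary directly off the two propositions just established, treating it as a purely formal bookkeeping exercise on the local cohomology groups $H^j_{\{x\}}(X,\mathcal{O})$. The key observation is that the isomorphism of the preceding proposition, $H^i(\widetilde{X},\mathcal{O}) \cong H^{i+1}_{\{x\}}(X,\mathcal{O})$ valid for $1 \le i \le n-2$, converts vanishing of the sheaf cohomology on the resolution into vanishing of the local cohomology groups precisely in the range $2 \le j \le n-1$ (setting $j = i+1$). I would keep this index dictionary in front of me throughout, since everything else reduces to matching it against the criteria of the first proposition.

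For part (1), I would start from the Cohen-Macaulay hypothesis and invoke the first proposition (Corollary 3.10 of \cite{Ha}), which gives $H^1_{\{x\}}(\mathcal{O}_X) = \cdots = H^{n-1}_{\{x\}}(\mathcal{O}_X) = 0$. In particular $H^{i+1}_{\{x\}}(X,\mathcal{O}) = 0$ for every $i$ with $1 \le i \le n-2$, and feeding this through the isomorphism yields $H^i(\widetilde{X},\mathcal{O}) = 0$ in the same range, which is exactly the assertion. Note that this direction uses only the implication from Cohen-Macaulay to vanishing, not its converse.

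For part (2), I would run the argument in reverse and then patch the one index the resolution cannot see. The hypothesis $H^i(\widetilde{X},\mathcal{O}) = 0$ for $1 \le i \le n-2$, combined with the isomorphism, forces $H^j_{\{x\}}(X,\mathcal{O}) = 0$ for $2 \le j \le n-1$. This is almost the full Cohen-Macaulay criterion, but it says nothing about $j = 1$. The normality hypothesis supplies exactly this missing piece: by the normality criterion of the first proposition, $H^1_{\{x\}}(\mathcal{O}_X) = 0$. Assembling the two facts gives $H^j_{\{x\}}(X,\mathcal{O}) = 0$ for all $1 \le j \le n-1$, and the converse implication in the first proposition then certifies that $(X,x)$ is Cohen-Macaulay.

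The argument is entirely formal, so I do not anticipate a genuine obstacle; the only points requiring care are the off-by-one shift between the two propositions' index ranges and the recognition that part (2) genuinely needs the normality hypothesis, since without it the bottom group $H^1_{\{x\}}$ is uncontrolled and the conclusion can fail. As a consistency check I would note the surface case $n=2$, where the range $1\le i\le n-2$ is empty: there the resolution condition is vacuous, and the corollary correctly degenerates to the statement that normal is equivalent to Cohen-Macaulay, in agreement with the two criteria coinciding when $n=2$.
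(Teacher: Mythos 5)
Your argument is correct and is precisely the intended derivation: the paper states this corollary without proof as an immediate consequence of combining Proposition 3.3 (the local-cohomology criteria for Cohen--Macaulay and normal) with the isomorphism $H^i(\widetilde{X},\mathcal{O})\cong H^{i+1}_{\{x\}}(X,\mathcal{O})$ of Proposition 3.4, exactly as you do, with the index shift and the role of normality in supplying the missing $H^1_{\{x\}}$ vanishing handled correctly.
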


\begin{definition}
A normal variety $X$ is called Gorenstein if it is Cohen-Macaulay and the sheaf $\omega_X:=\mathcal{O}(K_X)$ is locally free.
\end{definition}

\begin{definition}
A Gorenstein point $x \in X$ of an $n$-dimensional variety $X$ is rational (respectively minimally elliptic) if for a resolution $f : Y\rightarrow X$ we have
$f_{*}\omega_ Y= \omega_ X$(respectively $f_{*}\omega_ Y= m_x\omega_ X$, where $m_x$ is the ideal of $x$). (This is equivalent via duality to the cohomological assertion $R^{n-1}f_{*}\mathcal{O}_Y	=
0$ (respectively, is a 1-dimensional $\mathbb{C}$-vector space at $x$). 
\end{definition}

It is convenient to make intrinsic (and generalize slightly) the notion of “a general hyperplane section through $x$”:

\begin{definition}
Let $(\mathcal{O}_X, x, m_x) $ be the local ring of a point $x \in X$ of a $\mathbb{C}$- scheme, and let $V \subset m_x$ be a finite-dimensional $\mathbb{C}$-vector space which maps onto $m_x /m^2_x$ (equivalently, by Nakayama’s lemma, $V$ generates the $\mathcal{O}_{X, x}$-ideal $m_x$ ); by a general hyperplane section through $x$ is mean the sub-scheme $H \subset X_0$ defined in a suitable neighborhood $X_0$ of $x$ by the ideal  $\mathcal{O}_{X, v}$, where $v \in V$ is a sufficiently general element (that is, $v$ is a $\mathbb{C}$-point of a certain dense Zariski open $U\subset V$).
\end{definition}

\begin{theorem}(\cite{Mi}, Theorem 2.6)\label{Mth}
 If $x \in X$ is a rational Gorenstein point (dim $X=n\geq 3$. Then for a general hyperplane section $S$ through $x$, $x\in S$ is minimally elliptic or rational Gorenstein.
\end{theorem}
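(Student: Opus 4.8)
The plan is to reformulate both conditions through the dualizing sheaf and to show that the ``deviation from rationality'' can increase by at most one under a general hyperplane section. For a normal Gorenstein point $y\in Y_0$ with resolution $g:W\to Y_0$, set $\delta(Y_0,y)=\dim_{\mathbb C}\big(\omega_{Y_0}/g_*\omega_W\big)_y$; by Grothendieck--Serre duality this equals $\dim_{\mathbb C}\big(R^{\dim Y_0-1}g_*\mathcal O_W\big)_y$, so the definitions recalled above say that $y$ is rational Gorenstein iff $\delta=0$ and minimally elliptic iff $\delta=1$. Thus the theorem is equivalent to the single inequality $\delta(S,x)\le 1$. First I would check that $S$ is again normal Gorenstein: it is Gorenstein as a Cartier divisor in the Gorenstein scheme $X$, it is smooth away from $x$ by Bertini in characteristic zero, and since $\dim S=n-1\ge 2$ its singular locus $\{x\}$ has codimension $\ge 2$, so $S$ is regular in codimension one and Cohen--Macaulay, hence normal by Serre's criterion. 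This is precisely where the hypothesis $n\ge 3$ is used.

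Next I would fix the geometry of the resolution. Choose $f:Y\to X$ so that $m_x\mathcal O_Y$ is invertible, say $m_x\mathcal O_Y=\mathcal O_Y(-Z)$ for an effective exceptional divisor $Z$, and so that for a sufficiently general $v\in V$ the hypersurface $S=\{v=0\}$ has smooth strict transform $\tilde S$ meeting the exceptional locus transversally, with $f|_{\tilde S}:\tilde S\to S$ a resolution. For such general $v$ the pullback $f^{*}v$ attains the minimal vanishing order along each exceptional component, so that $\operatorname{div}(f^{*}v)=\tilde S+Z$, i.e. $\mathcal O_Y(\tilde S)\cong f^{*}\mathcal O_X(S)\otimes\mathcal O_Y(-Z)$, which near the fibre over $x$ is just $\mathcal O_Y(-Z)=m_x\mathcal O_Y$ after trivialising the invertible sheaf $\mathcal O_X(S)$. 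Securing this transversality-and-vanishing statement on the fixed resolution $Y$ is the main technical obstacle; it is a Bertini/generic-section argument, and everything afterwards is a formal diagram chase.

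The computation then runs as follows. From the adjunction sequence $0\to\omega_Y\to\omega_Y(\tilde S)\to\omega_{\tilde S}\to 0$ together with the Grauert--Riemenschneider vanishing $R^{i}f_*\omega_Y=0$ for $i>0$ (the same vanishing used in the proof of the Proposition above), pushing forward yields $0\to f_*\omega_Y\to f_*\omega_Y(\tilde S)\to (f|_{\tilde S})_*\omega_{\tilde S}\to 0$, whence $(f|_{\tilde S})_*\omega_{\tilde S}=f_*\omega_Y(\tilde S)/f_*\omega_Y$. On the base, the adjunction sequence $0\to\omega_X\to\omega_X(S)\to\omega_S\to 0$ identifies $\omega_S=\omega_X(S)/\omega_X$. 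Using the rationality of $X$, namely $f_*\omega_Y=\omega_X$, and the inclusions $\omega_X=f_*\omega_Y\subseteq f_*\omega_Y(\tilde S)\subseteq\omega_X(S)$ (the last via the projection formula applied to $\omega_Y(\tilde S)\subseteq f^{*}\mathcal O_X(S)\otimes\omega_Y$), I can quotient the two sequences compatibly and obtain
\[
\delta(S,x)\;=\;\dim_{\mathbb C}\big(\omega_S/(f|_{\tilde S})_*\omega_{\tilde S}\big)_x\;=\;\dim_{\mathbb C}\big(\omega_X(S)/f_*\omega_Y(\tilde S)\big)_x .
\]

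Finally I would trivialise $\mathcal O_X(S)$ near $x$ to rewrite the last group as $\omega_X/f_*\big(m_x\mathcal O_Y\cdot\omega_Y\big)$, and note the chain of inclusions $m_x\omega_X\subseteq f_*(m_x\mathcal O_Y\cdot\omega_Y)\subseteq\omega_X$, the left one because $m_x$ is pulled back from $X$ and the right one because $m_x\mathcal O_Y\cdot\omega_Y\subseteq\omega_Y$. Consequently
\[
\delta(S,x)\;=\;\dim_{\mathbb C}\frac{\omega_X}{f_*(m_x\mathcal O_Y\cdot\omega_Y)}\;\le\;\dim_{\mathbb C}\frac{\omega_X}{m_x\omega_X}\;=\;1,
\]
the last equality holding because $X$ is Gorenstein, so $\omega_X$ is invertible and $\omega_X/m_x\omega_X\cong\mathcal O_X/m_x\cong\mathbb C$. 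Hence $\delta(S,x)\in\{0,1\}$: the value $0$ means $f_*(m_x\mathcal O_Y\cdot\omega_Y)=\omega_X$, so $x\in S$ is rational Gorenstein, while the value $1$ forces $f_*(m_x\mathcal O_Y\cdot\omega_Y)=m_x\omega_X$, so $x\in S$ is minimally elliptic, which is exactly the claimed dichotomy.
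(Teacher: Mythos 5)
Your proof is correct and follows essentially the same route as the paper's: both pass to a resolution dominating the blow-up of $m_x$ so that $f^{*}S=\tilde S+E$ with $m_x\mathcal O_Y=\mathcal O_Y(-E)$, apply Bertini and adjunction, and derive the dichotomy from the chain $m_x\omega_S\subseteq\varphi_*\omega_{\tilde S}\subseteq\omega_S$ together with $\dim\omega_S/m_x\omega_S=1$. The only difference is bookkeeping: the paper reads off that chain from the pole order of the discrepancy $(\Delta-E)|_T$, whereas you obtain it by pushing forward the adjunction sequences and invoking Grauert--Riemenschneider vanishing, which is an equivalent computation.
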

\begin{proof}
Suppose that $S$ runs through any linear system of sections $x\in S \subset X$ whose equations generate the maximal ideal $m_x$ of $\mathcal{O}_{X, x}$. Then a general element $S$ of this linear system is normal.

Let $f: Y\rightarrow X$ be any resolution of $X$ which dominates the blow-up of the maximal ideal $m_x$; by definition of the blow-up, the scheme-theoretic fiber over $x$ is an effective divisor $E$ such that $m_x\mathcal{O}_{Y}=\mathcal{O}_{Y}(-E).$ Hence $f^*S=T+E$, where $T$ runs through a free linear system on $Y$. By Bertini's theorem, $\phi=f\mid_T: T\rightarrow S$ is a resolution of $S$. Now we use the adjunction formula to compare $K_T$ and $\varphi^*K_S$.

In the diagram (Figure 1),
\begin{figure}[h!]
\vspace*{-12pt}
\begin{equation*}
\xymatrix{
Y\ar@{->}[d]^{f}
&\supset&
T \ar@{->}[d]^{\varphi}+E
\\
X&\supset& S 
}
\end{equation*}
\caption{}\label{fig1} 
\end{figure}
we have  $$K_Y=f^*K_X+\Delta,\, \textup{with}\, \Delta\geq 0$$ and 
$$T=f^*S-E,$$
so that 
$$K_Y+T=f^*(K_X+S)+\Delta-E$$ and 
$$K_T=(K_Y+T)\mid_T=\varphi^*K_S+(\Delta-E)\mid_T.$$
This just means that any $s\in \omega_s$ has at worst $(\Delta-E)\mid_T$ as pole on $T$. On the other hand, since the maximal ideal $m_{S, x}\subset \mathcal{O}_{S, x}$ is the restriction to $S$ of the maximal ideal $m_{X, x}\subset \mathcal{O}_{X}$, it follows that every element of $m_{S, x}$ vanishes along $E\cap T$. Hence every element of $m_{S, x} \omega_S$ is regular on $T$, that is 
$$ m_x\omega_S\subset \varphi_{*}w_T\subset \omega_S.$$ 
Thus $m_x\omega_S=\varphi_{*}w_T$ implies $x\in S$ is minimally elliptic and $\omega_S=\varphi_{*}w_T$ implies $x\in S$ is rational.
\end{proof}

\begin{theorem}\cite{La2} \label{Lth}
Let $x$ be a minimally elliptic singularity.  Let $\pi\colon M\to V$ be a resolution of a Stein neighborhood  $V$ of $x$ with $x$ as its only singular point.  Let $m$ be the maximal ideal in $\mathcal{O}_{V,x}$.  Let $Z$ be the fundamental cycle on $E=\pi^{-1}(x)$.
\begin{itemize}
\item[(1)]
If $Z^2\le -2$, then $\mathcal{O}(-Z)=m\mathcal{O}$ on $E$.
\item[(2)]
If $Z^2=-1$, and $\pi$ is the minimal resolution or the minimal resolution with non-singular $E_i$ and normal crossings, $\mathcal{O}(-Z)\big/m\mathcal{O}$ is the structure sheaf for an embedded point.
\item[(3)]
If $Z^2=-1$ or $-2$, then $x$ is a double point.
\item[(4)]
If $Z^2=-3$, then for all integers $n\ge 1$, $m^n\approx H^0\bigl(E,\mathcal{O}(-nZ)\bigr)$ and $\dim m^n/m^{n+1}=-nZ^2$.
\item[(5)]
If $-3\le Z^2\le -1$, then $x$ is a hypersurface singularity.
\item[(6)]
If $Z^2=-4$, then $x$ is a complete intersection and in fact a tangential complete intersection.
\item[(7)]
If $Z^2\le -5$, then $x$ is not a complete intersection.
\end{itemize}
\end{theorem}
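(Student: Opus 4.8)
The plan is to reduce the whole statement to two numerical invariants of the germ $x$, namely its multiplicity and its embedding dimension, to express both through $Z^2$ by means of the fundamental cycle, and finally to feed these numbers into classical structure theorems for Cohen--Macaulay and Gorenstein germs of small codimension. The computational heart is the Hilbert function $\dim m^n/m^{n+1}$. Throughout I use that a minimally elliptic singularity is a normal Gorenstein surface germ with geometric genus $p_g=1$ whose fundamental cycle has arithmetic genus one, i.e. $\chi(\mathcal O_Z)=0$; these are the cohomological features that distinguish minimal ellipticity from mere rationality.

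The first and most delicate step is part (1). The inclusion $m\mathcal O\subset \mathcal O(-Z)$ is automatic from the definition of the fundamental cycle, while the reverse inclusion, valid precisely when $Z^2\le -2$, asserts that $\mathcal O(-Z)$ is globally generated by functions coming from $m$; I would deduce it from the minimal ellipticity of $x$ together with the numerical inequality $Z^2\le -2$, which make $|{-}Z|$ base-point free on $E$. Granting this, I would run the exact sequence
\begin{equation*}
0\to \mathcal O(-(n+1)Z)\to \mathcal O(-nZ)\to \mathcal O_Z\otimes \mathcal O(-nZ)\to 0
\end{equation*}
and apply Riemann--Roch on the one-dimensional scheme $Z$: here $\deg_Z\mathcal O(-nZ)=-nZ^2$ and $\chi(\mathcal O_Z)=0$ give $\chi\bigl(\mathcal O_Z\otimes\mathcal O(-nZ)\bigr)=-nZ^2$. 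The crucial vanishing $H^1\bigl(Z,\mathcal O_Z\otimes\mathcal O(-nZ)\bigr)=0$ for $n\ge 1$ comes from Serre duality on $Z$ together with the hallmark of minimal ellipticity that the dualizing sheaf of the fundamental cycle is trivial, $\omega_Z\cong\mathcal O_Z$, so that the dual group is $H^0\bigl(Z,\mathcal O_Z(nZ)\bigr)=0$ by negativity of $Z$. When $Z^2\le -3$ the sequence therefore collapses to $m^n\cong H^0\bigl(E,\mathcal O(-nZ)\bigr)$ and $\dim m^n/m^{n+1}=-nZ^2$, which is exactly part (4); the extra contribution that survives when $Z^2=-1,-2$ (it measures $p_g=1$) is handled directly and yields the embedded-point description of part (2) and the double-point conclusion of part (3).

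From this Hilbert function the two invariants drop out: the leading term gives multiplicity $\max(2,-Z^2)$ and the value at $n=1$ gives embedding dimension $e=\dim m/m^2=\max(3,-Z^2)$. Consequently, when $-3\le Z^2\le -1$ we have $e=3$, so $x$ is a surface germ in $\mathbb C^3$ and hence a hypersurface, which is part (5) (with multiplicity $2,2,3$ according to $Z^2=-1,-2,-3$, matching parts (3) and (4)).

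The remaining cases turn on a single classical inequality. Reduce $R=\mathcal O_{X,x}$ by a general system of parameters $\ell_1,\ell_2$: since $R$ is Cohen--Macaulay the quotient is Artinian of length equal to the multiplicity $-Z^2$, and after eliminating two variables it becomes $k[[x_1,\dots,x_{e-2}]]/(\bar f_1,\dots,\bar f_{e-2})$ with all $\bar f_i\in \bar m^2$, since $e=\dim m/m^2$ forces the defining ideal into $m^2$. The intersection-multiplicity bound then gives length $\ge \prod_i\operatorname{ord}(\bar f_i)\ge 2^{\,e-2}$. Comparing with length $=-Z^2=e$: when $Z^2=-4$ we are at equality $4=2^2$, so the leading forms have order two and no common tangent, i.e. the germ is a \emph{tangential} complete intersection; combined with the fact that a codimension-two Cohen--Macaulay Gorenstein germ is automatically a complete intersection, this proves part (6). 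When $Z^2\le -5$ we have $e\ge 5$ and $2^{e-2}>e$, so the multiplicity $e$ cannot be realised by any complete intersection, which is part (7). The main obstacle I anticipate is the second paragraph, namely pinning down $m\mathcal O=\mathcal O(-Z)$ and extracting the clean formula $\dim m^n/m^{n+1}=-nZ^2$: this is where minimal ellipticity (as opposed to rationality) is genuinely used, and where the interplay of Serre duality with the non-reduced Riemann--Roch on $Z$ must be controlled carefully, especially across the borderline values $Z^2=-1,-2$.
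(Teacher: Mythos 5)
The paper does not prove this statement at all: it is quoted verbatim from Laufer \cite{La2} and used as a black box (only item (6), via Theorem \ref{t1.2}, actually enters the paper's arguments). So there is no in-paper proof to compare against; what you have written is, in effect, a reconstruction of Laufer's own proof, and as an outline it tracks his strategy faithfully: establish $m\mathcal{O}=\mathcal{O}(-Z)$, compute the Hilbert function $\dim m^n/m^{n+1}=-nZ^2$ from the sequence $0\to\mathcal{O}(-(n+1)Z)\to\mathcal{O}(-nZ)\to\mathcal{O}_Z\otimes\mathcal{O}(-nZ)\to 0$ using $\chi(\mathcal{O}_Z)=0$ and $\omega_Z\cong\mathcal{O}_Z$, read off multiplicity $\max(2,-Z^2)$ and embedding dimension $\max(3,-Z^2)$, and finish (5)--(7) with the codimension-two Gorenstein theorem and the bound $e(R)\ge\prod_i\operatorname{ord}(\bar f_i)\ge 2^{e-2}$ versus $e(R)=e$. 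The numerics in (6) and (7) are exactly right: $2^{e-2}=e$ at $e=4$ forces both leading forms to have order two and to cut out the tangent cone (hence ``tangential''), and $2^{e-2}>e$ for $e\ge 5$ kills the complete intersection possibility.

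The genuine gaps are the two you flag yourself, and they are not small. First, ``minimal ellipticity plus $Z^2\le-2$ makes $|-Z|$ base-point free'' is the theorem, not an observation: Laufer proves it by running a computation sequence for $Z$ and establishing the vanishing $H^1(M,\mathcal{O}(-Z-D))=0$ for the relevant partial sums $D$, and the same machinery is what shows $H^0(Z,\mathcal{O}_Z(nZ))=0$ (a line bundle of non-positive degree on every component of the connected, non-reduced curve $Z$ with negative total degree has no sections --- this needs the component-by-component argument, not just ``negativity of $Z$''). Second, $\omega_Z\cong\mathcal{O}_Z$ holds only on the \emph{minimal} resolution (where $K=-Z$); since $Z$ and $Z^2$ are resolution-dependent, you must fix the minimal resolution before running Serre duality, and then argue separately for the resolutions appearing in item (2). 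Finally, the borderline cases $Z^2=-1,-2$ are not ``handled directly'' by the same sequence --- there $\mathcal{O}(-Z)\ne m\mathcal{O}$ (that is the content of (2)) and the surjectivity $m\twoheadrightarrow H^0(\mathcal{O}_Z\otimes\mathcal{O}(-Z))$ fails, so the double-point conclusion in (3) needs its own argument. None of this invalidates the plan, but these are precisely the places where the proof lives.
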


\subsection{Deformation of singularities}

Let $(X_0, x_0)$ be an isolated singularity with dimension $n$, a deformation of $(X_0, x_0)$ will be simply a realization of $(X_0, x_0)$ as the fiber of a map-germ between complex manifolds whose dimensions differ by $n$. To be precise, it consists of holomorphic map-germ $f:(X, x)\rightarrow (S, o)$ between complex manifold germs with dim$(X, x)$-dim$(S, o)=n$ and an isomorphism $\iota:$ of $(X_0, x_0)$ onto the fiber $(X_{o}, x)$ of $f$. A morphism from a deformation $(\iota', f')$ to another $(\iota, f)$ is a pair of map-germs $(\tilde{g}, g)$ such that the diagram
  \begin{figure}[h!]
\vspace*{-10pt}
\begin{equation*}
\xymatrix{
(X', x')\ar@{->}[d]^{f'}
&\stackrel{\tilde{g}}{\longrightarrow}&
(X, x) \ar@{->}[d]^{f}
\\
(S', o')&\stackrel{g}{\longrightarrow}& (S, o)
}
\end{equation*}
\end{figure}
\\
is Cartesian and $\tilde{g}\circ\iota'=\iota$. We say that a deformation $(\iota, f)$ of $(X_0, x_0)$ is versal if for any deformation $(\iota', f')$ of $(X_0, x_0)$ there exists a morphism $(\tilde{g}, g)$ from  $(\iota', f')$ to $(\iota, f)$. Notice that we do not require this morphism to be unique in any sense. If, however, the derivative of $g$ in $o'$, $\partial g(o'): T_{o'}(S')\rightarrow T_o(S)$ is unique, then we say that $(\iota, f)$  is miniversal.

\begin{proposition}(\cite{AGLV} (2.10))
Let  $f:(\mathbb{C}^{n+k},0)\rightarrow (\mathbb{C}^k,0)$ define an 
icis at the origin and has dimension $n$. A miniversal deformation of  $f=0$ can be taken in the form 
\begin{equation}
F(z,\lambda)=f(z)+\lambda_1e_1(z)+\ldots+\lambda_{\tau}e_\tau(z),
\end{equation}
where $e_i\in{\cal O}_{n+k}^k$ are the representative of a basis of the linear space:
\begin{equation}
T_f^1={\cal O}_{n+k}^k/\{I{\cal O}_{f^{n+k}}^k+{\cal O}_{n+k}\langle \partial f/\partial z_1,\ldots,\partial f/ \partial z_{n+k}\rangle\}.
\end{equation}
\end{proposition}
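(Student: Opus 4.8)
The plan is to identify $T^1_f$ as the space of infinitesimal deformations of the complete intersection $X_0 = f^{-1}(0)$, to build the candidate family from a basis of this space, and then to verify versality by exploiting the fact that the complete intersection hypothesis forces the deformation problem to be unobstructed. Write $\mathcal{O} = \mathcal{O}_{n+k}$ for the local ring at the origin and $I = (f_1,\ldots,f_k)$. Because $X_0$ is a complete intersection, the conormal module $I/I^2$ is free of rank $k$ over $\mathcal{O}_{X_0} = \mathcal{O}/I$, generated by the classes of $f_1,\ldots,f_k$; hence the normal module $\mathrm{Hom}_{\mathcal{O}_{X_0}}(I/I^2,\mathcal{O}_{X_0})$ is isomorphic to $\mathcal{O}_{X_0}^k$. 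First-order deformations of $X_0$ over $\mathbb{C}[\varepsilon]/(\varepsilon^2)$ are classified by this normal module, while the deformations induced by ambient coordinate changes are exactly the image of the Jacobian assignment $\theta \mapsto (\theta f_1,\ldots,\theta f_k)$ applied to the vector fields $\partial/\partial z_1,\ldots,\partial/\partial z_{n+k}$. Taking the quotient and lifting to $\mathcal{O}^k$ recovers precisely
$$T^1_f = \mathcal{O}^k \big/ \bigl( I\,\mathcal{O}^k + \mathcal{O}\langle \partial f/\partial z_1,\ldots,\partial f/\partial z_{n+k}\rangle \bigr),$$
which is the $\mathcal{K}_e$-normal space of the map germ $f$. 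Since $(X_0,0)$ is isolated, $T^1_f$ is a finite-dimensional $\mathbb{C}$-vector space; call its dimension $\tau$.

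Next I would form the candidate family $F(z,\lambda) = f(z) + \lambda_1 e_1(z) + \cdots + \lambda_\tau e_\tau(z)$, where $e_1,\ldots,e_\tau \in \mathcal{O}^k$ represent a $\mathbb{C}$-basis of $T^1_f$, and compute its Kodaira--Spencer (characteristic) map. By construction this map sends the coordinate vector field $\partial/\partial\lambda_i$ on the base to the class $[e_i] \in T^1_f$. As the $[e_i]$ form a basis, the Kodaira--Spencer map is a $\mathbb{C}$-linear isomorphism from the Zariski tangent space of the base at the origin onto $T^1_f$. This bijectivity is exactly what upgrades versality to miniversality: in the notation of the definition above, it forces the derivative $\partial g(o')$ of any comparison morphism to be uniquely determined, which is the miniversality requirement.

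It then remains to prove that $F$ is versal, and this is where the complete intersection hypothesis does the essential work. I would invoke the infinitesimal criterion: a deformation whose Kodaira--Spencer map surjects onto $T^1_f$ is versal provided the successive obstructions to lifting it order by order all vanish, and these obstructions live in the second cotangent cohomology $T^2_f$. For a complete intersection $T^2_f = 0$, because $f_1,\ldots,f_k$ form a regular sequence and hence the module of relations (syzygies) among them is generated by the trivial Koszul relations; thus the deformation functor is unobstructed and the base is smooth of dimension $\tau$. Running the order-by-order lifting, where each step is solvable precisely because there is no obstruction, produces a formal versal family, and an Artin approximation (or finite determinacy) argument replaces it by the convergent family $F$. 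Alternatively one can cite the Grauert--Tjurina existence theorem for the semiuniversal deformation of an isolated singularity and then merely verify that its base and Kodaira--Spencer data match those of $F$.

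The main obstacle is the versality step of the last paragraph rather than the bookkeeping of the first two. Concretely, the crux is the vanishing $T^2_f = 0$: one must confirm that the syzygies among the $f_i$ are generated by the Koszul relations, so that no genuine obstruction can appear, and then carry out the inductive lifting together with the convergence argument. The identification of $T^1_f$ and the Kodaira--Spencer computation are essentially formal once the freeness of $I/I^2$ is in hand; the substantive analytic content lies in the unobstructedness and in the passage from a formal to a convergent miniversal deformation.
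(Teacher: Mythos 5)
The paper itself supplies no proof of this proposition: it is quoted from \cite{AGLV} (2.10) as a known result, so there is no in-paper argument to compare against. Your sketch is a correct outline of the standard proof of that result, and it is essentially the approach of the cited reference: the identification of $T^1_f$ through the freeness of $I/I^2$ for a complete intersection, the bijectivity of the Kodaira--Spencer map as the source of miniversality, and the unobstructedness $T^2_f=0$ (syzygies of a regular sequence are Koszul) followed by formal lifting and Artin approximation are exactly the ingredients one finds there. No gap; the only caveat is that the substantive steps you flag as the crux (vanishing of $T^2_f$ and convergence) are themselves nontrivial theorems you would be citing rather than proving.
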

Here $\tau$ is the Tyurina number and is equal to the Milnor number $\mu$ if $f$ is weighted homogeneous.

\section {Homogeneous isolated complete intersection singularity}

In this section we shall prove the following conjecture for homogeneous isolated complete intersection singularity (ICIS) in Theorem \ref{th1} and three dimensional isolated complete intersection singularity in Theorem \ref{t1.2}. We shall also give a  classification of three-dimensional rational homogeneous  isolated complete intersection singularities in Theorems \ref{th2} and \ref{th3}. 

\begin{Conjecture}
Let $p$ be the dimension of rational isolated  complete intersection  singularity with $\mathbb{C}^*$-action. The the embedding dimension of the singularity  is at most $2p-1$. 
\end{Conjecture}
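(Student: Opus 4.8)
The plan is to induct on the dimension $p$ by repeatedly cutting with a general hyperplane section through the singular point, using Theorem \ref{Mth} to control the type of singularity as the dimension drops and Theorem \ref{Lth} to settle the two-dimensional case. Two elementary observations organize the whole argument. First, if $S=X\cap\{v=0\}$ is a general hyperplane section through $x$ in the sense of the Definition preceding Theorem \ref{Mth}, then the image of $v$ in $m_x/m_x^2$ is nonzero, so the embedding dimension drops by exactly one, $\operatorname{embdim}(S)=\operatorname{embdim}(X)-1$. Second, since $X$ is a complete intersection, $S$ is again a complete intersection of one lower dimension; consequently the codimension $\operatorname{codim}(X)=\operatorname{embdim}(X)-\dim X$ is unchanged by slicing. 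Because of this, the desired bound $\operatorname{embdim}(X)\le 2p-1$ is equivalent to the assertion $\operatorname{codim}(X)\le p-1$, and it is this invariant that the induction will track.

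Being a complete intersection, $X$ is Gorenstein, so rationality makes it rational Gorenstein and Theorem \ref{Mth} is available in every dimension $\ge 3$: each general hyperplane section is either rational Gorenstein or minimally elliptic. I would run a double induction, proving simultaneously that a rational Gorenstein complete intersection of dimension $p$ satisfies $\operatorname{codim}\le p-1$ (equivalently $\operatorname{embdim}\le 2p-1$) and that a minimally elliptic complete intersection of dimension $q$ satisfies $\operatorname{codim}\le q$ (equivalently $\operatorname{embdim}\le 2q$). The base case $p=q=2$ is exactly where the cited theorems do the work: a two-dimensional rational Gorenstein complete intersection is a rational double point, hence a hypersurface of embedding dimension $3$; and a two-dimensional minimally elliptic complete intersection has, by Theorem \ref{Lth}, fundamental cycle with $Z^2\ge -4$ (parts (6)--(7)), so it is either a hypersurface (part (5), $-3\le Z^2\le -1$) of embedding dimension $3$ or a tangential complete intersection (part (6), $Z^2=-4$) of embedding dimension $4$. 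In both cases the codimension is at most the dimension, as required.

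The inductive step for the rational case is then immediate: for $p\ge 3$ pick one general hyperplane section $S$, which by Theorem \ref{Mth} is rational Gorenstein or minimally elliptic of dimension $p-1$. If it is rational Gorenstein the rational hypothesis gives $\operatorname{codim}(S)\le (p-1)-1$, and if it is minimally elliptic the elliptic hypothesis gives $\operatorname{codim}(S)\le p-1$; since codimension is preserved, $\operatorname{codim}(X)=\operatorname{codim}(S)\le p-1$ in either case, whence $\operatorname{embdim}(X)\le 2p-1$. In particular, the three-dimensional statement needs no induction beyond the base case: a single slice lands in dimension two, is rational Gorenstein or minimally elliptic, hence has codimension at most $2$, and therefore $\operatorname{embdim}(X)\le 3+2=5$, which is the assertion that a three-dimensional rational complete intersection has embedding dimension at most five.

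The real difficulty, and the step I expect to be the main obstacle, is the inductive step for the minimally elliptic case in dimension $q\ge 3$. Once a singularity is merely minimally elliptic, Theorem \ref{Mth} no longer governs its slices, and the minimally elliptic property is genuinely unstable under hyperplane sections: the cone over a quartic $K3$ surface is a three-dimensional minimally elliptic complete intersection whose general hyperplane section is the cone over a plane quartic curve, a surface singularity of geometric genus $4$, which is far from minimally elliptic. One therefore cannot propagate the minimally elliptic condition down to the surface level and simply read off the codimension from Theorem \ref{Lth}. The route I would pursue is to bound codimension through an invariant that descends unconditionally under general hyperplane sections, the multiplicity: at a minimally embedded point the defining equations lie in $m_x^2$, which links the codimension to the multiplicity, and the latter is both preserved by general slicing and bounded above by the minimally elliptic (respectively rational) condition. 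Pinning down this multiplicity estimate --- or, alternatively, proving directly that a general hyperplane section of a minimally elliptic singularity cannot raise the codimension above the surface bound of Theorem \ref{Lth} --- is where the substantive work of the proof lies.
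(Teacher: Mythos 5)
First, note that the statement you set out to prove is stated in the paper only as a conjecture: the authors establish it for homogeneous ICIS of arbitrary dimension (Theorem \ref{th1}) and for $p=3$ (Theorem \ref{t1.2}), and leave the general weighted homogeneous case in higher dimension open. Your treatment of the base case and of $p=3$ is correct and is essentially word for word the paper's proof of Theorem \ref{t1.2}: one general hyperplane section, Theorem \ref{Mth} to see it is rational Gorenstein (hence a rational double point, embedding dimension $3$) or minimally elliptic (hence, by parts (5)--(7) of Theorem \ref{Lth}, embedding dimension at most $4$), giving $\operatorname{embdim}(V)\le 5$. Your bookkeeping that a general hyperplane section drops the embedding dimension by exactly one and preserves the codimension of a complete intersection is also correct.

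The gap is exactly where you place it, and it is fatal to the induction for $p\ge 4$: once the slice is merely minimally elliptic of dimension $\ge 3$, neither Theorem \ref{Mth} nor Theorem \ref{Lth} says anything about its further hyperplane sections, and, as your cone-over-a-quartic-$K3$ example shows, the minimally elliptic property is destroyed by slicing, so the auxiliary statement ``minimally elliptic complete intersections of dimension $q$ have codimension at most $q$'' cannot be fed back into the induction. The multiplicity route you sketch is not carried out and is not obviously available from the results quoted in the paper; this is presumably why the authors leave the general statement as a conjecture. Note also that their proof of the homogeneous case goes by an entirely different mechanism which you do not consider: Proposition \ref{prop1} gives $p_g=\sum_{\underline{k}}\prod_{i=1}^{r}\binom{d_i}{k_i+1}$ for a homogeneous ICIS, rationality forces every product to vanish, and since each $d_i\ge 2$ the choice $k_1=\cdots=k_n=1$ would give a nonzero product if $r\ge n$; this argument is purely numerical, makes no use of hyperplane sections, and is not extended in the paper to the weighted homogeneous case beyond $n=3$, $r=2$ (where Theorem \ref{Ma} and Lemma \ref{l2} play the analogous role). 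So your proposal should be read as a correct reproduction of the paper's proof of the $p=3$ case together with an honest but unresolved programme for the general conjecture, not as a proof of the statement as quoted.
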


\begin{definition}
Let $(V, 0)\subset (\mathbb{C}^N, 0)$ be the analytic germ of an $n$-dimensional complex homogeneous  isolated complete intersection singularity. Let $\pi:(M, E)\rightarrow (V,0)$ be a resolution of singularity of dimension $n$ with exceptional set $E=\pi^{-1}(0)$.  The geometric genus $p_g$ of the singularity $(V,0)$ is the dimension of $H^{n-1}(M, \mathcal{O})$ and is independent of the resolution $M$.
\end{definition}

We have the following proposition.

\begin{proposition}\cite{KN}\label{prop1}
Let $(V, 0)=\{f_1=\cdots=f_r=0\}$ be a homogeneous isolated complete intersetion singularity of multidegree $(d_1,\cdots, d_r)$ and dimension $n$, that is deg$f_i=d_i$, then 
$$p_g=\sum_{\underline{k}\in K_{n, r}}\prod_{i=1}^r {d_i\choose k_i+1},$$
where $K_{n, r}:=\{\underline{k}=(k_1,\cdots, k_r): k_i\geq 0 \; \textup{for all}\; i, and\, \sum_i k_i=n\}.$
\end{proposition}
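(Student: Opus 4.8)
The plan is to realize $(V,0)$ as an affine cone, reduce the computation of $p_g=\dim H^{n-1}(M,\mathcal O)$ to a cohomology computation on the projective complete intersection cut out by $f_1,\dots,f_r$, and then settle everything with a purely combinatorial identity about Hilbert functions.

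First I would observe that since the $f_i$ are homogeneous and $(V,0)$ has an isolated singularity, the projective variety $X=\{f_1=\cdots=f_r=0\}\subset\mathbb{P}^{n+r-1}$ is a smooth complete intersection of dimension $n-1$, and $V$ is exactly the affine cone over $X$. Blowing up the vertex gives a resolution $\pi\colon M\to V$ in which $M$ is the total space of the tautological bundle $\mathcal{O}_X(-1)$ and the exceptional divisor is the zero section $E\cong X$; since $p_g$ is independent of the resolution I may use this one. As $\pi'\colon M\to X$ is an affine morphism with $\pi'_*\mathcal{O}_M=\bigoplus_{k\ge0}\mathcal{O}_X(k)$, I obtain
$$H^{n-1}(M,\mathcal{O}_M)\;\cong\;\bigoplus_{k\ge0}H^{n-1}(X,\mathcal{O}_X(k)).$$
(The argument is written for $n\ge2$, which is the case of interest; the curve case $n=1$ is the $\delta$-invariant of a plane curve and is checked directly.)

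Next I would apply Serre duality on the $(n-1)$-dimensional $X$ together with adjunction. For a complete intersection of multidegree $(d_1,\dots,d_r)$ in $\mathbb{P}^{n+r-1}$ the canonical sheaf is $\omega_X=\mathcal{O}_X(\sum_i d_i-n-r)$, so $H^{n-1}(X,\mathcal{O}_X(k))^\vee\cong H^0(X,\mathcal{O}_X(\sum_i d_i-n-r-k))$. Summing over $k\ge0$ and reindexing by $m=\sum_i d_i-n-r-k$ gives
$$p_g=\sum_{m=0}^{\,\sum_i d_i-n-r} h^0(X,\mathcal{O}_X(m)).$$
Because smooth complete intersections are projectively normal, $h^0(X,\mathcal{O}_X(m))=\dim(R_X)_m$ where $R_X=\mathbb{C}[z_0,\dots,z_{n+r-1}]/(f_1,\dots,f_r)$; as the $f_i$ form a regular sequence its Hilbert series is $\prod_i(1-t^{d_i})/(1-t)^{n+r}$.

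It then remains to prove the combinatorial identity that this partial sum of the Hilbert function equals $\sum_{\underline k\in K_{n,r}}\prod_i\binom{d_i}{k_i+1}$, and this is the real content of the proposition. Here I would cut down by a regular sequence of $n$ generic linear forms (using Cohen--Macaulayness) to the Artinian complete intersection $A$ of $r$ forms of degrees $d_i$ in $r$ variables, whose Hilbert function $a_j=\dim A_j$ is palindromic with $\sum_j a_jt^j=\prod_i(1+t+\cdots+t^{d_i-1})$ and top degree $e=\sum_i d_i-r$. The partial-sum formula together with the palindromic symmetry $a_j=a_{e-j}$ collapses $p_g$ to $\sum_j a_j\binom{j}{n}$, which I recognize as $\frac1{n!}\frac{d^n}{dt^n}\big(\prod_iP_{d_i}(t)\big)\big|_{t=1}$ with $P_{d_i}(t)=1+t+\cdots+t^{d_i-1}$. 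Finally, expanding around $t=1$ via $P_{d_i}(1+\epsilon)=((1+\epsilon)^{d_i}-1)/\epsilon$ turns this into the coefficient extraction $p_g=[\epsilon^{n+r}]\prod_i((1+\epsilon)^{d_i}-1)$, and expanding each factor as $\sum_{j\ge1}\binom{d_i}{j}\epsilon^j$ yields exactly $\sum_{\underline k\in K_{n,r}}\prod_i\binom{d_i}{k_i+1}$. I expect this last combinatorial step --- correctly organizing the partial Hilbert sum, invoking the palindromic symmetry, and translating to $t=1$ --- to be the main obstacle, whereas the geometric reduction to $X$ is standard.
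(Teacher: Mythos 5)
Your argument is correct, and there is nothing in the paper to measure it against: Proposition~\ref{prop1} is quoted from Kerner--N\'emethi \cite{KN} without proof, so you have supplied a derivation where the paper supplies only a citation. Every step of your route checks out. The blow-up of the vertex realizes the resolution as the total space of $\mathcal{O}_X(-1)$ over the smooth $(n-1)$-dimensional complete intersection $X\subset\mathbb{P}^{n+r-1}$, the affine projection gives $H^{n-1}(M,\mathcal{O})\cong\bigoplus_{k\ge0}H^{n-1}(X,\mathcal{O}_X(k))$, and Serre duality with $\omega_X=\mathcal{O}_X(\sum_i d_i-n-r)$ converts $p_g$ into the partial sum $\sum_{m=0}^{\sum_i d_i-n-r}\dim(R_X)_m$; projective normality of complete intersections (via the Koszul resolution) and the regular-sequence Hilbert series justify the identification with $S/(f_1,\dots,f_r)$. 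The Artinian reduction, the palindromy of $\prod_i(1+t+\cdots+t^{d_i-1})$, and the substitution $t=1+\epsilon$ then give $p_g=[\epsilon^{n+r}]\prod_i\bigl((1+\epsilon)^{d_i}-1\bigr)$, which expands to exactly $\sum_{\underline{k}\in K_{n,r}}\prod_i\binom{d_i}{k_i+1}$. One remark worth adding: expanding $\prod_i\bigl((1+\epsilon)^{d_i}-1\bigr)=\sum_{S\subseteq\{1,\dots,r\}}(-1)^{r-|S|}(1+\epsilon)^{\sum_{i\in S}d_i}$ shows your formula equals $\sum_S(-1)^{r-|S|}\binom{\sum_{i\in S}d_i}{n+r}$, which is precisely the Morales formula of Theorem~\ref{Ma} specialized to the homogeneous case, where $\ell(N)=\binom{N}{n+r}$. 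So your proof also makes explicit the consistency between Proposition~\ref{prop1} and Theorem~\ref{Ma}, a link the paper relies on but never spells out. The only caveats are the ones you already flag: the argument needs $n\ge2$ (so that $X$ is positive-dimensional and connected, killing $h^0$ of negative twists), and the vanishing of the partial sum when $\sum_i d_i<n+r$ correctly returns $p_g=0$.
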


We prove that the above conjectue is true in homogeneous case.

\begin{theorem}\label{th1}
Let $(V, 0)=\{f_1=\cdots=f_r=0\}\subset(\mathbb{C}^N, 0)$ be a homogeneous rational isolated complete intersetion singularity of multidegree $(d_1,\cdots, d_r)$ and dimension $n$, then $r\leq n-1$ (i.e. $N\leq 2(N-r)-1$).
\end{theorem}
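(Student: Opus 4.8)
The plan is to convert the rationality hypothesis into the single numerical condition $p_g=0$ and then feed that into the genus formula of Proposition \ref{prop1}. Since a complete intersection is Gorenstein, the definition of a rational Gorenstein point together with the quoted duality statement $R^{n-1}f_{*}\mathcal{O}_Y=0$ shows that rationality is equivalent to the vanishing of the geometric genus $p_g=\dim H^{n-1}(M,\mathcal{O})$ (the stalk of $R^{n-1}f_{*}\mathcal{O}_Y$ at the point computes $H^{n-1}(M,\mathcal{O})$ on a Stein neighborhood). Recording that the dimension and the number of equations are tied to the embedding dimension by $n=N-r$, the whole theorem reduces to extracting arithmetic information about the multidegree $(d_1,\dots,d_r)$ from $p_g=0$.

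The combinatorial heart is the observation that every summand in
\[
p_g=\sum_{\underline{k}\in K_{n,r}}\prod_{i=1}^{r}\binom{d_i}{k_i+1}
\]
is a product of non-negative integers, so $p_g=0$ forces every single term to vanish. A term indexed by $\underline{k}$ vanishes precisely when some factor $\binom{d_i}{k_i+1}$ is zero, i.e.\ when $k_i+1>d_i$ for at least one $i$. Hence $p_g=0$ is equivalent to the assertion that there is \emph{no} tuple $\underline{k}$ with $\sum_i k_i=n$ and $0\le k_i\le d_i-1$ for every $i$. Because the box-constrained integers $k_i\in[0,d_i-1]$ realize every integer value of the sum between $0$ and $\sum_i(d_i-1)$, such a tuple fails to exist exactly when $n$ exceeds the maximal attainable sum. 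I therefore expect to deduce the clean inequality
\[
n>\sum_{i=1}^{r}(d_i-1),\qquad\text{equivalently}\qquad \sum_{i=1}^{r}d_i\le n+r-1 .
\]

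To finish, I would invoke that $N$ is the genuine embedding dimension of the singularity: no defining polynomial can be linear, since a degree-one $f_i$ would let us eliminate a variable and embed $V$ in $\mathbb{C}^{N-1}$, so each $f_i$ lies in $m^2$ and $d_i\ge 2$ for all $i$. Summing gives $\sum_i d_i\ge 2r$, and combining with the displayed bound yields $2r\le n+r-1$, that is $r\le n-1$; rewriting with $n=N-r$ recovers the equivalent form $N\le 2(N-r)-1$. The step I expect to require the most care is not the arithmetic but the two framing reductions: first, pinning down the equivalence "rational $\Leftrightarrow p_g=0$" in this Gorenstein setting from the duality already quoted, and second, justifying $d_i\ge 2$ cleanly so that the bound on $\sum_i d_i$ genuinely translates into a bound on $r$. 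Everything between these is the elementary feasibility analysis of a bounded integer partition of $n$.
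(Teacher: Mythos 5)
Your proposal is correct and follows essentially the same route as the paper: both reduce rationality to $p_g=0$ via Proposition \ref{prop1}, observe that every summand of the nonnegative sum must vanish, and conclude using $d_i\ge 2$. The only real difference is that the paper exhibits the single witness tuple $(1,\dots,1,0,\dots,0)$ under the contrary assumption $r\ge n$, whereas you extract the full equivalence $p_g=0\iff\sum_i d_i\le n+r-1$ before specializing --- a slightly sharper intermediate statement that yields the same bound, together with an explicit justification of $d_i\ge 2$ that the paper merely asserts.
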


\begin{proof}
Since $(V, 0)$ is a homogeneous isolated rational complete intersetion singularity, so by Proposition \ref{prop1}, we have $$p_g=\sum_{\underline{k}\in K_{n, r}}\prod_{i=1}^r {d_i\choose k_i+1}=0.$$ Thus for any 
$\underline{k}\in K_{n, r}$, we have $\prod_{i=1}^r {d_i\choose k_i+1}=0$. If we assume the contrary, $r\geq n$,  without loss of generality, we consider thel $\underline{k}=(k_1,\cdots, k_r) $ with $k_1=k_2=\cdots=k_{n}=1, k_{n+1}=\cdots=k_{r}=0$.  Then for this choice $\underline{k}=(1,\cdots,1, 0,\cdots,0)$,  we have $\prod_{i=1}^r {d_i\choose k_i+1}\geq 1$ since $d_i\geq 2$. This contradicts with  $\prod_{i=1}^r {d_i\choose k_i+1}=0$. Therefore we have $r\leq n-1$.
\end{proof}
 We have the following two classification theorems for homogeneous case.
\begin{theorem}\label{th2}
Let $(V, 0)=\{f_1=\cdots=f_r=0\}\subset(\mathbb{C}^N, 0)$ be a three dimensional homogeneous rational isolated  complete intersection singularity of multidegree $(d_1,\cdots, d_r)$ which is not a hypersurface singularity, then $r=2, N=5$ and $d_1=d_2=2$.
\end{theorem}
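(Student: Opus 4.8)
The plan is to exploit Theorem \ref{th1} together with the explicit rationality formula of Proposition \ref{prop1}. Since $(V,0)$ is a three-dimensional ($n=3$) rational ICIS that is \emph{not} a hypersurface, we have $r\geq 2$, and Theorem \ref{th1} gives $r\leq n-1=2$. Hence immediately $r=2$. It remains to pin down the degrees $(d_1,d_2)$ and the embedding dimension $N$, and the key constraint is that the geometric genus must vanish, $p_g=0$.

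\smallskip
With $n=3$ and $r=2$, the index set is $K_{3,2}=\{(3,0),(2,1),(1,2),(0,3)\}$, so Proposition \ref{prop1} reads
\begin{equation*}
p_g=\binom{d_1}{4}\binom{d_2}{1}+\binom{d_1}{3}\binom{d_2}{2}+\binom{d_1}{2}\binom{d_2}{3}+\binom{d_1}{1}\binom{d_2}{4}=0.
\end{equation*}
Since all four summands are nonnegative, each must vanish separately. The first step is to observe that $d_1,d_2\geq 2$ (a degree-one factor would make $f_i$ smooth and reduce the embedding dimension, contradicting that the intersection is a genuine singularity in $\mathbb{C}^N$ with $N$ minimal). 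Next I would argue that if either degree were $\geq 3$, say $d_1\geq 3$, then the middle term $\binom{d_1}{3}\binom{d_2}{2}\geq 1\cdot 1\geq 1>0$, contradicting $p_g=0$; the analogous bound with the third term handles $d_2\geq 3$. Therefore $d_1=d_2=2$, and then every summand contains a factor $\binom{2}{3}=\binom{2}{4}=0$, so indeed $p_g=0$, confirming consistency.

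\smallskip
Finally I would determine $N$. With $r=2$ and both quadrics, the singularity $\{f_1=f_2=0\}\subset(\mathbb{C}^N,0)$ has dimension $n=N-r=N-2=3$, forcing $N=5$. One should check that the germ is genuinely isolated and not equivalent to a hypersurface: two generic homogeneous quadrics in five variables cut out an isolated singularity at the origin (a rational one, since $p_g=0$), while fewer variables or a degenerate pair would either fail to be isolated or collapse to lower embedding dimension. This closes the argument: $r=2$, $N=5$, $d_1=d_2=2$.

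\smallskip
The main obstacle I anticipate is not the combinatorial vanishing argument, which is elementary once $K_{3,2}$ is written out, but rather justifying the reductions at the boundary: namely that one may legitimately assume $d_i\geq 2$ (ruling out degree-one components as decreasing embedding dimension) and that $N$ is exactly $5$ rather than merely $\geq 5$ (i.e. that the embedding dimension is minimal and the quadrics are in sufficiently general position to define an isolated complete intersection). These points require the standard facts that a complete intersection of codimension $r$ in $\mathbb{C}^N$ has dimension $N-r$ and that minimal embedding dimension forbids linear defining equations; the purely numerical core of the theorem then follows cleanly from $p_g=0$.
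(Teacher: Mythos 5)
Your proposal is correct and follows essentially the same route as the paper: apply Theorem \ref{th1} to get $r\le 2$ (hence $r=2$ since $V$ is not a hypersurface), expand $p_g=0$ over $K_{3,2}$ via Proposition \ref{prop1}, and use the vanishing of the mixed terms $\binom{d_1}{3}\binom{d_2}{2}$ and $\binom{d_1}{2}\binom{d_2}{3}$ together with $d_i\ge 2$ to force $d_1=d_2=2$; the identification $N=n+r=5$ is immediate for a complete intersection. The extra verifications you worry about at the end are not needed, since the hypotheses already assume $(V,0)$ is an ICIS of the stated dimension.
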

\begin{proof}
If follows from Theorem \ref{th1} that $r\leq 2$. Since $(V, 0)$ is not a hypersurface singularity, so $r=2$. Let $\underline{k}=(k_1, k_2)\in K_{3, 2}$, we have $(k_1, k_2)=(0, 3), (3, 0), (1, 2)$ or $(2, 1)$.  For $(k_1, k_2)=(1, 2)$,   by Proposition \ref{prop1} $$p_g=\sum_{\underline{k}\in K_{3, 2}}\prod_{i=1}^2 {d_i\choose k_i+1}=0,$$  we have $$ {d_1\choose 2} {d_2\choose 3}=0$$ which implies $ d_2=2$ since $d_1$ and $d_2$ are at least 2.  Similarly,  $(k_1, k_2)=(2, 1)$ implies $d_1=2$. Thus $d_1=d_2=2$.
\end{proof}

\begin{theorem}\label{th3}
Let $(V, 0)=\{f=0\}\subset(\mathbb{C}^4, 0)$ be a three dimensional homogeneous rational isolated hypersurface singularity of degree $d$, then $d=2, 3$.
\end{theorem}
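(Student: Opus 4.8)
The plan is to reduce this directly to the geometric genus formula of Proposition \ref{prop1}, exactly as in the proof of Theorem \ref{th2}, but now in the hypersurface case $r=1$, $N=4$, $n=3$. First I would observe that a hypersurface is a complete intersection defined by a single polynomial, so $r=1$ and the multidegree collapses to the single degree $d$. The relevant index set is then
\begin{equation*}
K_{3,1}=\{\underline{k}=(k_1): k_1\geq 0,\ k_1=3\},
\end{equation*}
which contains the unique element $k_1=3$. Hence the sum in Proposition \ref{prop1} degenerates to a single term and
\begin{equation*}
p_g=\prod_{i=1}^{1}\binom{d_i}{k_i+1}=\binom{d}{4}.
\end{equation*}

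Next I would invoke rationality: since $(V,0)$ is a rational singularity, its geometric genus vanishes, so $\binom{d}{4}=0$. For a positive integer $d$ the binomial coefficient $\binom{d}{4}$ is zero precisely when $d<4$, that is $d\leq 3$. This supplies the upper bound on the degree with essentially no computation beyond reading off the single surviving term of the formula.

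Finally I would pin down the lower bound. Because $(V,0)$ is assumed to be a genuine isolated singularity rather than a smooth point, the degree cannot be $1$: a degree-one polynomial in $\mathbb{C}^4$ defines a hyperplane, which is nonsingular. Thus $d\geq 2$, and combining the two bounds yields $d\in\{2,3\}$, as claimed.

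I do not expect any serious obstacle here; the argument is a direct specialization of the counting already carried out for Theorem \ref{th2}. The only points requiring a word of care are the degenerate shape of $K_{3,1}$ (so that the genus is literally $\binom{d}{4}$) and the justification that $d=1$ is excluded by the hypothesis that the origin is actually singular, so that $d\geq 2$ is forced rather than merely assumed.
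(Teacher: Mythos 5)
Your proof is correct and follows essentially the same route as the paper: both reduce to Proposition \ref{prop1} to get $p_g=\binom{d}{4}=0$, hence $d\leq 3$. Your additional remark that $d=1$ is excluded because a hyperplane is smooth is a small clarification the paper leaves implicit, but it does not change the argument.
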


\begin{proof}
By Proposition \ref{prop1}, we have  $$p_g= {d\choose 4}=0,$$ so we  $ d=2$ or $3$.
\end{proof}

We can also prove the conjecture for $p=3$.

\begin{lemma}\label{l1.1} Let $(V,0)$ be a $n$-dimensional isolated singularity in $\mathbb{C}^N$. If $2n-N>0$, then $(V,0)$ cannot have two components of dimension $n$.
\end{lemma}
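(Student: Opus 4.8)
The plan is to argue by contradiction using the affine dimension theorem for intersections. Suppose, contrary to the claim, that $V$ admits two \emph{distinct} irreducible components $V_1$ and $V_2$, each of dimension $n$. Since the origin lies on every component of the germ, we have $0\in V_1\cap V_2$, so in particular the intersection is nonempty. The strategy is to show that the positive lower bound $2n-N>0$ forces $V_1\cap V_2$ to be positive-dimensional, and then to observe that every point of $V_1\cap V_2$ is a singular point of $V$, contradicting the assumption that the singularity at $0$ is isolated.

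The first key step is to invoke the affine dimension theorem (as in Hartshorne, Proposition I.7.1): for two irreducible subvarieties $V_1,V_2\subset\mathbb{C}^N$, every irreducible component $W$ of $V_1\cap V_2$ satisfies
\begin{equation*}
\dim W\ \ge\ \dim V_1+\dim V_2-N\ =\ 2n-N.
\end{equation*}
By hypothesis $2n-N>0$, so $V_1\cap V_2$ contains a component of strictly positive dimension; in particular there exists a point $p\in V_1\cap V_2$ with $p\neq 0$.

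The second step is to show that every such $p$ must be a singular point of $V$. In a suitable neighborhood of $p$ the germ of $V$ contains the two distinct branches coming from $V_1$ and $V_2$, both of which pass through $p$; hence the local ring $\mathcal{O}_{V,p}$ has at least two distinct minimal primes and is therefore not an integral domain. A regular local ring is a domain, so $\mathcal{O}_{V,p}$ cannot be regular and $p$ is a singular point of $V$. Consequently $\mathrm{Sing}(V)\supset V_1\cap V_2$ is at least one-dimensional, which contradicts the hypothesis that $(V,0)$ is an isolated singularity, i.e.\ that $\mathrm{Sing}(V)=\{0\}$. This contradiction shows that $V$ cannot have two components of dimension $n$.

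The only substantive ingredient is the intersection estimate $\dim(V_1\cap V_2)\ge 2n-N$; everything else follows directly from the definition of an isolated singularity. I expect the main point to watch is the second step — namely that the intersection of two distinct top-dimensional components is automatically contained in the singular locus — since this is precisely what converts the positive-dimensional intersection into a positive-dimensional singular set and thereby produces the contradiction.
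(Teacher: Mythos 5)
Your argument is correct and is essentially identical to the paper's own proof: both use the affine dimension theorem to conclude that two $n$-dimensional components through the origin must intersect in dimension at least $2n-N>0$, and then note that $V$ is singular along this intersection, contradicting the isolatedness of the singularity. Your write-up merely makes explicit the (standard) fact that a point lying on two distinct components is singular because the local ring there has two minimal primes and hence is not a domain.
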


\begin{proof} If $V$ is union of two components $V_1$ and $V_2$, each of which is of dimension $n$, then $V_1$ and $V_2$ will intersect with at least dimension $2n-N>0$. This is a contradiction since $V$ is singular along the intersection.
\end{proof}

\begin{theorem} \label{t1.2} Let $(V,0)$ be a three dimensional rational isolated  complete intersection singularity. Then the embedding dimension of $(V, 0)$ is at most 5.
\end{theorem}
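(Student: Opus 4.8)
The plan is to reduce the three-dimensional problem to a surface singularity by cutting with a general hyperplane through the origin, and then to read off the embedding-dimension bound from Laufer's classification in Theorem \ref{Lth}. First I would record that $(V,0)$ is in fact a rational Gorenstein point. Being an isolated complete intersection of dimension three it is Cohen--Macaulay, and since its singular locus $\{0\}$ has codimension three it is regular in codimension one; by Serre's criterion ($R_1+S_2$) it is normal, hence an irreducible germ, so no issue with reducibility arises. It is Gorenstein because complete intersections are, and rationality is assumed. Writing $N$ for the intrinsic embedding dimension $\dim_{\mathbb{C}} m_0/m_0^2$ and fixing a minimal embedding $(V,0)\subset(\mathbb{C}^N,0)$, the goal becomes $N\le 5$.

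Next I would invoke Theorem \ref{Mth}: since $\dim V=3\ge 3$ and $0\in V$ is rational Gorenstein, a general hyperplane section $S$ through $0$ is a two-dimensional singularity that is either minimally elliptic or rational Gorenstein. Two bookkeeping points must then be checked. First, $S$ is again a complete intersection, being the intersection of the complete intersection $V$ with one further general hyperplane, so Laufer's criterion genuinely applies to it. Second, cutting by a general linear form $\ell\in m_0\setminus m_0^2$ drops the embedding dimension by exactly one: from $\mathcal{O}_{S,0}=\mathcal{O}_{V,0}/(\ell)$ one gets $m_{S,0}/m_{S,0}^2 = m_0/(m_0^2+(\ell))$, and since the class of $\ell$ is a nonzero element of $m_0/m_0^2$ this gives $\dim m_{S,0}/m_{S,0}^2 = N-1$. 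Thus it suffices to prove that the embedding dimension of $S$ is at most $4$.

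I would then bound the embedding dimension of $S$ in each of the two cases furnished by Theorem \ref{Mth}. If $S$ is minimally elliptic, Theorem \ref{Lth} applies with $Z$ the fundamental cycle of a resolution: because $S$ is a complete intersection, part (7) excludes $Z^2\le -5$, so $Z^2\ge -4$. For $-3\le Z^2\le -1$ part (5) makes $S$ a hypersurface, of embedding dimension $3$, while for $Z^2=-4$ part (6) makes $S$ a codimension-two complete intersection, of embedding dimension $4$; hence the embedding dimension of $S$ is at most $4$. If instead $S$ is a two-dimensional rational Gorenstein singularity, then it is a rational double point (a Du Val/ADE hypersurface singularity), of embedding dimension $3$. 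In both cases $N-1\le 4$, i.e. $N\le 5$, as claimed.

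The main obstacle is the reduction step rather than the final casework: one must ensure that the general hyperplane section can be chosen generic enough that simultaneously (i) $S$ is still an isolated normal singularity to which Theorems \ref{Mth} and \ref{Lth} genuinely apply, (ii) $S$ remains a complete intersection, and (iii) the embedding dimension drops by exactly one. Granting the generic behaviour established inside the proof of Theorem \ref{Mth}, the crux is the translation of Laufer's $Z^2$-trichotomy into the bound ``embedding dimension of $S$ is at most $4$''; the rational Gorenstein alternative is then disposed of by the classical fact that such surface germs are rational double points.
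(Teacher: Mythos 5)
Your proposal is correct and follows essentially the same route as the paper: cut by a general hyperplane, apply Theorem \ref{Mth} to get the rational Gorenstein / minimally elliptic dichotomy for the surface section, use the rational double point classification in the first case and Laufer's $Z^2$-trichotomy (Theorem \ref{Lth}, parts (5)--(7)) in the second to bound the section's embedding dimension by $4$, and conclude. Your write-up is more careful than the paper's about the bookkeeping (normality and Gorensteinness of $V$, the section remaining a complete intersection, and the embedding dimension dropping by exactly one), but the underlying argument is identical.
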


\begin{proof} Take a generic section $(H,0)$ of $(V,0)$. Then by Theorem \ref{Mth},  $(H,0)$ is either a 2-dimensional rational Gorenstein singularity or minimally elliptic singularity. It is well-known that 2-dimensional rational Gorenstein singularity must be rational double points. So the embedding dimension of $(H,0)$ is 3. On the other hand, by Theorem \ref{Lth} 
asserts that minimally elliptic complete intersection isolated singularity has embedding dimension at most 4. So the  embedding dimension of $(V,0)$ is at most 5.
\end{proof}
 An immediate corollary is as follows.
 
 \begin{cor}\label{c1.1}
The three dimensional weighted homogeneous rational isolated  complete intersection singularity  which is not hypersurface singularity is defined by two weighted homogeneous polynomials in 5 variables.
 \end{cor}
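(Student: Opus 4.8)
\section*{Proof proposal for Corollary \ref{c1.1}}

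The plan is to read the corollary off directly from Theorem \ref{t1.2} together with the definition of a complete intersection. Let $(V,0)$ be the given three-dimensional weighted homogeneous rational isolated complete intersection singularity, and let $e=\dim_{\mathbb{C}} m/m^2$ be its embedding dimension, so that $(V,0)$ embeds minimally as a complete intersection in $(\mathbb{C}^e,0)$. Since the complex dimension of $(V,0)$ is $3$ and $(V,0)$ is a complete intersection, the number of defining equations equals the codimension $e-3$; this bookkeeping identity is the bridge between the embedding-dimension bound and the number of polynomials.

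First I would invoke Theorem \ref{t1.2} to obtain the upper bound $e\le 5$. Next I would record the lower bound $e\ge 4$: a three-dimensional germ with embedding dimension $3$ is all of $(\mathbb{C}^3,0)$ and hence smooth, so a genuine singularity forces $e\ge 4$. The hypothesis that $(V,0)$ is \emph{not} a hypersurface singularity then excludes $e=4$, since $e=4$ would give $e-3=1$ defining equation, i.e.\ a hypersurface. Combining $4<e\le 5$ yields $e=5$, and therefore $(V,0)$ is cut out by exactly $e-3=2$ equations in $\mathbb{C}^5$.

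Finally I would argue that these two equations can be chosen weighted homogeneous. Because $(V,0)$ carries a $\mathbb{C}^*$-action, the minimal embedding $(V,0)\hookrightarrow(\mathbb{C}^5,0)$ may be taken $\mathbb{C}^*$-equivariantly: one selects coordinates $z_1,\dots,z_5$ that are eigenvectors for the induced action on $m/m^2$, which fixes well-defined weights $w_1,\dots,w_5$, and then the defining ideal is generated by elements homogeneous with respect to these weights. Choosing two such generators $f_1,f_2$ produces the desired pair of weighted homogeneous polynomials in five variables.

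The entire force of the statement lies in Theorem \ref{t1.2}; the remaining steps are arithmetic together with one structural remark. I expect the only point needing a word of care to be the equivariant-embedding step, where one must note that a linearizable $\mathbb{C}^*$-action lets one diagonalize the action on $m/m^2$ and thereby pass from an abstract weighted homogeneous germ to explicit weighted homogeneous defining polynomials; this is the one place where the argument is not purely a count, and it is where I would be most careful in the full write-up.
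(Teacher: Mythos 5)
Your proposal is correct and follows exactly the route the paper intends: the paper presents Corollary \ref{c1.1} as an immediate consequence of the embedding-dimension bound in Theorem \ref{t1.2}, and your bookkeeping ($4 < e \le 5$, hence $e=5$ and $e-3=2$ equations) together with the standard remark that a graded ideal of a weighted homogeneous germ admits homogeneous generators is precisely the content left implicit there. No gaps.
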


\begin{lemma}Both $f_1$ and $f_2$ in Corollary \ref{c1.1} are irreducible.
\end{lemma}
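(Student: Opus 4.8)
The plan is to argue by contradiction: assuming that one of the two defining polynomials, say $f_1$, admits a nontrivial factorization, I would show that the singular locus of $V$ is forced to be positive-dimensional, contradicting the hypothesis that $(V,0)$ is an \emph{isolated} singularity. A preliminary remark I would record is that, since $f_1$ is weighted homogeneous, any factorization $f_1=g\cdot h$ into non-units can be taken with $g$ and $h$ themselves weighted homogeneous (a factor of a quasi-homogeneous polynomial is again quasi-homogeneous, by comparing the minimal and maximal weighted degrees occurring in $g$ and $h$), so that all loci appearing below are cones through the origin.

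The key step is a Jacobian computation. Writing $f_1=g\cdot h$ with $g,h$ non-constant, I would set
\begin{equation*}
W := \{\, g = 0,\ h = 0,\ f_2 = 0 \,\}\subseteq \mathbb{C}^5 .
\end{equation*}
Since $g=0$ forces $f_1=gh=0$, we have $W\subseteq V$. By Krull's principal ideal theorem $W$ is cut out in $\mathbb{C}^5$ by three equations, so $\dim W\geq 5-3=2>0$; being a cone, $W$ therefore meets $V\setminus\{0\}$. On the other hand, along $W$ one computes
\begin{equation*}
\nabla f_1 = h\,\nabla g + g\,\nabla h = 0 ,
\end{equation*}
because $g$ and $h$ both vanish on $W$. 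Hence at every point of $W$ the Jacobian matrix of $(f_1,f_2)$ has its first row equal to zero, so its rank is at most $1<2$. By the Jacobian criterion for the $3$-dimensional complete intersection $V=\{f_1=f_2=0\}$ of Corollary \ref{c1.1}, every point of $W$ is then a singular point of $V$. Since $W\setminus\{0\}\neq\emptyset$, the singular locus of $V$ is positive-dimensional, contradicting that $(V,0)$ is isolated.

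This contradiction shows $f_1$ is irreducible, and the identical argument with the roles of $f_1$ and $f_2$ interchanged shows $f_2$ is irreducible. I expect the only delicate points to be the two facts needed to make the contradiction rigorous: that $W$ genuinely lies in the singular locus — which rests on the Jacobian criterion for complete intersections together with $V$ being the honest codimension-two complete intersection of Corollary \ref{c1.1} — and that $\dim W>0$, so that $W$ contains points other than the origin. I would also note that in the special case where $g$ and $h$ are coprime one may instead invoke Lemma \ref{l1.1}: then $\{f_1=0\}=\{g=0\}\cup\{h=0\}$, so $V$ splits as the union of the two $3$-dimensional complete intersections $\{g=f_2=0\}$ and $\{h=f_2=0\}$ (neither $g$ nor $h$ can divide $f_2$, else $\dim V\geq 4$), and two distinct $3$-dimensional components in $\mathbb{C}^5$ are excluded by Lemma \ref{l1.1} since $2\cdot 3-5=1>0$. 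The Jacobian argument above has the advantage of also covering the case of repeated factors in a single stroke.
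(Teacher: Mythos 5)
Your proof is correct, but it takes a genuinely different route from the paper's. The paper factors both polynomials completely, $f_1=f_{11}\cdots f_{1k_1}$ and $f_2=f_{21}\cdots f_{2k_2}$, observes that $V$ is then the union of the $k_1k_2$ three-dimensional sets $V(f_{1i},f_{2j})$, and invokes Lemma \ref{l1.1} (two $n$-dimensional components in $\mathbb{C}^N$ with $2n-N>0$ must meet in positive dimension, along which $V$ is singular) to force $k_1k_2=1$; this is exactly the alternative you sketch in your closing remark for the case of coprime factors. Your main argument instead goes straight to the Jacobian: on $W=\{g=h=f_2=0\}$, which has dimension at least $2$ by Krull and is a cone because the factors of a weighted homogeneous polynomial with positive weights are again weighted homogeneous of positive degree, the identity $\nabla f_1=h\nabla g+g\nabla h$ kills the first row of the Jacobian of $(f_1,f_2)$, so $W\setminus\{0\}$ lies in the singular locus. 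This is the same mechanism that powers Lemma \ref{l1.1}, but applied directly it buys you something the paper's component count does not cleanly deliver: the case of a repeated factor $g=h$ (up to unit), where $V(f_{1i},f_{2j})$ fails to produce two \emph{distinct} components and the paper's "$k_1k_2$ irreducible components" is strictly speaking a gap. Your Jacobian computation is also exactly the criterion the paper itself uses throughout the proof of Lemma \ref{l3}, so it is fully consistent with their conventions for what "isolated singularity" means. In short: correct, slightly more careful than the original, at the cost of a somewhat longer argument.
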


\begin{proof}Assume that $f_1=f_{11}f_{12}\cdots f_{1k_1}$ and $f_2=f_{21}f_{22}\cdots f_{2k_2}$,where $f_{1i}, 1\leq i\leq k_1$ and $f_{2j}, 1\leq j\leq k_2$ are irreducible. $V(f_{1i}, f_{2j})$ are $k_1k_2$ irreducible components of dimension 3. Since $n=3, N=5$, so $2n-N=1>0$. By Lemma \ref{l1.1}, we have $k_1k_2=1$, which implies $k_1=k_2=1$, thus both $f_1$ and $f_2$ are irreducible.
\end{proof}

 \section{Classification of three dimensional weighted homogeneous rational isolated  complete intersection singularity }

In this section, we shall give an complete classification of three-dimensional rational weighted homogeneous complete intersection singularities. We first recall some definitions and then we prove some properties which are used in the proof of classification theorem.

\begin{definition}
Let $w = (w_1, \cdots ,w_n; d)$ be an $(n+1)$-tuple of positive
rational numbers. A polynomial $f(z_1, \cdots ,z_n)$ is said to be a weighted homogeneous polynomial with weights $w$ if each monomial $\alpha z_1^{a_1}z_2^{a_2} \cdots z_n^{a_n}$ of $f$ satisfies $a_1w_1 +\cdots+ a_nw_n = d$.  And we say a pair of polynomials $(f_1,f_2)$ are weighted homogeneous of type $(w_1,\cdots,w_n;d_1,d_2)$ if $f_1$ is weighted homogeneous of type $(w_1,\cdots,w_n;d_1)$ and $f_2$ is weighted homogeneous of type $(w_1,\cdots,w_n;d_2)$.
\end{definition}

By Corollary \ref{c1.1}, in order to classify three dimensional weighted homogeneous rational isolated  complete intersection singularity, we only need to study weighted homogeneous polynomials in 5 variables.

\begin{theorem}\cite{Ma} \label{Ma}
Let $X=V(f_1,\cdots, f_k)$ be an weighted homogeneous  ICIS  of type $$(w_1,\cdots, w_n;d_1,\cdots,d_k).$$
Let $$A(N)=\{(a_1,\cdots,a_n)\in {\mathbb N}^n |a_i>0~\textup{and}~\sum\limits_{i = 1}^k {{a_i}{w_i} \le N} \}$$ and $$\ell(N)=\# A(N).$$
Then we have $$ {p_g}(X) = \ell({d_1}+\cdots+{d_k}) - \sum\limits_{i = 1}^k {\ell({d_1} + \cdots + {{\hat d}_i} + \cdots+ {d_k})}  + \cdots + {( - 1)^{k - 1}}\sum\limits_{i = 1}^k {\ell({d_i})}.$$
\end{theorem}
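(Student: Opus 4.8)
The plan is to compute $p_g(X)$ through the graded local cohomology of the coordinate ring and then extract the stated alternating sum from the Koszul resolution of the complete intersection. Write $R=\mathbb{C}[z_1,\dots,z_n]$ with the grading $\deg z_i=w_i$, set $B=R/(f_1,\dots,f_k)$, and let $m=n-k=\dim X$. Abbreviate $|w|=\sum_{i=1}^n w_i$, $|d|=\sum_{j=1}^k d_j$, and for $S\subseteq\{1,\dots,k\}$ put $d_S=\sum_{j\in S}d_j$. The first and crucial step is to identify the geometric genus with a non-negative graded part of the top local cohomology module. In the boundary degree $i=m-1$, which the Proposition above does not cover, the same analysis (Grauert--Riemenschneider vanishing, Serre duality on $\widetilde{X}$, and Laufer's comparison with cohomology at infinity) yields an injection $H^{m-1}(\widetilde{X},\mathcal{O})\hookrightarrow H^{m-1}(X\setminus\{0\},\mathcal{O})\cong H^{m}_{\{0\}}(X,\mathcal{O})$ whose image, under the $\mathbb{C}^{*}$-action, is the non-negative graded part. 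Identifying $H^{m}_{\{0\}}(X,\mathcal{O})$ with the local cohomology $H^{m}_{\mathfrak{m}}(B)$ of $B$ at its irrelevant maximal ideal $\mathfrak{m}$, this gives
\begin{equation*}
p_g(X)=\dim_{\mathbb{C}}\bigl[H^{m}_{\mathfrak{m}}(B)\bigr]_{\ge 0}=\sum_{\ell\ge 0}\dim_{\mathbb{C}}\bigl[H^{m}_{\mathfrak{m}}(B)\bigr]_{\ell}.
\end{equation*}
Establishing this identification, with the correct range and sign of the grading, is the main obstacle; it must be reconciled with the fact that positivity of the charge of $\Omega$ forces rationality, i.e.\ $p_g=0$.

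Once this is in place, the computation is structural. Because $(f_1,\dots,f_k)$ is a regular sequence, $B$ is a graded complete intersection, hence Gorenstein and Cohen--Macaulay, with canonical module $\omega_B\cong B(a)$ where $a=|d|-|w|$. Graded local duality then gives $\dim_{\mathbb{C}}[H^{m}_{\mathfrak{m}}(B)]_{\ell}=\dim_{\mathbb{C}}(\omega_B)_{-\ell}=\dim_{\mathbb{C}}B_{a-\ell}$, so that
\begin{equation*}
p_g(X)=\sum_{\ell\ge 0}\dim_{\mathbb{C}}B_{a-\ell}=\sum_{0\le j\le |d|-|w|}\dim_{\mathbb{C}}B_{j}.
\end{equation*}
As a sanity check, when $k=1$ one has $\dim_{\mathbb{C}}B_j=\dim_{\mathbb{C}}R_j$ throughout the range $0\le j\le d_1-|w|$, and the sum recovers the classical count $p_g=\ell(d_1)$; moreover $a<0$ (i.e.\ $|w|>|d|$, positive charge of $\Omega$) forces the sum to be empty, consistent with rationality $\Leftrightarrow p_g=0$.

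It remains to expand $\dim_{\mathbb{C}}B_j$ and resum. The Koszul resolution of $B$ over $R$ along the regular sequence $(f_1,\dots,f_k)$ reads $0\to\textstyle\bigwedge^{k}F\to\cdots\to\bigwedge^{1}F\to R\to B\to 0$ with $F=\bigoplus_{j=1}^k R(-d_j)$, which yields the inclusion--exclusion
\begin{equation*}
\dim_{\mathbb{C}}B_j=\sum_{S\subseteq\{1,\dots,k\}}(-1)^{|S|}\dim_{\mathbb{C}}R_{j-d_S}.
\end{equation*}
Substituting into the previous display and interchanging the two sums, for each fixed $S$ the inner sum $\sum_{0\le j\le |d|-|w|}\dim_{\mathbb{C}}R_{j-d_S}$ equals $\#\{c\in\mathbb{Z}_{\ge 0}^{n}:\sum_i c_i w_i\le |d|-d_S-|w|\}$, and the substitution $b_i=c_i+1$ identifies this count with $\ell(|d|-d_S)$. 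Hence $p_g(X)=\sum_{S}(-1)^{|S|}\ell(|d|-d_S)$; grouping the subsets by $p=|S|$ produces exactly the stated alternating sum, the extreme term $(-1)^{k}\ell(0)$ dropping out because $\ell(0)=0$ (no tuple of positive integers has nonpositive weighted sum). The only genuinely delicate point is the first reduction; the remaining steps are the routine bookkeeping indicated here.
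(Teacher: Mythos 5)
The paper does not actually prove this statement: Theorem \ref{Ma} is quoted from Morales \cite{Ma} without argument, so there is no in-paper proof to compare against. Judged on its own terms, your derivation is sound, and its combinatorial half is correct bookkeeping: granting $p_g(X)=\dim_{\mathbb{C}}\bigl[H^{m}_{\mathfrak{m}}(B)\bigr]_{\ge 0}$, the chain consisting of graded local duality with $\omega_B\cong B(|d|-|w|)$, the Koszul resolution of $B$ over $R$, the interchange of sums, and the shift $b_i=c_i+1$ all check out; so do the reindexing $S\mapsto S^{c}$ that converts $\sum_S(-1)^{|S|}\ell(|d|-d_S)$ into the stated alternating sum, the vanishing $\ell(0)=0$ that kills the empty-set term, and the $k=1$ sanity check. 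This is very likely close to the route Morales himself takes (the cited note computes $p_g$ of quasi-homogeneous Cohen--Macaulay singularities via Hilbert functions).

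The one genuine gap is the step you yourself flag, and it deserves to be named precisely. The identity $p_g=\dim\bigl[H^{m}_{\mathfrak{m}}(B)\bigr]_{\ge 0}$ is \emph{not} obtainable by pushing the proposition of Section 3 (the comparison $H^{i}(\widetilde X,\mathcal{O})\cong H^{i+1}_{\{x\}}(X,\mathcal{O})$ for $1\le i\le n-2$) into the boundary degree: in degree $i=m-1$ the Andreotti--Grauert input only yields the \emph{injection} $H^{m-1}(\widetilde X,\mathcal{O})\hookrightarrow H^{m-1}(X\setminus\{0\},\mathcal{O})\cong H^{m}_{\mathfrak{m}}(B)$ and says nothing about which graded pieces the image occupies. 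Identifying the image with the non-negative part is exactly Watanabe's theorem on plurigenera of normal graded isolated singularities; its proof needs a $\mathbb{C}^{*}$-equivariant (orbifold) resolution $\widetilde X=\mathrm{Spec}_{Y}\bigoplus_{\ell\ge 0}\mathcal{O}_{Y}(\ell)$ over $Y=\mathrm{Proj}\,B$ together with the decomposition $H^{m}_{\mathfrak{m}}(B)\cong\bigoplus_{\ell\in\mathbb{Z}}H^{m-1}(Y,\mathcal{O}_{Y}(\ell))$, not merely the Stein-theoretic exact sequences of Section 3. Either cite that theorem at this point or supply the equivariant argument (which also disposes of the minor issue that the $w_i$ here are rational and must be rescaled to integers for the grading); with that reference in place your proof is complete.
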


The following observations plays key role in our proof.
\begin{lemma}\label{l2}
Let $X=V(f_1,f_2)$ be a weighted homogeneous ICIS of type $(w_1,\cdots,w_5;d_1,d_2)$. Then $X$ is rational if and only if $$w_1+\cdots+w_5>d_1+d_2.$$
\end{lemma}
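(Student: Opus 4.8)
The plan is to use the geometric genus formula of Theorem \ref{Ma} (the Khovanskii--Milnor--Morales type formula) together with the criterion that a normal isolated singularity is rational precisely when its geometric genus $p_g$ vanishes. The point is that for a weighted homogeneous ICIS of type $(w_1,\dots,w_5;d_1,d_2)$ of dimension three, rationality is equivalent to $p_g(X)=0$, and Theorem \ref{Ma} expresses $p_g$ as an alternating sum of lattice-point counts $\ell(d_1+d_2)-\ell(d_1)-\ell(d_2)$, where $\ell(N)=\#\{(a_1,\dots,a_5)\in\mathbb{N}^5 : a_i>0,\ \sum a_iw_i\le N\}$. So the entire statement reduces to the combinatorial claim that this alternating sum is zero if and only if $w_1+\cdots+w_5>d_1+d_2$.

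First I would make explicit the meaning of $\ell(N)$: it counts strictly positive integer lattice points lying in the weighted simplex $\{\sum a_iw_i\le N\}$, equivalently (shifting $a_i\mapsto a_i+1$) nonnegative lattice points in $\{\sum a_iw_i\le N-\sum w_i\}$. The combinatorial quantity $\ell(d_1+d_2)-\ell(d_1)-\ell(d_2)$ counts strictly positive lattice points in the region $\{d_1<\sum a_iw_i\le d_1+d_2\}\cap\{\sum a_iw_i>d_2\}$, or more symmetrically those points $(a_1,\dots,a_5)$ with all $a_i\ge 1$ satisfying $\sum a_iw_i>d_1$, $\sum a_iw_i>d_2$, and $\sum a_iw_i\le d_1+d_2$ with appropriate inclusion--exclusion bookkeeping. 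The heart of the argument is then to show that the inequality $\sum w_i>d_1+d_2$ is exactly the condition under which no such positive lattice point survives the alternating cancellation.

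For the easy direction ($\Leftarrow$), I would argue that if $\sum w_i > d_1+d_2$, then any strictly positive lattice point $(a_1,\dots,a_5)$ has $\sum a_iw_i\ge\sum w_i>d_1+d_2$, so already the innermost count $\ell(d_1+d_2)$ receives no contribution that is not cancelled — in fact one checks directly that each $\ell(\cdot)$ involved vanishes or the alternating sum telescopes to zero, giving $p_g=0$. For the converse ($\Rightarrow$), I would prove the contrapositive: assuming $\sum w_i\le d_1+d_2$, the point $(1,1,1,1,1)$ satisfies $\sum w_i\le d_1+d_2$ and contributes positively to $\ell(d_1+d_2)$, and I would verify it is not cancelled by the subtracted terms $\ell(d_1)$, $\ell(d_2)$ (using $\sum w_i$ possibly exceeding each individual $d_i$), forcing $p_g>0$ and hence non-rationality. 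This mirrors the $(3,0)$-form charge computation $\sum w_i-1-d$ from Section II, where positivity of the holomorphic form's charge is precisely the rationality condition.

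I expect the main obstacle to be the converse direction's bookkeeping: showing that $p_g>0$ strictly (not merely that some term is nonzero) when $\sum w_i\le d_1+d_2$ requires controlling the signs in the inclusion--exclusion and ruling out miraculous cancellation among the three lattice-point counts. The cleanest route is probably to reinterpret $p_g$ as the number of monomials $z^{a}$ (with all exponents $\ge 1$, i.e.\ divisible by $z_1\cdots z_5$) whose weighted degree lies in the window determined by the degrees, and then to exhibit $z_1z_2z_3z_4z_5$ itself as a surviving generator of $H^{2}(M,\mathcal{O})$ whenever the weight sum fails to exceed $d_1+d_2$, establishing $p_g\ge 1$. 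Making this monomial interpretation rigorous — identifying which lattice points genuinely contribute after cancellation — is where I would spend the most care.
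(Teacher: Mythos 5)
Your overall strategy coincides with the paper's: both reduce the lemma to the vanishing of $p_g(X)=\ell(d_1+d_2)-\ell(d_1)-\ell(d_2)$ via Theorem \ref{Ma}, and your argument for the direction $\sum w_i>d_1+d_2\Rightarrow p_g=0$ is exactly the paper's (every lattice point with all $a_i\ge 1$ has weighted degree at least $\sum w_i$, so all three counts vanish).

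The converse, however, has a genuine gap. Exhibiting the point $(1,1,1,1,1)$ in $A(d_1+d_2)$ only settles the case where in addition $\sum w_i>d_1$ and $\sum w_i>d_2$, for then $\ell(d_1)=\ell(d_2)=0$ and $p_g=\ell(d_1+d_2)\ge 1$. But nothing rules out, say, $\sum w_i\le d_1$ (all weights small, $d_1$ large); then $(1,1,1,1,1)$ lies in $A(d_1)$ as well and is subtracted back off, so your proposed ``surviving monomial $z_1z_2z_3z_4z_5$'' is precisely one of the cancelled terms and establishes nothing. What is actually needed is the strict inequality $\ell(d_1+d_2)\ge\ell(d_1)+\ell(d_2)+1$ whenever $\ell(d_1)>0$ or $\ell(d_2)>0$. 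The paper proves this by choosing $\mathbf{a}_{\max}\in A(d_1)$ of maximal weighted degree and observing that translation by $\mathbf{a}_{\max}$ injects $A(d_2)$ into $A(d_1+d_2)\setminus A(d_1)$ (maximality forces every translate to have weighted degree $>d_1$), while $\mathbf{a}_{\max}$ plus the $i$-th standard basis vector, for a coordinate with $w_i\le d_2$ (such an $i$ exists, else $f_2=0$), supplies one further point missed by both $A(d_1)$ and the translated copy of $A(d_2)$. Hence $p_g\ge 1$, contradicting rationality; so $\ell(d_1)=\ell(d_2)=0$, whence $\ell(d_1+d_2)=p_g=0$ forces $(1,1,1,1,1)\notin A(d_1+d_2)$, i.e.\ $\sum w_i>d_1+d_2$. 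Without this (or an equivalent) injection argument your contrapositive does not close. A secondary inaccuracy: $\ell(d_1+d_2)-\ell(d_1)-\ell(d_2)$ is not the number of lattice points in the region $\{d_1<\sum a_iw_i\le d_1+d_2,\ \sum a_iw_i>d_2\}$; that number is $\ell(d_1+d_2)-\ell(\max(d_1,d_2))$, and the alternating sum subtracts $\ell(\min(d_1,d_2))$ in addition --- which is exactly why the ``miraculous cancellation'' you worry about must be addressed head-on rather than deferred.
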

\begin{proof}
By Theorem \ref{Ma},we have $p_g(X)=\ell(d_1+d_2)-\ell(d_1)-\ell(d_2)$.
Thus $X$ is rational if and only if $\ell(d_1+d_2)-\ell(d_1)-\ell(d_2)=0.$  It is easy to see that $w_1+\cdots+w_5>d_1+d_2$ implies $\ell(d_1+d_2)=\ell(d_1)=\ell(d_2)=0$, it follows that $X$ is rational.
If $X$ is rational, and without lose of generality, we assume that $\ell(d_1)>0$, then $A(d_1)$ is not empty. Let $\textbf{a}_{\max}\in A(d_1)$ such that $\textbf{w}\textbf{a}_{\max}\ge \textbf{w} \textbf{a}$ for any $\textbf{a}\in A(d_1)$, where $\textbf{w} \textbf{a}=\sum\limits_{i = 1}^5 {w_ia_i}$ for $\textbf{a}=(a_1,\cdots, a_5)$. Then let $B= \{\textbf{a}_{\max}+\textbf{b} \mid \textbf{b} \in A(d_2)\}$ so we have  $d_1<\textbf{w} \textbf{c}\le  d_1+d_2$ for any $\textbf{c} \in B$ and $\#B=\ell(d_2)$. It is easy to seen that there exist $i \in \{1,2,3,4,5\}$ such that $w_i \le d_2$, because if not then $f_2=0$. Without lose of generality, we may assume that $w_1 \le d_2$. Let $\textbf{d}=\textbf{a}_{\max}+(1,0,0,0,0)$, then $d_1< \textbf{w} \textbf{d}\le  d_1+d_2$. Notice that $\textbf{d} \in A(d_1+d_2)\setminus (A(d_1)\cup B)$ and $A(d_1)\cap B=\emptyset$, thus we have $\ell(d_1+d_2)\ge \ell(d_1)+\ell(d_2)+1$. It follows that $p_g(X) \ge 1$. It contradicts with $X$ is rational, so we conclude that $\ell(d_1)=0$. Similarly we can prove that $\ell(d_2)=0$. So $p_g(X)=\ell(d_1+d_2)=0$ which implies $w_1+\cdots+w_5>d_1+d_2$.
\end{proof}

\begin{lemma}\label{l3}
Let $X=V(f_1,f_2)$ be a three dimensional weighted homogeneous ICIS of type $(w_1,\cdots, w_5;d_1,d_2)$. Then we have 
\\(1)~for any $i \in \{ 1,2,3,4,5 \}$, one of the following cases occurs:
  \\ ~ (1a)~$z_i^n$ appears in $f_1$ for some n,
  \\ ~ (1b)~$z_i^n$ appears in $f_2$ for some n,
  \\ ~ (1c)~there exist $j, k \in \{1,2,3,4,5\}\setminus \{i\} \; j \ne k$ such that $z_i^nz_j$ appears in $f_1$ for some $n$ and $z_i^mz_k$ appears in $f_2$ for some $m$.
\\
\\(2)~ for any $l=1,2$ and any $\{i,j\} \subset \{1,2,3,4,5\}$, one of the following cases occurs:
   \\~~~(2a)~$z_i^az_j^b$ appears in $f_l$, for some non-negative integers $a,b$,
   \\~~~(2b)~there exists $k \in \{1,2,3,4,5\} \setminus \{i,j\}$ such that $z_kz_i^az_j^b$ appears in $f_l$, for some non-negative integers $a,b$.
\\
\\(3)~ for any $\{i,j\} \subset \{1,2,3,4,5\}$, one of  the following cases occurs:
   \\~~~(3a)~$z_i^az_j^b$ appears in $f_1$  for some non-negative integer $a,b$,
   \\~~~(3b)~$z_i^az_j^b$ appears in $f_2$  for some non-negative integer $a,b$,
   \\~~~(3c)~there exist $\{p_1,p_2\},\{s_1,s_2\} \subset \{1,2,3,4,5\}\setminus \{i,j\}$ and $\{p_1,p_2\}\ne\{s_1,s_2\}$ such that $z_{p_1}z_i^{a_1} z_j^{b_1}, z_{p_2}z_i^{a_2}z_j^{b_2}$ appear in $f_1$ for some non-negative integers  $a_1,a_2,b_1,b_2$  and \\ $z_{s_1}z_i^{c_1}z_j^{d_1}, z_{s_2}z_i^{c_2} z_j^{d_2}$ appear in $f_2$ for some non-negative integers $c_1,c_2,d_1,d_2$.
\\
\\
(4)~ for any $l=1,2$ and any $\{i,j,k\}  \subset \{1,2,3,4,5\}$, let $\{p,s\}=\{1,2,3,4,5\}\setminus \{i,j,k\}$, then one of the following cases occurs:
   \\~~~(4a)~$z_i^az_j^bz_k^c$ appears in $f_l$, for some non-negative integers $a,b,c$,
   \\~~~(4b)~$z_pz_i^{a_1}z_j^{b_1} z_k^{c_1}$ and $z_s z_i^{a_2} z_j^{b_2} z_k^{c_2}$ appear in $f_l$, for some non-negative integers $a_1,b_1,c_1, a_2,b_2,c_2$.
\\
\\
(5)~for any $\{i,j,k\}\subset\{1,2,3,4,5\}$, there exists $l\in\{1,2\}$ such that $z_i^a z_j^b z_k^c$ appears in $f_l$ for some non-negative integers $a,b,c$.
\end{lemma}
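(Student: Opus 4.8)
The plan is to detect the singular locus of $X=V(f_1,f_2)\subset\mathbb{C}^5$ through the Jacobian criterion and to exploit that this locus is, by hypothesis, just the origin. Write $Df=(\partial f_l/\partial z_m)$ for the $2\times5$ Jacobian matrix; the singular set of the complete intersection is $\Sigma=\{z\in X:\operatorname{rank}Df(z)\le1\}$, and $X$ being an isolated singularity means $\Sigma=\{0\}$. Since by Corollary \ref{c1.1} the embedding dimension is exactly $5$, both $f_1$ and $f_2$ lie in $m^2$, so they carry no linear monomials; consequently every partial $\partial f_l/\partial z_m$ vanishes at $0$ and is weighted homogeneous of positive degree (or identically zero). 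The whole argument is the restriction of this picture to coordinate subspaces $L_I=\{z_m=0:m\notin I\}$, and the one bookkeeping fact I would record first is the form of the restricted partials: for $m\in I$, $\partial f_l/\partial z_m|_{L_I}=\partial(f_l|_{L_I})/\partial z_m$ sees only monomials of $f_l$ supported on $I$, while for a normal index $m\notin I$, $\partial f_l/\partial z_m|_{L_I}$ is the restriction of $\sum c\,z^\alpha$ over the monomials $z_m z^\alpha$ of $f_l$ linear in $z_m$ with $\operatorname{supp}\alpha\subseteq I$. Thus each monomial type named in the lemma corresponds precisely to a nonzero entry, in a prescribed column, of $Df$ restricted to a suitable $L_I$, and each ``missing monomial'' hypothesis forces a column or a row of that restricted Jacobian to vanish.

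I would then run the five cases in increasing order of $|I|$. For (1), take $I=\{i\}$, the $z_i$-axis: if neither (1a) nor (1b) holds then no pure power $z_i^n$ occurs, so $f_1$ and $f_2$ vanish on the whole axis, which therefore lies in $X$. The $i$-th column of the restricted Jacobian is then $0$, and its $j$-th column $(j\ne i)$ is nonzero exactly when some $z_i^n z_j$ occurs in the corresponding $f_l$; isolatedness forces $\operatorname{rank}Df=2$ off the origin along the axis, i.e.\ the two rows are non-proportional (in particular both nonzero). A short support argument then shows this is possible only if one can select a nonzero column for the first row and a \emph{different} nonzero column for the second --- which is exactly (1c). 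For (2), take $I=\{i,j\}$ and fix $l$. If both (2a) and (2b) fail for $f_l$, then all partials of $f_l$ restrict to $0$ on the $2$-plane $L_{\{i,j\}}$, so the $l$-th row of $Df$ vanishes identically there; since $X\cap L_{\{i,j\}}=V(f_{l'})\cap L_{\{i,j\}}$ has dimension at least $1$ and is contained in $\Sigma$, this contradicts isolatedness.

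The remaining three cases follow the same template but require a determinantal dimension count: the rank-$\le1$ locus of a $2\times c$ matrix of polynomials on $L_I$ is cut out by its $2\times2$ minors, hence has codimension at most $c-1$ in $L_I$. For (5), with $I=\{i,j,k\}$ and complement $\{p,s\}$, the failure of both alternatives gives $L_I\subset X$ (dimension $3$) with only the two normal columns $p,s$ surviving, so $c=2$; the single determinant $\partial_p f_1\,\partial_s f_2-\partial_p f_2\,\partial_s f_1$ is weighted homogeneous of positive degree, so its zero set meets $L_I$ in dimension at least $2$, a positive-dimensional piece of $\Sigma$. For (4), with one $f_l$ and $I=\{i,j,k\}$: if (4a) fails then $f_l|_{L_I}\equiv0$, and if (4b) also fails then, say, $\partial f_l/\partial z_s|_{L_I}\equiv0$, so the $l$-th row survives only in the single column $p$, with entry $g_p$. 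If moreover $f_{l'}|_{L_I}\equiv0$ we are back in the $c\le2$ situation of (5); otherwise $X\cap L_I=V(f_{l'})\cap L_I$ is a surface in the $3$-space $L_I$, and the two hypersurfaces $\{g_p=0\}$ and $\{f_{l'}|_{L_I}=0\}$ of $L_I\cong\mathbb{C}^3$ necessarily meet in dimension at least $1$; along that curve the $l$-th row vanishes, so $\operatorname{rank}Df\le1$, again contradicting isolatedness. Finally (3), with $I=\{i,j\}$ and both (3a),(3b) failing, gives $L_{\{i,j\}}\subset X$ with the three normal columns surviving $(c=3)$; here the codimension bound is only $\ge0$, so I would analyse the borderline by hand. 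Letting $P,Q$ be the sets of normal columns on which the two rows are supported, one checks that whenever $|P|\le1$, or $|Q|\le1$, or $P=Q$ with $|P|=2$, some $2\times2$ minor equals a single weighted-homogeneous polynomial of positive degree in $(z_i,z_j)$, whose zero locus is a curve in $L_{\{i,j\}}\subset X$; the surviving configurations of $(P,Q)$ are exactly those recorded in (3c).

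The genuinely delicate step is the determinantal bookkeeping in (3) and (4): one must confirm that the common zero locus of the relevant $2\times2$ minors is honestly positive-dimensional and honestly contained in $X$. The fact that the embedding dimension is $5$ is what makes this clean, since it removes all linear monomials and hence guarantees that every coefficient polynomial $\partial f_l/\partial z_m|_{L_I}$ is either zero or of positive degree; this rules out the degenerate possibility that a minor is a nonzero constant (which would give full rank everywhere) and reduces the analysis to the elementary facts that a nonzero weighted-homogeneous polynomial in two variables vanishes on a curve and that two hypersurfaces through the origin of $\mathbb{C}^3$ meet in positive dimension. Once these dimension counts are pinned down, every case terminates in the same contradiction with $\Sigma=\{0\}$.
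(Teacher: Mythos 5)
Your proposal is correct and follows essentially the same route as the paper: restrict to the coordinate subspace $L_I$, observe that the failure of the stated alternatives kills the relevant rows/columns of the restricted Jacobian, and exhibit a positive-dimensional subvariety of the rank-$\le 1$ locus (an axis with proportional gradient rows in (1), a vanishing row in (2), and the zero set of the surviving $2\times 2$ minor intersected with $X\cap L_I$ in (3)--(5)), contradicting isolatedness; your $(P,Q)$ case split in (3) is exactly the paper's subcases (a) and (b). The only difference is presentational: you package the argument uniformly via the determinantal codimension count and explicitly note that the absence of linear monomials rules out constant minors, a point the paper leaves implicit.
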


\begin{proof}
(1)~Without lose of generality, we may assume that $i=1$. Assume on the contrary  that neither of $(1a), (1b)$ and $(1c)$  occurs. Then $z_1^n$ does not appear in $f_l$ for any $l=1,2$ and integer $n$, so we have $f_1=f_2=\frac{{\partial {f_1}}}{{\partial {z_1}}}=\frac{{\partial {f_2}}}{{\partial {z_1}}}=0$ when $z_2=z_3=z_4=z_5=0$. And for any $\{j, k\}\subset \{2,3,4,5\}$, we have $z_1^az_j$ doesn't appear in $f_1$ for any non-negative integer $a$ or  $z_1^bz_k$ doesn't appear in $f_2$ for any non-negative integer $b$. It follows that $\frac{\partial f_1}{\partial z_j}=0$ or $\frac{\partial f_2}{\partial z_k}=0$ when $z_2=z_3=z_4=z_5=0$. Similarly we have  $\frac{\partial f_1}{\partial z_k}=0$ or $\frac{\partial f_2}{\partial z_j}=0$ when $z_2=z_3=z_4=z_5=0$. Thus we have 
$$
\frac{\partial f_1}{\partial z_j}\frac{\partial f_2}{\partial z_k}-\frac{\partial f_1}{\partial z_k}\frac{\partial f_2}{\partial z_j}=0,\;\,\forall \;\{j, k\}\subset \{2,3,4,5\},  z_2=z_3=z_4=z_5=0,
$$ 
which implies $(\frac{{\partial {f_1}}}{{\partial {z_1}}},\cdots,\frac{{\partial {f_1}}}{{\partial {z_5}}})$ and $(\frac{{\partial {f_2}}}{{\partial {z_1}}},\cdots,\frac{{\partial {f_2}}}{{\partial {z_5}}})$ are linear dependent. Thus $V(z_2, z_3, z_4, z_5)$, which has dimension one, is contained in the singular locus of $X$. This contradicts with $X$ has an isolated singularity.
\\
\\(2)~We may assume that $l=1$ and $i,j=1,2$. Assume  on the contrary  that neither of $(2a)$ and $(2b)$ occurs, then $z_1^a z_2^b$ does not appear in $f_1$, for any non-negative integers $a,b$ and $z_k z_1^a z_2^b$ does not appear in $f_1$. And for any $k \in \{3,4,5\}$ and for any non-negative integers $a,b$, we have $f_1=0$ and $(\frac{{\partial {f_1}}}{{\partial {z_1}}},\cdots,\frac{{\partial {f_1}}}{{\partial {z_5}}})=0$ when $z_3=z_4=z_5=0$. Thus $V(f_2, z_3,z_4,z_5)$, which has dimension at least one, is contained in the singular locus of $X$. This contradicts with $X$ has an isolated singularity.
\\
\\(3)~We may assume that $i,j=1, 2$. Assume  on the contrary  that neither of $(3a), (3b)$ and $(3c)$ occurs, then one of the following two cases occurs:
\\\textbf{subcase (a)}~$z_1^az_2^b$ does not appear in $f_q$ for any $q=1,2$ and any non-negative integer $a,b$, and there exist $l \in \{1,2\}, s,p \in\{3,4,5\} (s\ne p)$ such that both $z_sz_1^{a_1}z_2^{b_1}$ and $z_pz_1^{a_2}z_2^{b_2}$ do not appear in $f_l$ for any non-negative integer  $a_1,a_2,b_1,b_2$.
\\\textbf{subcase (b)}~$z_1^az_2^b$ does not appear in $f_q$ for any $q=1,2$ and any non-negative integer $a,b$, and there exists $k \in \{3,4,5\}$ such that $z_kz_1^a z_2^b$  does not appear in $f_q$ for any $q=1,2$ and any non-negative integer $ a,b$. 
\\If \textbf{subcase (a)} occurs, without lose of generality, we may assume that $l=1$ and $s,p=3,4$, then $f_1=f_2=\frac{{\partial {f_1}}}{{\partial {z_1}}}=\cdots=\frac{{\partial {f_1}}}{{\partial {z_4}}}=0$ when $z_3=z_4=z_5=0$. Thus $V(z_3,z_4,z_5,\frac{{\partial {f_1}}}{{\partial {z_5}}})$, which has dimension at least one, is contained in the singular locus of $X$. This contradicts with $X$ has an isolated singularity.
\\If \textbf{subcase (b)} occurs, without lose of generality, we may assume $k=3$. Then when $z_3=z_4=z_5=0$, we  have  $f_1=f_2=0$,  $(\frac{{\partial {f_1}}}{{\partial {z_1}}},\cdots,\frac{{\partial {f_1}}}{{\partial {z_5}}})=(0,0,0,\frac{{\partial {f_1}}}{{\partial {z_4}}},\frac{{\partial {f_1}}}{{\partial {z_5}}})$ and $(\frac{{\partial {f_2}}}{{\partial {z_1}}},\cdots,\frac{{\partial {f_2}}}{{\partial {z_5}}})=(0,0,0,\frac{{\partial {f_2}}}{{\partial {z_4}}},\frac{{\partial {f_2}}}{{\partial {z_5}}})$. Thus $V(z_3,z_4,z_5,\frac{{\partial {f_1}}}{{\partial {z_4}}}\frac{{\partial {f_2}}}{{\partial {z_5}}} - \frac{{\partial {f_2}}}{{\partial {z_4}}}\frac{{\partial {f_1}}}{{\partial {z_5}}})$, which has dimension at least one, is contained in the singular locus of $X$. This contradicts with $X$ has an isolated singularity.
\\
\\(4)~We may assume that $i,j,k=1,2,3$ and $l=1$. Assume  on the contrary that  neither of $(4a)$ and  $(4b)$  occurs, thus $z_1^a z_2^b z_3^c$ does not appear in $f_1$ for any non-negative integers $a,b,c$ and there exists $p \in\{4,5\}$ such that $z_p z_1^a z_2^b z_3^c$ does not appear in $f_1$ for any non-negative integers $a,b,c$. Without lose of generality, we may assume that $p=4$. Then  $f_1=\frac{{\partial {f_1}}}{{\partial {z_1}}}=\cdots=\frac{{\partial {f_1}}}{{\partial {z_4}}}=0$ when $z_4=z_5=0$. Thus $V(z_4,z_5,f_2,\frac{{\partial {f_1}}}{{\partial {z_5}}})$, which has dimension at least one, is contained in the singular locus of $X$. This contradicts with $X$ has an isolated singularity.
\\
\\(5)~We may assume that $i,j,k=1,2,3$.  Assume  on the contrary that  $z_1^a z_2^b z_3^c$ does not appear in $f_l$ for any $l=1,2$ and any non-negative integers $a,b,c$. Then when $z_4=z_5=0$ we have  $f_1=f_2=0$ and $(\frac{{\partial {f_1}}}{{\partial {z_1}}},\cdots,\frac{{\partial {f_1}}}{{\partial {z_5}}})=(0,0,0,\frac{{\partial {f_1}}}{{\partial {z_4}}},\frac{{\partial {f_1}}}{{\partial {z_5}}})$ and $(\frac{{\partial {f_2}}}{{\partial {z_1}}},\cdots,\frac{{\partial {f_2}}}{{\partial {z_5}}})=(0,0,0,\frac{{\partial {f_2}}}{{\partial {z_4}}},\frac{{\partial {f_2}}}{{\partial {z_5}}})$. Thus $V(z_4,z_5,\frac{{\partial {f_1}}}{{\partial {z_4}}}\frac{{\partial {f_2}}}{{\partial {z_5}}} - \frac{{\partial {f_2}}}{{\partial {z_4}}}\frac{{\partial {f_1}}}{{\partial {z_5}}})$, which has dimension at least two, is contained in the singular locus of $X$. This contradicts with $X$ has an isolated singularity.
\end{proof}

We define $N(a)=\{ka \mid k$ is non-negative integer$\}$, $N(a,b)=\{ka+sb \mid k,s$ are~non-negative~integers$\}$ and $N(a,b,c)=\{ka+sb+tc \mid k,s,t$ are~non-negative integers$\}$.
\begin{cor}\label{c2}
Let $X=V(f_1,f_2)$ be a weighted homogeneous ICIS of type
\\ $(w_1,\cdots,w_5;d_1,d_2)$, then we have:
\\(1)~for any $i \in \{1,2,3,4,5\}$, we have $d_1 \in N(w_i)$ or $d_2 \in N(w_i)$ or there exist $j ,k\in  \{1,2,3,4,5\}\setminus\{i\},\, j \ne k$ such that $d_1-w_j \in N(w_i)$ and $d_2-w_k \in N(w_i)$.
\\(2)~for any $\{i,j\} \subset \{1,2,3,4,5\}$ and any $l \in \{1,2\}$, we have $d_l \in N(w_i,w_j)$ or there exists $k\in \{1,2,3,4,5\}\setminus\{i,j\}$ such that $d_l-w_k\in N(w_i,w_j)$.
\\(3)~for any $\{i,j\} \subset \{1,2,3,4,5\}$, if $d_1,d_2 \notin N(w_i,w_j)$, then  there exist $\{p_1,p_2\},\{s_1,s_2\} \subset \{1,2,3,4,5\}\setminus \{i,j\}$ and $\{p_1,p_2\} \ne \{s_1,s_2\}$ such that $d_1-w_{p_1}$, $d_1-w_{p_2} \in N(w_i,w_j)$  and $d_2-w_{s_1}$, $d_2-w_{s_2}\in N(w_i,w_j)$.
\\(4)~for any $\{i,j,k\} \subset \{1,2,3,4,5\}$  and any $l \in \{1,2\}$, let $\{p,s\}=\{1,2,3,4,5\}\setminus \{i,j,k\}$,  if $d_l \notin N(w_i,w_j,w_k)$, then we have  $d_l-w_p, d_l-w_s \in N(w_i,w_j,w_k)$.
\\(5)~for any $\{i,j,k\} \subset \{1,2,3,4,5\}$, we have $d_1 \in N(w_i,w_j,w_k)$ or $d_2 \in N(w_i,w_j,w_k)$.
\end{cor}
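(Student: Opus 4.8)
The plan is to read Corollary \ref{c2} off from Lemma \ref{l3} directly, via a dictionary between monomials and membership in the sets $N(\cdot)$. Since $f_l$ is weighted homogeneous of degree $d_l$ with respect to $(w_1,\dots,w_5)$, a monomial $z_1^{a_1}\cdots z_5^{a_5}$ can occur in $f_l$ only if $\sum_{i=1}^5 a_i w_i = d_l$. Hence asking whether a monomial of a prescribed shape occurs in $f_l$ is the same as asking whether the exponents left free can be taken to be non-negative integers making the total weight equal $d_l$. For instance, $z_i^n$ occurs in $f_l$ for some $n$ iff $n w_i = d_l$ is solvable in non-negative integers, i.e. iff $d_l \in N(w_i)$; and $z_i^n z_j$ occurs in $f_l$ iff $n w_i = d_l - w_j$ is solvable, i.e. iff $d_l - w_j \in N(w_i)$. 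More generally $z_k z_i^a z_j^b$ occurs in $f_l$ for some $a,b$ iff $d_l - w_k \in N(w_i,w_j)$, and $z_p z_i^{a}z_j^{b}z_k^{c}$ occurs in $f_l$ iff $d_l - w_p \in N(w_i,w_j,w_k)$, and so on.

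With this dictionary, each clause of Corollary \ref{c2} is simply the arithmetic reformulation of the correspondingly numbered clause of Lemma \ref{l3}. For part (1), the alternatives (1a), (1b), (1c) of the lemma translate respectively into $d_1 \in N(w_i)$, $d_2 \in N(w_i)$, and the existence of $j \ne k$ with $d_1 - w_j \in N(w_i)$ and $d_2 - w_k \in N(w_i)$; since Lemma \ref{l3}(1) guarantees one of the three always holds, so does the disjunction in Corollary \ref{c2}(1). Part (2) is the translation of Lemma \ref{l3}(2): (2a) becomes $d_l \in N(w_i,w_j)$ and (2b) becomes $d_l - w_k \in N(w_i,w_j)$ for some $k \notin \{i,j\}$. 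Part (5) is handled identically, with (5) of the lemma becoming $d_1 \in N(w_i,w_j,w_k)$ or $d_2 \in N(w_i,w_j,w_k)$.

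For parts (3) and (4) I would exploit the contrapositive form in which the corollary is stated. By the dictionary, (3a) and (3b) of the lemma read $d_1 \in N(w_i,w_j)$ and $d_2 \in N(w_i,w_j)$, while (3c) says exactly that there exist distinct pairs $\{p_1,p_2\},\{s_1,s_2\} \subset \{1,\dots,5\}\setminus\{i,j\}$ with $d_1 - w_{p_1}, d_1 - w_{p_2} \in N(w_i,w_j)$ and $d_2 - w_{s_1}, d_2 - w_{s_2} \in N(w_i,w_j)$. As Lemma \ref{l3}(3) asserts at least one of (3a),(3b),(3c), the hypothesis $d_1, d_2 \notin N(w_i,w_j)$ rules out (3a) and (3b) and forces (3c), which is precisely the required conclusion. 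Likewise for (4): $d_l \notin N(w_i,w_j,w_k)$ excludes (4a), so (4b) must hold, yielding $d_l - w_p, d_l - w_s \in N(w_i,w_j,w_k)$.

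The argument is routine once the dictionary is fixed; I expect no substantive obstacle. The only point needing a little care is the bookkeeping that separates a variable singled out with exponent one from the remaining variables carrying arbitrary non-negative exponents, together with verifying that the ``for some $n$'' (and ``for some $a,b,c$'') existentials in Lemma \ref{l3} correspond verbatim to membership in the relevant $N(\cdot)$. Beyond this careful matching of the monomial formulation against the arithmetic one, nothing further is required.
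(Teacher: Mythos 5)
Your proof is correct and is exactly the argument the paper intends: Corollary \ref{c2} is stated without proof precisely because it is the verbatim arithmetic translation of Lemma \ref{l3} via weighted homogeneity, which is what you carry out. The only caution is that your ``dictionary'' is really a one-way implication (a monomial occurring in $f_l$ forces the weight condition, but not conversely, since a coefficient may vanish); this does not affect the proof, because each part of the corollary only uses the forward direction applied to the disjunction guaranteed by the lemma.
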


\begin{theorem}
Let $X=V(f_1,f_2)$ be a weighted homogeneous ICIS of type $(w_1,\cdots,w_5;1,d)$,\\with $d \ge 1$. Then $(f_1,f_2)$ has the same type as one of the following weight homogeneous singularities in the list up to permutation of coordinates.  
\end{theorem}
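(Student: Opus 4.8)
The plan is to classify the type $(w_1,\dots,w_5;1,d)$ by exploiting the fact that $\deg f_1=1$ is as small as possible, which rigidly constrains both the admissible weights and the shape of $f_1$. Since every monomial of $f_1$ has weight exactly $1$, a variable $z_i$ can occur in $f_1$ only if $w_i\le 1$, and it can occur linearly (as the bare monomial $z_i$) exactly when $w_i=1$; indeed a monomial of weight $1$ that is divisible by a weight-$1$ variable must equal that variable. This dichotomy — whether or not $f_1$ carries a linear term — is what I would use to organize the whole argument.

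First I would treat the case in which some variable, say $z_5$ with $w_5=1$, appears linearly in $f_1$. By the remark above one can write $f_1=c\,z_5+h(z_1,\dots,z_4)$ with $c\ne 0$ and $h$ weighted homogeneous of degree $1$, so a weighted-homogeneous change of coordinates normalizes $f_1$ to the single coordinate $z_5$. Eliminating $z_5$ from $f_2$ then exhibits $V(f_1,f_2)$ as isomorphic to the hypersurface $\{\tilde f_2=0\}\subset\mathbb{C}^4$, where $\tilde f_2$ is weighted homogeneous of type $(w_1,\dots,w_4;d)$. The isolated-singularity and rationality properties transfer across this isomorphism, so $\{\tilde f_2=0\}$ is a three-dimensional rational weighted-homogeneous isolated hypersurface singularity; these are already classified in \cite{YY,YY1}, and reading that list back through $f_1=z_5$ produces exactly the entries of our list in which $f_1$ is linear.

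The remaining case, in which $f_1$ carries no linear term, is where the real work lies: now every variable occurring in $f_1$ has weight strictly less than $1$, each monomial of $f_1$ has degree at least two, and the purely quadratic situation $w_1=\dots=w_5=\tfrac12$ (intersections of two quadrics, the $D$-type families) is the extreme representative. To pin down the admissible weight systems I would feed $d_1=1$ into Corollary \ref{c2}: part (1) forces, for each $i$, that either $1\in N(w_i)$, equivalently $w_i=1/m$ for some positive integer $m$, or $d\in N(w_i)$, or that $1-w_j\in N(w_i)$ and $d-w_k\in N(w_i)$ for some $j\ne k$; and part (5) forces, for every triple, that $1$ or $d$ lie in $N(w_i,w_j,w_k)$. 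Together with the rationality inequality $w_1+\dots+w_5>1+d$ from Lemma \ref{l2}, these are strong enough Diophantine constraints to cut the possibilities down to finitely many weight shapes together with $d$-indexed infinite sequences. For each surviving shape I would then invoke Lemma \ref{l3} to decide precisely which monomials must appear in $f_1$ and $f_2$, reduce $(f_1,f_2)$ to a normal form modulo weighted-homogeneous coordinate changes and permutations of equal-weight coordinates, and discard via Lemma \ref{l2} any shape that fails rationality.

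I expect the main obstacle to be completeness and non-redundancy of the enumeration rather than any single hard estimate. One must ensure that every weight system compatible with Corollary \ref{c2} and Lemma \ref{l2} is produced, that each impossible system is genuinely excluded by the isolated-singularity conditions of Lemma \ref{l3}, and that distinct entries of the final list are not secretly equivalent under permutation or under an admissible weighted coordinate change. Controlling the interaction between the parameter $d$ and the weights — so that the infinite sequences are captured cleanly while the sporadic cases are isolated — is where I anticipate the bulk of the bookkeeping will concentrate.
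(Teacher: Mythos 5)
Your main case is essentially the paper's argument: Corollary \ref{c2}(1) with $d_1=1$ is just the numerical translation of Lemma \ref{l3}(1), and the paper's proof likewise runs through the $3^5$ cases that condition generates, combines the resulting relations with the rationality inequality $w_1+\cdots+w_5>1+d$ of Lemma \ref{l2}, and matches the surviving weight types against the list. Two caveats, though. First, your opening reduction is vacuous: the germs being classified here have embedding dimension five (Corollary \ref{c1.1}), so both $f_1$ and $f_2$ have vanishing linear part, no variable of weight $1$ can occur in $f_1$, and there are \emph{no} entries of the list in which $f_1$ is linear. The case you propose to dispatch via the hypersurface classification of \cite{YY,YY1} is exactly the hypersurface situation that has already been split off, so it contributes nothing; asserting that it ``produces exactly the entries of our list in which $f_1$ is linear'' is a misreading of the list.

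Second, and more substantively, you leave the finiteness mechanism implicit. What actually makes the enumeration terminate is that the monomials forced by Lemma \ref{l3} supply six independent linear equations in the six unknowns $(w_1,\dots,w_5,d)$, so the weight type becomes a function of a tuple of exponents, and the constraints $d\ge 1$ and $w_1+\cdots+w_5>1+d$ then bound those exponents. When case (1c) holds for some $i$ one already has at least six forced monomials; when (1a) or (1b) holds for every $i$ one gets only five pure powers, and the paper must invoke Lemma \ref{l3}(2) to manufacture a sixth. Your appeal to Corollary \ref{c2}(5) could serve the same purpose, but your write-up orders the steps backwards (weights first from ``Diophantine constraints,'' monomials afterwards) and never makes the count of independent relations explicit, so it is not clear from the proposal why only finitely many weight shapes plus parametrized families survive. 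With that counting supplied, your plan is the paper's proof.
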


\begin{rem}
We also list the Milnor number  $\mu$ and the vector basis of the miniversal deformation of the singularities in the list. In order to save space, we only list the set of maximum elements (i.e. \textbf{mini})  of the vector basis of the corresponding singularity. That is, $\{[a,0] \mid a\in \textbf{m},\; \exists \; [b,0] \in \textbf{mini}~s.t.~b \ge  a\} \cup \{[0,a] \mid  a\in \textbf{m},\; \exists \; [0,b] \in \textbf{mini}~s.t.~b \ge  a\}$ form a basis of miniversal deformation where $\textbf{m}=$ ideal $(z_1,\cdots,z_5)$.  A monomial $z_1^{a_1}z_2^{a_2}z_3^{a_3}z_4^{a_4}z_5^{a_5}\le z_1^{b_1}z_2^{b_2}z_3^{b_3}z_4^{b_4}z_5^{b_5} $ if and only if $a_i\leq b_i$ for $i=1,\cdots,5$. For example, if $\textbf{mini}=\{[z_1^3,0],[z_1z_2^2,0],[0,z_3z_4]\}$, then $$\{[0,1],[1,0],[z_1,0], [z_1^2,0], [z_1^3,0], [z_2,0], [z_2^2,0], [z_1z_2,0], [z_1z_2^2,0], [0,z_3], [0,z_4], [0,z_3z_4]\}$$ form a vector basis of miniversal deformation.
\end{rem}
\leavevmode \\
\begin{flalign*}
\begin{split}
(1)\left\{ \begin{array}{l} z_1^{2}+z_2^{2}+z_3^{2}+z_4^{2}+z_5^{n}=0\\z_1^{2}+2z_2^{2}+3z_3^{2}+4z_4^{2}+5z_5^{n}=0\end{array}\right.
\end{split}&
\end{flalign*}
\\$n\ge2$
\\$(w_1,w_2,w_3,w_4,w_5;1,d)=(\frac{1}{2},\frac{1}{2},\frac{1}{2},\frac{1}{2},\frac{1}{n};1,1)$
\\$\mu=-7 + 8 n$
\begin{flalign*}
\begin{split}
mini=\{[z_5^{-2 + 2 n},0],[z_4^{2}z_5^{-2 + n},0],[z_3z_5^{-2 + n},0],[z_2z_5^{-2 + n},0],[z_1z_5^{-2 + n},0],[0,z_5^{-2 + n}]\}
\end{split}&
\end{flalign*}
\\(2)$\left\{ \begin{array}{l} z_1^{2}+z_2^{2}+z_3^{2}+z_4^{3}+z_5^{3}=0\\z_1^{2}+2z_2^{2}+3z_3^{2}+4z_4^{3}+5z_5^{3}=0\end{array}\right.$
\\$(w_1,w_2,w_3,w_4,w_5;1,d)=(\frac{1}{2},\frac{1}{2},\frac{1}{2},\frac{1}{3},\frac{1}{3};1,1)$
\\$\mu=32$
\begin{flalign*}
\begin{split}
mini=\{[z_4z_5^{4},0],[z_4^{4}z_5,0],[z_3z_4z_5,0],[z_2z_4z_5,0],[z_1z_4z_5,0],[0,z_4z_5]\}
\end{split}&
\end{flalign*}
\\(3)$\left\{ \begin{array}{l} z_1^{2}+z_2^{2}+z_3^{2}+z_4^{3}+z_5^{4}=0\\z_1^{2}+2z_2^{2}+3z_3^{2}+4z_4^{3}+5z_5^{4}=0\end{array}\right.$
\\$(w_1,w_2,w_3,w_4,w_5;1,d)=(\frac{1}{2},\frac{1}{2},\frac{1}{2},\frac{1}{3},\frac{1}{4};1,1)$
\\$\mu=47$
\begin{flalign*}
\begin{split}
mini=\{[z_4z_5^{6},0],[z_4^{4}z_5^{2},0],[z_3z_4z_5^{2},0],[z_2z_4z_5^{2},0],[z_1z_4z_5^{2},0],[0,z_4z_5^{2}]\}
\end{split}&
\end{flalign*}
\\(4)$\left\{ \begin{array}{l} z_1^{2}+z_2^{2}+z_3^{2}+z_4^{3}+z_5^{5}=0\\z_1^{2}+2z_2^{2}+3z_3^{2}+4z_4^{3}+5z_5^{5}=0\end{array}\right.$
\\$(w_1,w_2,w_3,w_4,w_5;1,d)=(\frac{1}{2},\frac{1}{2},\frac{1}{2},\frac{1}{3},\frac{1}{5};1,1)$
\\$\mu=62$
\begin{flalign*}
\begin{split}
mini=\{[z_4z_5^{8},0],[z_4^{4}z_5^{3},0],[z_3z_4z_5^{3},0],[z_2z_4z_5^{3},0],[z_1z_4z_5^{3},0],[0,z_4z_5^{3}]\}
\end{split}&
\end{flalign*}
\\(5)$\left\{ \begin{array}{l} z_1z_3+z_2^{2}+z_4^{2}=0\\z_1^{2}+z_3^{4}+z_5^{3}+z_2^{2}z_3=0\end{array}\right.$
\\$(w_1,w_2,w_3,w_4,w_5;1,d)=(\frac{2}{3},\frac{1}{2},\frac{1}{3},\frac{1}{2},\frac{4}{9};1,\frac{4}{3})$
\\$\mu=45$
\begin{flalign*}
\begin{split}
mini=\{[z_5^{2},0],[z_3^{2}z_5,0],[0,z_3^{2}z_4^{2}z_5],[0,z_3^{3}z_4z_5],[0,z_2z_4z_5],[0,z_3^{6}z_5],[0,z_2z_3^{3}z_5]\}
\end{split}&
\end{flalign*}
\\(6)$\left\{ \begin{array}{l} z_1z_3+z_2^{2}+z_5^{3}=0\\z_1^{2}+z_4^{3}+z_3^{2}z_5+z_2z_5^{2}=0\end{array}\right.$
\\$(w_1,w_2,w_3,w_4,w_5;1,d)=(\frac{7}{12},\frac{1}{2},\frac{5}{12},\frac{7}{18},\frac{1}{3};1,\frac{7}{6})$
\\$\mu=50$
\begin{flalign*}
\begin{split}
mini=\{[z_4z_5^{5},0],[z_3z_4z_5^{2},0],[z_4^{2}z_5,0],[z_3^{2}z_4z_5,0],[0,z_4z_5^{5}],[0,z_3z_4z_5^{3}],[0,z_2z_4z_5],[0,z_2z_3z_4]\}
\end{split}&
\end{flalign*}
\\(7)$\left\{ \begin{array}{l} z_1z_3+z_4z_5=0\\z_1^{2}+z_2^{2}+z_3^{n}+z_4^{2}+2z_5^{n}=0\end{array}\right.$
\\$n\ge3$
\\$(w_1,w_2,w_3,w_4,w_5;1,d)=(\frac{n}{2 + n},\frac{n}{2 + n},\frac{2}{2 + n},\frac{n}{2 + n},\frac{2}{2 + n};1,\frac{2 n}{2 + n})$
\\$\mu=(1 + n)^2$
\begin{flalign*}
\begin{split}
mini=\{[z_5^{n},0],[z_3z_5^{-2 + n},0],[z_2,0],[0,z_5^{-2 + 2 n}],[0,z_3z_5^{-4 + 2 n}],[0,z_3^{-2 + n}z_5^{-2 + n}],[0,z_3^{-1 + n}]\}
\end{split}&
\end{flalign*}
\\(8)$\left\{ \begin{array}{l} z_1z_3+z_4z_5=0\\z_1^{2}+z_2^{2}+z_3z_4^{2}+z_3^{3 n}+2z_5^{2 n}=0\end{array}\right.$
\\$n\ge2$
\\$(w_1,w_2,w_3,w_4,w_5;1,d)=(\frac{3 n}{2 + 3 n},\frac{3 n}{2 + 3 n},\frac{2}{2 + 3 n},\frac{-1 + 3 n}{2 + 3 n},\frac{3}{2 + 3 n};1,\frac{6 n}{2 + 3 n})$
\\$\mu=3 + 7 n + 6 n^2$
\begin{flalign*}
\begin{split}
mini=\{[z_3,0],[z_2,0],[0,z_3z_5^{-1 + 2 n}],[0,z_3^{-2 + 3 n}z_5^{-2 + 2 n}],[0,z_3^{-2 + 3 n}z_4z_5],[0,z_3^{-5 + 6 n}z_5],[0,z_4^{2}],\\ [0,z_3^{-1 + 3 n}z_4],[0,z_3^{-2 + 6 n}]\}
\end{split}&
\end{flalign*}
\\(9)$\left\{ \begin{array}{l} z_1z_3+z_4z_5=0\\z_1^{2}+z_2^{2}+z_3z_4^{2}+z_3^{1 + 3 n}+2z_3z_5^{2 n}=0\end{array}\right.$
\\$n\ge1$
\\$(w_1,w_2,w_3,w_4,w_5;1,d)=(\frac{1 + 3 n}{3 (1 + n)},\frac{1 + 3 n}{3 (1 + n)},\frac{2}{3 (1 + n)},\frac{n}{1 + n},\frac{1}{1 + n};1,\frac{2 (1 + 3 n)}{3 (1 + n)})$
\\$\mu=6 + 11 n + 6 n^2$
\begin{flalign*}
\begin{split}
mini=\{[z_3,0],[z_2,0],[0,z_5^{2 n}],[0,z_3^{3}z_5^{-1 + 2 n}],[0,z_3^{3 n}z_5^{-2 + 2 n}],[0,z_3^{-1 + 3 n}z_4z_5],[0,z_3^{-3 + 6 n}z_5],[0,z_4^{2}],\\ [0,z_3^{3 n}z_4],[0,z_3^{6 n}]\}
\end{split}&
\end{flalign*}
\\(10)$\left\{ \begin{array}{l} z_1z_3+z_4z_5=0\\z_1^{2}+z_2^{2}+z_3z_4^{3}+z_3^{16}+z_5^{4}=0\end{array}\right.$
\\$(w_1,w_2,w_3,w_4,w_5;1,d)=(\frac{8}{9},\frac{8}{9},\frac{1}{9},\frac{5}{9},\frac{4}{9};1,\frac{16}{9})$
\\$\mu=127$
\begin{flalign*}
\begin{split}
mini=\{[z_3,0],[z_2,0],[0,z_3^{6}z_5^{3}],[0,z_3^{14}z_5^{2}],[0,z_3^{14}z_4z_5],[0,z_3^{22}z_5],[0,z_4^{3}],[0,z_3^{15}z_4^{2}],[0,z_3^{30}]\}
\end{split}&
\end{flalign*}
\\(11)$\left\{ \begin{array}{l} z_1z_3+z_4z_5=0\\z_1^{2}+z_2^{2}+z_3z_4^{3}+z_3^{40}+z_5^{5}=0\end{array}\right.$
\\$(w_1,w_2,w_3,w_4,w_5;1,d)=(\frac{20}{21},\frac{20}{21},\frac{1}{21},\frac{13}{21},\frac{8}{21};1,\frac{40}{21})$
\\$\mu=358$
\begin{flalign*}
\begin{split}
mini=\{[z_3,0],[z_2,0],[0,z_3^{38}z_5^{4}],[0,z_3^{38}z_4z_5],[0,z_4^{3}],[0,z_3^{39}z_4^{2}],[0,z_3^{78}]\}
\end{split}&
\end{flalign*}
\\(12)$\left\{ \begin{array}{l} z_1z_3+z_4z_5=0\\z_1^{2}+z_2^{2}+z_3z_4^{3}+z_3^{10}+z_3z_5^{3}=0\end{array}\right.$
\\$(w_1,w_2,w_3,w_4,w_5;1,d)=(\frac{5}{6},\frac{5}{6},\frac{1}{6},\frac{1}{2},\frac{1}{2};1,\frac{5}{3})$
\\$\mu=73$
\begin{flalign*}
\begin{split}
mini=\{[z_3,0],[z_2,0],[0,z_5^{3}],[0,z_3^{6}z_5^{2}],[0,z_3^{8}z_4z_5],[0,z_3^{12}z_5],[0,z_4^{3}],[0,z_3^{6}z_4^{2}],[0,z_3^{12}z_4],[0,z_3^{18}]\}
\end{split}&
\end{flalign*}
\\(13)$\left\{ \begin{array}{l} z_1z_3+z_4z_5=0\\z_1^{2}+z_2^{2}+z_3z_4^{3}+z_3^{19}+z_3z_5^{4}=0\end{array}\right.$
\\$(w_1,w_2,w_3,w_4,w_5;1,d)=(\frac{19}{21},\frac{19}{21},\frac{2}{21},\frac{4}{7},\frac{3}{7};1,\frac{38}{21})$
\\$\mu=155$
\begin{flalign*}
\begin{split}
mini=\{[z_3,0],[z_2,0],[0,z_5^{4}],[0,z_3^{9}z_5^{3}],[0,z_3^{18}z_5^{2}],[0,z_3^{17}z_4z_5],[0,z_3^{27}z_5],[0,z_4^{3}],[0,z_3^{18}z_4^{2}],[0,z_3^{36}]\}
\end{split}&
\end{flalign*}
\\(14)$\left\{ \begin{array}{l} z_1z_3+z_4z_5=0\\z_1^{2}+z_2^{2}+z_3z_4^{3}+z_3^{46}+z_3z_5^{5}=0\end{array}\right.$
\\$(w_1,w_2,w_3,w_4,w_5;1,d)=(\frac{23}{24},\frac{23}{24},\frac{1}{24},\frac{5}{8},\frac{3}{8};1,\frac{23}{12})$
\\$\mu=417$
\begin{flalign*}
\begin{split}
mini=\{[z_3,0],[z_2,0],[0,z_5^{5}],[0,z_3^{45}z_5^{4}],[0,z_3^{44}z_4z_5],[0,z_4^{3}],[0,z_3^{45}z_4^{2}],[0,z_3^{90}]\}
\end{split}&
\end{flalign*}
\\(15)$\left\{ \begin{array}{l} z_1z_3+z_4z_5=0\\z_1^{2}+z_2^{2}+z_4^{3}+z_3^{6}+z_5^{3}=0\end{array}\right.$
\\$(w_1,w_2,w_3,w_4,w_5;1,d)=(\frac{3}{4},\frac{3}{4},\frac{1}{4},\frac{1}{2},\frac{1}{2};1,\frac{3}{2})$
\\$\mu=39$
\begin{flalign*}
\begin{split}
mini=\{[z_3,0],[z_2,0],[0,z_3^{2}z_5^{2}],[0,z_3^{4}z_4z_5],[0,z_3^{6}z_5],[0,z_3^{2}z_4^{2}],[0,z_3^{6}z_4],[0,z_3^{10}]\}
\end{split}&
\end{flalign*}
\\(16)$\left\{ \begin{array}{l} z_1z_3+z_4z_5=0\\z_1^{2}+z_2^{2}+z_4^{3}+z_3^{12}+z_5^{4}=0\end{array}\right.$
\\$(w_1,w_2,w_3,w_4,w_5;1,d)=(\frac{6}{7},\frac{6}{7},\frac{1}{7},\frac{4}{7},\frac{3}{7};1,\frac{12}{7})$
\\$\mu=92$
\begin{flalign*}
\begin{split}
mini=\{[z_3,0],[z_2,0],[0,z_3^{4}z_5^{3}],[0,z_3^{10}z_5^{2}],[0,z_3^{10}z_4z_5],[0,z_3^{16}z_5],[0,z_3^{10}z_4^{2}],[0,z_3^{22}]\}
\end{split}&
\end{flalign*}
\\(17)$\left\{ \begin{array}{l} z_1z_3+z_4z_5=0\\z_1^{2}+z_2^{2}+z_4^{3}+z_3^{30}+z_5^{5}=0\end{array}\right.$
\\$(w_1,w_2,w_3,w_4,w_5;1,d)=(\frac{15}{16},\frac{15}{16},\frac{1}{16},\frac{5}{8},\frac{3}{8};1,\frac{15}{8})$
\\$\mu=265$
\begin{flalign*}
\begin{split}
mini=\{[z_3,0],[z_2,0],[0,z_3^{28}z_5^{4}],[0,z_3^{28}z_4z_5],[0,z_3^{28}z_4^{2}],[0,z_3^{58}]\}
\end{split}&
\end{flalign*}
\\(18)$\left\{ \begin{array}{l} z_1z_3+z_4z_5=0\\z_1^{2}+z_2^{2}+z_4^{3}+z_3^{8}+z_3z_5^{3}=0\end{array}\right.$
\\$(w_1,w_2,w_3,w_4,w_5;1,d)=(\frac{4}{5},\frac{4}{5},\frac{1}{5},\frac{8}{15},\frac{7}{15};1,\frac{8}{5})$
\\$\mu=56$
\begin{flalign*}
\begin{split}
mini=\{[z_3,0],[z_2,0],[0,z_5^{3}],[0,z_3^{4}z_5^{2}],[0,z_3^{6}z_4z_5],[0,z_3^{9}z_5],[0,z_3^{4}z_4^{2}],[0,z_3^{9}z_4],[0,z_3^{14}]\}
\end{split}&
\end{flalign*}
\\(19)$\left\{ \begin{array}{l} z_1z_3+z_4z_5=0\\z_1^{2}+z_2^{2}+z_4^{3}+z_3^{15}+z_3z_5^{4}=0\end{array}\right.$
\\$(w_1,w_2,w_3,w_4,w_5;1,d)=(\frac{15}{17},\frac{15}{17},\frac{2}{17},\frac{10}{17},\frac{7}{17};1,\frac{30}{17})$
\\$\mu=120$
\begin{flalign*}
\begin{split}
mini=\{[z_3,0],[z_2,0],[0,z_5^{4}],[0,z_3^{7}z_5^{3}],[0,z_3^{14}z_5^{2}],[0,z_3^{13}z_4z_5],[0,z_3^{21}z_5],[0,z_3^{13}z_4^{2}],[0,z_3^{28}]\}
\end{split}&
\end{flalign*}
\\(20)$\left\{ \begin{array}{l} z_1z_3+z_4z_5=0\\z_1^{2}+z_2^{2}+z_4^{3}+z_3^{36}+z_3z_5^{5}=0\end{array}\right.$
\\$(w_1,w_2,w_3,w_4,w_5;1,d)=(\frac{18}{19},\frac{18}{19},\frac{1}{19},\frac{12}{19},\frac{7}{19};1,\frac{36}{19})$
\\$\mu=324$
\begin{flalign*}
\begin{split}
mini=\{[z_3,0],[z_2,0],[0,z_5^{5}],[0,z_3^{35}z_5^{4}],[0,z_3^{34}z_4z_5],[0,z_3^{34}z_4^{2}],[0,z_3^{70}]\}
\end{split}&
\end{flalign*}
\\(21)$\left\{ \begin{array}{l} z_1z_3+z_4z_5=0\\z_1^{2}+z_2^{2}+z_3z_4^{2}+z_3^{3 n}z_5+2z_5^{1 + 2 n}=0\end{array}\right.$
\\$n\ge 1$
\\$(w_1,w_2,w_3,w_4,w_5;1,d)=(\frac{3 (1 + 2 n)}{7 + 6 n},\frac{3 (1 + 2 n)}{7 + 6 n},\frac{4}{7 + 6 n},\frac{1 + 6 n}{7 + 6 n},\frac{6}{7 + 6 n};1,\frac{6 (1 + 2 n)}{7 + 6 n})$
\\$\mu=8 + 13 n + 6 n^2$
\begin{flalign*}
\begin{split}
mini=\{[z_3,0],[z_2,0],[0,z_3z_5^{2 n}],[0,z_3^{-2 + 3 n}z_5^{-1 + 2 n}],[0,z_3^{-5 + 6 n}z_5^{2}],[0,z_3^{-1 + 3 n}z_4z_5],[0,z_3^{-2 + 6 n}z_5],\\ [0,z_4^{2}],[0,z_3^{1 + 3 n}z_4],[0,z_3^{1 + 6 n}]\}
\end{split}&
\end{flalign*}
\\(22)$\left\{ \begin{array}{l} z_1z_3+z_4z_5=0\\z_1^{2}+z_2^{2}+z_3z_4^{2}+z_3^{1 + 3 n}z_5+2z_3z_5^{1 + 2 n}=0\end{array}\right.$
\\$n\ge1$
\\$(w_1,w_2,w_3,w_4,w_5;1,d)=(\frac{5 + 6 n}{3 (3 + 2 n)},\frac{5 + 6 n}{3 (3 + 2 n)},\frac{4}{3 (3 + 2 n)},\frac{1 + 2 n}{3 + 2 n},\frac{2}{3 + 2 n};1,\frac{2 (5 + 6 n)}{3 (3 + 2 n)})$
\\$\mu=13 + 17 n + 6 n^2$
\begin{flalign*}
\begin{split}
mini=\{[z_3,0],[z_2,0],[0,z_5^{1 + 2 n}],[0,z_3^{3}z_5^{2 n}],[0,z_3^{3 n}z_5^{-1 + 2 n}],[0,z_3^{-3 + 6 n}z_5^{2}],[0,z_3^{3 n}z_4z_5],[0,z_3^{6 n}z_5],\\ [0,z_4^{2}],[0,z_3^{2 + 3 n}z_4],[0,z_3^{3 + 6 n}]\}
\end{split}&
\end{flalign*}
\\(23)$\left\{ \begin{array}{l} z_1z_3+z_4^{2}+z_2z_5^{n}=0\\z_1^{2}+z_2^{2}+z_3^{3}+z_5^{3 n}=0\end{array}\right.$
\\$1\le n \le3$
\\$(w_1,w_2,w_3,w_4,w_5;1,d)=(\frac{3}{5},\frac{3}{5},\frac{2}{5},\frac{1}{2},\frac{2}{5 n};1,\frac{6}{5})$
\\$\mu=-10 + 32 n$
\begin{flalign*}
\begin{split}
mini=\{[z_5^{-2 + 6 n},0],[z_3z_5^{-2 + 4 n},0],[z_2z_5^{-2 + n},0],[0,z_5^{-2 + 6 n}],[0,z_3z_5^{-2 + 4 n}],[0,z_3z_4z_5^{-2 + 3 n}],\\ [0,z_4^{2}z_5^{-2 + 2 n}],[0,z_3^{2}z_5^{-2 + 2 n}],[0,z_3z_4^{2}z_5^{-2 + n}]\}
\end{split}&
\end{flalign*}
\\(24)$\left\{ \begin{array}{l} z_1z_3+z_4^{2}+z_2z_5=0\\z_1^{2}+z_2^{2}+z_3^{4}+z_5^{4}=0\end{array}\right.$
\\$(w_1,w_2,w_3,w_4,w_5;1,d)=(\frac{2}{3},\frac{2}{3},\frac{1}{3},\frac{1}{2},\frac{1}{3};1,\frac{4}{3})$
\\$\mu=41$
\begin{flalign*}
\begin{split}
mini=\{[z_5^{6},0],[z_3z_5^{4},0],[0,z_5^{6}],[0,z_3z_5^{4}],[0,z_4^{2}z_5^{2}],[0,z_3^{2}z_4z_5^{2}],[0,z_3z_4^{2}],[0,z_3^{3}]\}
\end{split}&
\end{flalign*}
\\(25)$\left\{ \begin{array}{l} z_1z_3+z_4^{2}+z_2z_5=0\\z_1^{2}+z_2^{2}+z_3^{5}+z_5^{5}=0\end{array}\right.$
\\$(w_1,w_2,w_3,w_4,w_5;1,d)=(\frac{5}{7},\frac{5}{7},\frac{2}{7},\frac{1}{2},\frac{2}{7};1,\frac{10}{7})$
\\$\mu=66$
\begin{flalign*}
\begin{split}
mini=\{[z_5^{8},0],[z_3z_5^{6},0],[0,z_5^{8}],[0,z_3z_5^{6}],[0,z_3z_4^{2}z_5^{3}],[0,z_3^{3}z_4z_5^{3}],[0,z_3^{2}z_4^{2}],[0,z_3^{4}]\}
\end{split}&
\end{flalign*}
\\(26)$\left\{ \begin{array}{l} z_1z_3+z_4^{2}+z_2z_5^{n}=0\\2z_1^{2}+z_2^{2}+z_3^{2}z_5+z_5^{1 + 2 n}=0\end{array}\right.$
\\$n\ge2$
\\$(w_1,w_2,w_3,w_4,w_5;1,d)=(\frac{1 + 2 n}{1 + 4 n},\frac{1 + 2 n}{1 + 4 n},\frac{2 n}{1 + 4 n},\frac{1}{2},\frac{2}{1 + 4 n};1,\frac{2 (1 + 2 n)}{1 + 4 n})$
\\$\mu=6 + 16 n$
\begin{flalign*}
\begin{split}
mini=\{[z_5^{4 n},0],[z_3z_5^{2 n},0],[z_2z_5^{-1 + n},0],[0,z_5^{4 n}],[0,z_4z_5^{2 n}],[0,z_3z_5^{2 n}],[0,z_4^{2}z_5^{-1 + n}],[0,z_3z_4]\}
\end{split}&
\end{flalign*}
\\(27)$\left\{ \begin{array}{l} z_1z_3+z_4^{2}+z_2z_5^{2}=0\\z_1^{2}+z_2^{2}+z_3^{3}z_5+z_5^{7}=0\end{array}\right.$
\\$(w_1,w_2,w_3,w_4,w_5;1,d)=(\frac{7}{11},\frac{7}{11},\frac{4}{11},\frac{1}{2},\frac{2}{11};1,\frac{14}{11})$
\\$\mu=73$
\begin{flalign*}
\begin{split}
mini=\{[z_5^{12},0],[z_3z_5^{8},0],[z_2,0],[0,z_5^{12}],[0,z_3z_5^{8}],[0,z_3z_4z_5^{6}],[0,z_4^{2}z_5^{4}],[0,z_3^{2}z_5^{4}],[0,z_3z_4^{2}z_5],\\ [0,z_3^{2}z_4],[0,z_3^{3}]\}
\end{split}&
\end{flalign*}
\\(28)$\left\{ \begin{array}{l} z_1z_3+z_4^{2}+z_5^{n}=0\\z_1^{2}+z_2^{2}+z_3^{4}+z_3z_4^{2}=0\end{array}\right.$
\\$4\le n \le5$
\\$(w_1,w_2,w_3,w_4,w_5;1,d)=(\frac{2}{3},\frac{2}{3},\frac{1}{3},\frac{1}{2},\frac{1}{n};1,\frac{4}{3})$
\\$\mu=-13 + 18 n$
\begin{flalign*}
\begin{split}
mini=\{[z_3z_5^{-2 + n},0],[z_2z_5^{-2 + n},0],[0,z_5^{-2 + 3 n}],[0,z_4z_5^{-2 + 2 n}],[0,z_3^{2}z_5^{-2 + 2 n}],[0,z_4^{2}z_5^{-2 + n}],\\ [0,z_3^{2}z_4z_5^{-2 + n}],[0,z_3^{5}z_5^{-2 + n}]\}
\end{split}&
\end{flalign*}
\\(29)$\left\{ \begin{array}{l} z_1z_3+z_4^{2}+z_5^{3}=0\\z_1^{2}+z_2^{2}+z_3^{6}+z_4^{3}=0\end{array}\right.$
\\$(w_1,w_2,w_3,w_4,w_5;1,d)=(\frac{3}{4},\frac{3}{4},\frac{1}{4},\frac{1}{2},\frac{1}{3};1,\frac{3}{2})$
\\$\mu=68$
\begin{flalign*}
\begin{split}
mini=\{[z_3z_5,0],[z_2z_5,0],[0,z_3^{2}z_4z_5^{4}],[0,z_3^{4}z_5^{4}],[0,z_3^{4}z_4z_5^{2}],[0,z_3z_4^{2}z_5],[0,z_3^{10}z_5]\}
\end{split}&
\end{flalign*}
\\(30)$\left\{ \begin{array}{l} z_1z_3+z_4^{2}+z_5^{3}=0\\z_1^{2}+z_2^{2}+z_3^{7}+z_3z_5^{4}=0\end{array}\right.$
\\$(w_1,w_2,w_3,w_4,w_5;1,d)=(\frac{7}{9},\frac{7}{9},\frac{2}{9},\frac{1}{2},\frac{1}{3};1,\frac{14}{9})$
\\$\mu=82$
\begin{flalign*}
\begin{split}
mini=\{[z_3z_5,0],[z_2z_5,0],[0,z_5^{5}],[0,z_4z_5^{3}],[0,z_3^{6}z_5^{3}],[0,z_3^{6}z_4z_5^{2}],[0,z_3^{5}z_4^{2}z_5],[0,z_3^{12}z_5]\}
\end{split}&
\end{flalign*}
\\(31)$\left\{ \begin{array}{l} z_1z_3+z_4^{2}+z_5^{4}=0\\z_1^{2}+z_2^{2}+z_3^{2}z_4+z_5^{5}=0\end{array}\right.$
\\$(w_1,w_2,w_3,w_4,w_5;1,d)=(\frac{5}{8},\frac{5}{8},\frac{3}{8},\frac{1}{2},\frac{1}{4};1,\frac{5}{4})$
\\$\mu=47$
\begin{flalign*}
\begin{split}
mini=\{[z_3z_5^{2},0],[z_2z_5^{2},0],[0,z_5^{9}],[0,z_3z_5^{6}],[0,z_4z_5^{5}],[0,z_4^{2}z_5^{2}],[0,z_3z_4z_5^{2}],[0,z_3^{4}z_5^{2}]\}
\end{split}&
\end{flalign*}
\\(32)$\left\{ \begin{array}{l} z_1z_3+z_4^{2}+z_5^{1 + 2 n}=0\\z_1^{2}+z_2^{2}+z_3^{2}z_5+z_4z_5^{1 + n}=0\end{array}\right.$
\\$n\ge1$
\\$(w_1,w_2,w_3,w_4,w_5;1,d)=(\frac{3 + 4 n}{4 (1 + 2 n)},\frac{3 + 4 n}{4 (1 + 2 n)},\frac{1 + 4 n}{4 (1 + 2 n)},\frac{1}{2},\frac{1}{1 + 2 n};1,\frac{3 + 4 n}{2 (1 + 2 n)})$
\\$\mu=2 (5 + 8 n)$
\begin{flalign*}
\begin{split}
mini=\{[z_5^{1 + 4 n},0],[z_3z_5^{2 n},0],[z_2z_5^{-1 + 2 n},0],[0,z_5^{1 + 4 n}],[0,z_3z_5^{1 + 2 n}],[0,z_4^{2}z_5^{n}],[0,z_3z_4]\}
\end{split}&
\end{flalign*}
\\(33)$\left\{ \begin{array}{l} z_1z_3+z_4^{2}+z_5^{3}=0\\z_1^{2}+z_2^{2}+z_3^{4}z_5+z_3z_4z_5^{2}=0\end{array}\right.$
\\$(w_1,w_2,w_3,w_4,w_5;1,d)=(\frac{13}{18},\frac{13}{18},\frac{5}{18},\frac{1}{2},\frac{1}{3};1,\frac{13}{9})$
\\$\mu=57$
\begin{flalign*}
\begin{split}
mini=\{[z_3z_5,0],[z_2z_5,0],[0,z_5^{5}],[0,z_4z_5^{4}],[0,z_3^{3}z_5^{4}],[0,z_3^{6}z_5^{3}],[0,z_4^{2}z_5],[0,z_3^{2}z_4z_5],[0,z_3^{2}z_4^{2}],\\ [0,z_3^{5}z_4],[0,z_3^{8}]\}
\end{split}&
\end{flalign*}
\\(34)$\left\{ \begin{array}{l} z_1z_3+z_4^{2}z_5+z_5^{4}=0\\z_1^{2}+z_2^{2}+z_3^{2}z_5+z_4^{3}=0\end{array}\right.$
\\$(w_1,w_2,w_3,w_4,w_5;1,d)=(\frac{9}{16},\frac{9}{16},\frac{7}{16},\frac{3}{8},\frac{1}{4};1,\frac{9}{8})$
\\$\mu=53$
\begin{flalign*}
\begin{split}
mini=\{[z_5^{8},0],[z_4z_5^{5},0],[z_3z_5^{4},0],[z_2z_5^{3},0],[z_3z_4z_5,0],[z_4^{3},0],[z_2z_4,0],[0,z_5^{8}],[0,z_4z_5^{5}],[0,z_3z_5^{5}],\\ [0,z_3z_4z_5^{2}]\}
\end{split}&
\end{flalign*}
\\(35)$\left\{ \begin{array}{l} z_1z_3+z_2z_5=0\\z_1^{2}+z_2^{2}+z_4^{3}+z_3^{n}+z_5^{n}=0\end{array}\right.$
\\$3\le n \le5$
\\$(w_1,w_2,w_3,w_4,w_5;1,d)=(\frac{n}{2 + n},\frac{n}{2 + n},\frac{2}{2 + n},\frac{2 n}{3 (2 + n)},\frac{2}{2 + n};1,\frac{2 n}{2 + n})$
\\$\mu=1 + 4 n + 2 n^2$
\begin{flalign*}
\begin{split}
mini=\{[z_4z_5^{n},0],[z_3z_4z_5^{-2 + n},0],[z_4^{2},0],[0,z_4z_5^{-2 + 2 n}],[0,z_3z_4z_5^{-4 + 2 n}],[0,z_3^{-2 + n}z_4z_5^{-2 + n}],\\ [0,z_3^{-1 + n}z_4]\}
\end{split}&
\end{flalign*}
\\(36)$\left\{ \begin{array}{l} z_1z_3+z_2z_4+z_5^{n}=0\\z_1^{2}+z_2^{2}+z_4^{3}+z_3^{3}=0\end{array}\right.$
\\$3\le n \le4$
\\$(w_1,w_2,w_3,w_4,w_5;1,d)=(\frac{3}{5},\frac{3}{5},\frac{2}{5},\frac{2}{5},\frac{1}{n};1,\frac{6}{5})$
\\$\mu=2 (-7 + 9 n)$
\begin{flalign*}
\begin{split}
mini=\{[z_4^{2}z_5^{-2 + n},0],[z_3z_4z_5^{-2 + n},0],[0,z_3z_4z_5^{-2 + 2 n}],[0,z_4^{4}z_5^{-2 + n}],[0,z_3z_4^{2}z_5^{-2 + n}],[0,z_3^{2}z_5^{-2 + n}]\}
\end{split}&
\end{flalign*}
\\(37)$\left\{ \begin{array}{l} z_1z_3+z_2z_4=0\\z_1^{2}+z_2^{2}+z_4^{3}+z_3^{3}+z_5^{n}=0\end{array}\right.$
\\$4\le n \le5$
\\$(w_1,w_2,w_3,w_4,w_5;1,d)=(\frac{3}{5},\frac{3}{5},\frac{2}{5},\frac{2}{5},\frac{6}{5 n};1,\frac{6}{5})$
\\$\mu=-14 + 15 n$
\begin{flalign*}
\begin{split}
mini=\{[z_5^{-2 + 2 n},0],[z_4^{2}z_5^{-2 + n},0],[z_3z_4z_5^{-2 + n},0],[0,z_4^{4}z_5^{-2 + n}],[0,z_3z_4^{2}z_5^{-2 + n}],[0,z_3^{2}z_5^{-2 + n}]\}
\end{split}&
\end{flalign*}
\\(38)$\left\{ \begin{array}{l} z_1z_3+z_2z_4+z_5^{2}=0\\2z_1^{2}+z_2^{2}z_3+z_4^{3}+z_3^{3}z_4=0\end{array}\right.$
\\$(w_1,w_2,w_3,w_4,w_5;1,d)=(\frac{9}{13},\frac{7}{13},\frac{4}{13},\frac{6}{13},\frac{1}{2};1,\frac{18}{13})$
\\$\mu=47$
\begin{flalign*}
\begin{split}
mini=\{[z_3,0],[0,z_5^{3}],[0,z_4z_5^{2}],[0,z_3^{5}z_5^{2}],[0,z_3^{2}z_4z_5],[0,z_2z_5],[0,z_3^{2}z_4^{2}],[0,z_3^{5}z_4],[0,z_2z_3^{2}z_4],[0,z_3^{7}],\\ [0,z_2z_3^{5}],[0,z_2^{2}]\}
\end{split}&
\end{flalign*}
\\(39)$\left\{ \begin{array}{l} z_1z_3+z_2z_4=0\\2z_1^{2}+z_2^{2}z_3+z_4^{3}+z_3^{3}z_4+z_5^{3}=0\end{array}\right.$
\\$(w_1,w_2,w_3,w_4,w_5;1,d)=(\frac{9}{13},\frac{7}{13},\frac{4}{13},\frac{6}{13},\frac{6}{13};1,\frac{18}{13})$
\\$\mu=53$
\begin{flalign*}
\begin{split}
mini=\{[z_5^{2},0],[z_3z_5,0],[0,z_3z_4^{2}z_5],[0,z_3^{4}z_4z_5],[0,z_2z_3^{2}z_4z_5],[0,z_3^{7}z_5],[0,z_2z_3^{4}z_5],[0,z_2^{2}z_5]\}
\end{split}&
\end{flalign*}
\\(40)$\left\{ \begin{array}{l} z_1z_3+z_2z_4=0\\2z_1^{2}+z_2^{2}z_3+z_3z_4^{3}+z_3^{4}z_4+z_5^{3}=0\end{array}\right.$
\\$(w_1,w_2,w_3,w_4,w_5;1,d)=(\frac{11}{15},\frac{3}{5},\frac{4}{15},\frac{2}{5},\frac{22}{45};1,\frac{22}{15})$
\\$\mu=71$
\begin{flalign*}
\begin{split}
mini=\{[z_5^{2},0],[z_3z_5,0],[0,z_4^{3}z_5],[0,z_3^{3}z_4^{2}z_5],[0,z_3^{6}z_4z_5],[0,z_2z_3^{3}z_4z_5],[0,z_3^{9}z_5],[0,z_2z_3^{5}z_5],[0,z_2^{2}z_5]\}
\end{split}&
\end{flalign*}
\\(41)$\left\{ \begin{array}{l} z_2^{2}+z_3^{2}+z_4^{2}+z_5^{2}=0\\z_1^{2}+z_2^{3}+z_3^{3}+z_4^{3}=0\end{array}\right.$
\\$(w_1,w_2,w_3,w_4,w_5;1,d)=(\frac{3}{4},\frac{1}{2},\frac{1}{2},\frac{1}{2},\frac{1}{2};1,\frac{3}{2})$
\\$\mu=31$
\begin{flalign*}
\begin{split}
mini=\{[z_1,0],[0,z_2z_3z_4z_5^{2}],[0,z_4^{4}],[0,z_3z_4^{2}],[0,z_3^{2}]\}
\end{split}&
\end{flalign*}
\\(42)$\left\{ \begin{array}{l} z_2^{2}+z_3^{2}+z_4^{2}+z_5^{3}=0\\z_1^{2}+z_2^{3}+z_3^{3}+z_4^{3}=0\end{array}\right.$
\\$(w_1,w_2,w_3,w_4,w_5;1,d)=(\frac{3}{4},\frac{1}{2},\frac{1}{2},\frac{1}{2},\frac{1}{3};1,\frac{3}{2})$
\\$\mu=54$
\begin{flalign*}
\begin{split}
mini=\{[z_1z_5,0],[0,z_2z_3z_4z_5^{4}],[0,z_4^{4}z_5],[0,z_3z_4^{2}z_5],[0,z_3^{2}z_5]\}
\end{split}&
\end{flalign*}
\\(43)$\left\{ \begin{array}{l} z_2z_3+z_4z_5=0\\z_1^{2}+z_2^{3}+z_3^{n}+z_4^{3}+z_5^{n}=0\end{array}\right.$
\\$4\le n \le5$
\\$(w_1,w_2,w_3,w_4,w_5;1,d)=(\frac{3 n}{2 (3 + n)},\frac{n}{3 + n},\frac{3}{3 + n},\frac{n}{3 + n},\frac{3}{3 + n};1,\frac{3 n}{3 + n})$
\\$\mu=-2 + 5 n + 2 n^2$
\begin{flalign*}
\begin{split}
mini=\{[z_1,0],[0,z_4z_5^{-2 + 2 n}],[0,z_3z_5^{-3 + 2 n}],[0,z_2z_5^{-3 + 2 n}],[0,z_3z_4z_5^{-5 + 2 n}],[0,z_3^{2}z_5^{-5 + 2 n}],[0,z_4^{2}z_5^{-1 + n}],\\ [0,z_3^{-3 + n}z_4z_5^{-2 + n}],[0,z_2z_4z_5^{-2 + n}],[0,z_3^{-2 + n}z_5^{-2 + n}],[0,z_2^{2}z_5^{-2 + n}],[0,z_3^{-2 + n}z_4z_5],[0,z_3^{-1 + n}z_5]\}
\end{split}&
\end{flalign*}
\\(44)$\left\{ 
\right.$
\\$n_2 \ge 2~and~1\le n_5 \le n_2$
\\$(w_1,w_2,w_3,w_4,w_5;1,d)=(\frac{n_2}{1 + n_2},\frac{1}{1 + n_2},\frac{n_2 + n_5 + 2 n_2 n_5}{2 (1 + n_2) (1 + n_5)},\frac{n_5}{1 + n_5},\frac{1}{1 + n_5};1,\frac{n_2 + n_5 + 2 n_2 n_5}{(1 + n_2) (1 + n_5)})$
\\$\mu=(1 + 2 n_2) (1 + 2 n_5)$
\begin{flalign*}
\begin{split}
mini=\{[z_5^{n_5},0],[z_3,0],[0,z_2^{-2 + n_2}z_5^{2 n_5}],[0,z_2^{-1 + n_2}z_5^{-1 + 2 n_5}],[0,z_2^{-1 + 2 n_2}z_5^{n_5}],[0,z_2^{2 n_2}z_5^{-1 + n_5}],\\ [0,z_2^{3 n_2}z_5^{-2 + n_5}],[0,z_2^{-1 + n_2}z_4]\}
\end{split}&
\end{flalign*}
\\(117)$\left\{ \begin{array}{l} z_1z_2+z_4^{2}+z_5^{2}=0\\z_1z_4+z_3^{2}+z_2^{n}=0\end{array}\right.$
\\$n\ge3$
\\$(w_1,w_2,w_3,w_4,w_5;1,d)=(\frac{-1 + 2 n}{2 (1 + n)},\frac{3}{2 (1 + n)},\frac{3 n}{4 (1 + n)},\frac{1}{2},\frac{1}{2};1,\frac{3 n}{2 (1 + n)})$
\\$\mu=1 + 4 n$
\begin{flalign*}
\begin{split}
mini=\{[z_4,0],[z_3,0],[0,z_2^{-2 + n}z_5^{2}],[0,z_2^{-1 + n}z_4]\}
\end{split}&
\end{flalign*}
\\(118)$\left\{ \begin{array}{l} z_1z_2+z_4^{2}+z_5^{3}=0\\z_1z_4+z_3^{2}+z_2^{n}=0\end{array}\right.$
\\$n\ge3$
\\$(w_1,w_2,w_3,w_4,w_5;1,d)=(\frac{-1 + 2 n}{2 (1 + n)},\frac{3}{2 (1 + n)},\frac{3 n}{4 (1 + n)},\frac{1}{2},\frac{1}{3};1,\frac{3 n}{2 (1 + n)})$
\\$\mu=3 + 7 n$
\begin{flalign*}
\begin{split}
mini=\{[z_4z_5,0],[z_3z_5,0],[0,z_2^{-2 + n}z_5^{4}],[0,z_2^{-1 + n}z_4z_5]\}
\end{split}&
\end{flalign*}
\\(119)$\left\{ \begin{array}{l} z_1z_2+z_4^{2}+z_5^{4}=0\\z_1z_4+z_3^{2}+z_2^{n}=0\end{array}\right.$
\\$n\ge3$
\\$(w_1,w_2,w_3,w_4,w_5;1,d)=(\frac{-1 + 2 n}{2 (1 + n)},\frac{3}{2 (1 + n)},\frac{3 n}{4 (1 + n)},\frac{1}{2},\frac{1}{4};1,\frac{3 n}{2 (1 + n)})$
\\$\mu=5 + 10 n$
\begin{flalign*}
\begin{split}
mini=\{[z_4z_5^{2},0],[z_3z_5^{2},0],[0,z_2^{-2 + n}z_5^{6}],[0,z_2^{-1 + n}z_4z_5^{2}]\}
\end{split}&
\end{flalign*}
\\(120)$\left\{ \begin{array}{l} z_1z_2+z_4^{2}+z_5^{5}=0\\z_1z_4+z_3^{2}+z_2^{n}=0\end{array}\right.$
\\$3\le n \le13$
\\$(w_1,w_2,w_3,w_4,w_5;1,d)=(\frac{-1 + 2 n}{2 (1 + n)},\frac{3}{2 (1 + n)},\frac{3 n}{4 (1 + n)},\frac{1}{2},\frac{1}{5};1,\frac{3 n}{2 (1 + n)})$
\\$\mu=7 + 13 n$
\begin{flalign*}
\begin{split}
mini=\{[z_4z_5^{3},0],[z_3z_5^{3},0],[0,z_2^{-2 + n}z_5^{8}],[0,z_2^{-1 + n}z_4z_5^{3}]\}
\end{split}&
\end{flalign*}
\\(121)$\left\{ \begin{array}{l} z_1z_2+z_4^{2}+z_5^{6}=0\\z_1z_4+z_3^{2}+z_2^{n}=0\end{array}\right.$
\\$3\le n \le7$
\\$(w_1,w_2,w_3,w_4,w_5;1,d)=(\frac{-1 + 2 n}{2 (1 + n)},\frac{3}{2 (1 + n)},\frac{3 n}{4 (1 + n)},\frac{1}{2},\frac{1}{6};1,\frac{3 n}{2 (1 + n)})$
\\$\mu=9 + 16 n$
\begin{flalign*}
\begin{split}
mini=\{[z_4z_5^{4},0],[z_3z_5^{4},0],[0,z_2^{-2 + n}z_5^{10}],[0,z_2^{-1 + n}z_4z_5^{4}]\}
\end{split}&
\end{flalign*}
\\(122)$\left\{ \begin{array}{l} z_1z_2+z_4^{2}+z_5^{7}=0\\z_1z_4+z_3^{2}+z_2^{n}=0\end{array}\right.$
\\$3\le n \le5$
\\$(w_1,w_2,w_3,w_4,w_5;1,d)=(\frac{-1 + 2 n}{2 (1 + n)},\frac{3}{2 (1 + n)},\frac{3 n}{4 (1 + n)},\frac{1}{2},\frac{1}{7};1,\frac{3 n}{2 (1 + n)})$
\\$\mu=11 + 19 n$
\begin{flalign*}
\begin{split}
mini=\{[z_4z_5^{5},0],[z_3z_5^{5},0],[0,z_2^{-2 + n}z_5^{12}],[0,z_2^{-1 + n}z_4z_5^{5}]\}
\end{split}&
\end{flalign*}
\\(123)$\left\{ \begin{array}{l} z_1z_2+z_4^{2}+z_5^{8}=0\\z_1z_4+z_3^{2}+z_2^{n}=0\end{array}\right.$
\\$3\le n \le4$
\\$(w_1,w_2,w_3,w_4,w_5;1,d)=(\frac{-1 + 2 n}{2 (1 + n)},\frac{3}{2 (1 + n)},\frac{3 n}{4 (1 + n)},\frac{1}{2},\frac{1}{8};1,\frac{3 n}{2 (1 + n)})$
\\$\mu=13 + 22 n$
\begin{flalign*}
\begin{split}
mini=\{[z_4z_5^{6},0],[z_3z_5^{6},0],[0,z_2^{-2 + n}z_5^{14}],[0,z_2^{-1 + n}z_4z_5^{6}]\}
\end{split}&
\end{flalign*}
\\(124)$\left\{ \begin{array}{l} z_1z_2+z_4^{2}+z_5^{9}=0\\z_1z_4+z_3^{2}+z_2^{n}=0\end{array}\right.$
\\$3\le n \le4$
\\$(w_1,w_2,w_3,w_4,w_5;1,d)=(\frac{-1 + 2 n}{2 (1 + n)},\frac{3}{2 (1 + n)},\frac{3 n}{4 (1 + n)},\frac{1}{2},\frac{1}{9};1,\frac{3 n}{2 (1 + n)})$
\\$\mu=5 (3 + 5 n)$
\begin{flalign*}
\begin{split}
mini=\{[z_4z_5^{7},0],[z_3z_5^{7},0],[0,z_2^{-2 + n}z_5^{16}],[0,z_2^{-1 + n}z_4z_5^{7}]\}
\end{split}&
\end{flalign*}
\\(125)$\left\{ \begin{array}{l} z_1z_2+z_4^{2}+z_5^{n}=0\\z_1z_4+z_3^{2}+z_2^{3}=0\end{array}\right.$
\\$10\le n \le15$
\\$(w_1,w_2,w_3,w_4,w_5;1,d)=(\frac{5}{8},\frac{3}{8},\frac{9}{16},\frac{1}{2},\frac{1}{n};1,\frac{9}{8})$
\\$\mu=-9 + 11 n$
\begin{flalign*}
\begin{split}
mini=\{[z_4z_5^{-2 + n},0],[z_3z_5^{-2 + n},0],[0,z_2z_5^{-2 + 2 n}],[0,z_2^{2}z_4z_5^{-2 + n}]\}
\end{split}&
\end{flalign*}
\\(126)$\left\{ \begin{array}{l} z_1z_2+z_4^{2}+z_5^{4}=0\\z_1z_4+z_3^{2}+z_2^{n}z_4+z_2^{n}z_5^{2}=0\end{array}\right.$
\\$n\ge2$
\\$(w_1,w_2,w_3,w_4,w_5;1,d)=(\frac{n}{1 + n},\frac{1}{1 + n},\frac{1 + 3 n}{4 (1 + n)},\frac{1}{2},\frac{1}{4};1,\frac{1 + 3 n}{2 (1 + n)})$
\\$\mu=5 (2 + 3 n)$
\begin{flalign*}
\begin{split}
mini=\{[z_4z_5^{2},0],[z_3z_5^{2},0],[0,z_2^{-2 + n}z_5^{6}],[0,z_2^{-1 + n}z_5^{4}],[0,z_2^{-1 + 2 n}z_5^{3}],[0,z_2^{-1 + n}z_4z_5^{2}],[0,z_2^{2 n}z_5^{2}],[0,z_2^{3 n}]\}
\end{split}&
\end{flalign*}
\\(127)$\left\{ \begin{array}{l} z_1z_2+z_4^{2}+z_5^{6}=0\\z_1z_4+z_3^{2}+z_2^{n}z_4+z_2^{n}z_5^{3}=0\end{array}\right.$
\\$2\le n \le4$
\\$(w_1,w_2,w_3,w_4,w_5;1,d)=(\frac{n}{1 + n},\frac{1}{1 + n},\frac{1 + 3 n}{4 (1 + n)},\frac{1}{2},\frac{1}{6};1,\frac{1 + 3 n}{2 (1 + n)})$
\\$\mu=17 + 24 n$
\begin{flalign*}
\begin{split}
mini=\{[z_4z_5^{4},0],[z_3z_5^{4},0],[0,z_2^{-2 + n}z_5^{10}],[0,z_2^{-1 + n}z_5^{7}],[0,z_2^{-1 + 2 n}z_5^{5}],[0,z_2^{-1 + n}z_4z_5^{4}],[0,z_2^{2 n}z_5^{4}],\\ [0,z_2^{3 n}z_5]\}
\end{split}&
\end{flalign*}
\\(128)$\left\{ \begin{array}{l} z_1z_2+z_4^{2}+z_5^{8}=0\\z_1z_4+z_3^{2}+z_2^{2}z_4+z_2^{2}z_5^{4}=0\end{array}\right.$
\\$(w_1,w_2,w_3,w_4,w_5;1,d)=(\frac{2}{3},\frac{1}{3},\frac{7}{12},\frac{1}{2},\frac{1}{8};1,\frac{7}{6})$
\\$\mu=90$
\begin{flalign*}
\begin{split}
mini=\{[z_5^{14},0],[z_4z_5^{6},0],[z_3z_5^{6},0],[0,z_5^{18}],[0,z_2z_5^{14}],[0,z_4z_5^{10}],[0,z_2z_4z_5^{6}],[0,z_2^{4}z_5^{2}]\}
\end{split}&
\end{flalign*}
\\(129)$\left\{ \begin{array}{l} z_1z_2+z_4^{2}+z_5^{10}=0\\z_1z_4+z_3^{2}+z_2^{2}z_4+z_2^{2}z_5^{5}=0\end{array}\right.$
\\$(w_1,w_2,w_3,w_4,w_5;1,d)=(\frac{2}{3},\frac{1}{3},\frac{7}{12},\frac{1}{2},\frac{1}{10};1,\frac{7}{6})$
\\$\mu=115$
\begin{flalign*}
\begin{split}
mini=\{[z_5^{18},0],[z_4z_5^{8},0],[z_3z_5^{8},0],[0,z_5^{23}],[0,z_2z_5^{18}],[0,z_4z_5^{13}],[0,z_2z_4z_5^{8}],[0,z_2^{4}z_5^{3}]\}
\end{split}&
\end{flalign*}
\\(130)$\left\{ \begin{array}{l} z_1z_2+z_4^{2}+z_5^{n}=0\\z_1z_4+z_3^{2}+z_2^{2}z_5=0\end{array}\right.$
\\$n\ge2$
\\$(w_1,w_2,w_3,w_4,w_5;1,d)=(\frac{2 + 3 n}{6 n},\frac{-2 + 3 n}{6 n},\frac{1 + 3 n}{6 n},\frac{1}{2},\frac{1}{n};1,\frac{1 + 3 n}{3 n})$
\\$\mu=-1 + 8 n$
\begin{flalign*}
\begin{split}
mini=\{[z_4z_5^{-1 + n},0],[z_3z_5^{-2 + n},0],[0,z_5^{-1 + 2 n}],[0,z_2z_4z_5^{-1 + n}]\}
\end{split}&
\end{flalign*}
\\(131)$\left\{ \begin{array}{l} z_1z_2+z_4^{2}+z_5^{n}=0\\z_1z_4+z_3^{2}+z_2^{3}z_5=0\end{array}\right.$
\\$2\le n \le13$
\\$(w_1,w_2,w_3,w_4,w_5;1,d)=(\frac{2 + 5 n}{8 n},\frac{-2 + 3 n}{8 n},\frac{2 + 9 n}{16 n},\frac{1}{2},\frac{1}{n};1,\frac{2 + 9 n}{8 n})$
\\$\mu=-1 + 11 n$
\begin{flalign*}
\begin{split}
mini=\{[z_4z_5^{-1 + n},0],[z_3z_5^{-2 + n},0],[0,z_2z_5^{-1 + 2 n}],[0,z_2^{2}z_4z_5^{-1 + n}]\}
\end{split}&
\end{flalign*}
\\(132)$\left\{ \begin{array}{l} z_1z_2+z_4^{2}+z_5^{n}=0\\z_1z_4+z_3^{2}+z_2^{4}z_5=0\end{array}\right.$
\\$2\le n \le8$
\\$(w_1,w_2,w_3,w_4,w_5;1,d)=(\frac{2 + 7 n}{10 n},\frac{-2 + 3 n}{10 n},\frac{1 + 6 n}{10 n},\frac{1}{2},\frac{1}{n};1,\frac{1 + 6 n}{5 n})$
\\$\mu=-1 + 14 n$
\begin{flalign*}
\begin{split}
mini=\{[z_4z_5^{-1 + n},0],[z_3z_5^{-2 + n},0],[0,z_2^{2}z_5^{-1 + 2 n}],[0,z_2^{3}z_4z_5^{-1 + n}]\}
\end{split}&
\end{flalign*}
\\(133)$\left\{ \begin{array}{l} z_1z_2+z_4^{2}+z_5^{n}=0\\z_1z_4+z_3^{2}+z_2^{5}z_5=0\end{array}\right.$
\\$2\le n \le7$
\\$(w_1,w_2,w_3,w_4,w_5;1,d)=(\frac{2 + 9 n}{12 n},\frac{-2 + 3 n}{12 n},\frac{2 + 15 n}{24 n},\frac{1}{2},\frac{1}{n};1,\frac{2 + 15 n}{12 n})$
\\$\mu=-1 + 17 n$
\begin{flalign*}
\begin{split}
mini=\{[z_4z_5^{-1 + n},0],[z_3z_5^{-2 + n},0],[0,z_2^{3}z_5^{-1 + 2 n}],[0,z_2^{4}z_4z_5^{-1 + n}]\}
\end{split}&
\end{flalign*}
\\(134)$\left\{ \begin{array}{l} z_1z_2+z_4^{2}+z_5^{n}=0\\z_1z_4+z_3^{2}+z_2^{6}z_5=0\end{array}\right.$
\\$2\le n \le6$
\\$(w_1,w_2,w_3,w_4,w_5;1,d)=(\frac{2 + 11 n}{14 n},\frac{-2 + 3 n}{14 n},\frac{1 + 9 n}{14 n},\frac{1}{2},\frac{1}{n};1,\frac{1 + 9 n}{7 n})$
\\$\mu=-1 + 20 n$
\begin{flalign*}
\begin{split}
mini=\{[z_4z_5^{-1 + n},0],[z_3z_5^{-2 + n},0],[0,z_2^{4}z_5^{-1 + 2 n}],[0,z_2^{5}z_4z_5^{-1 + n}]\}
\end{split}&
\end{flalign*}
\\(135)$\left\{ \begin{array}{l} z_1z_2+z_4^{2}+z_5^{n}=0\\z_1z_4+z_3^{2}+z_2^{7}z_5=0\end{array}\right.$
\\$2\le n \le5$
\\$(w_1,w_2,w_3,w_4,w_5;1,d)=(\frac{2 + 13 n}{16 n},\frac{-2 + 3 n}{16 n},\frac{2 + 21 n}{32 n},\frac{1}{2},\frac{1}{n};1,\frac{2 + 21 n}{16 n})$
\\$\mu=-1 + 23 n$
\begin{flalign*}
\begin{split}
mini=\{[z_4z_5^{-1 + n},0],[z_3z_5^{-2 + n},0],[0,z_2^{5}z_5^{-1 + 2 n}],[0,z_2^{6}z_4z_5^{-1 + n}]\}
\end{split}&
\end{flalign*}
\\(136)$\left\{ \begin{array}{l} z_1z_2+z_4^{2}+z_5^{n}=0\\z_1z_4+z_3^{2}+z_2^{8}z_5=0\end{array}\right.$
\\$2\le n \le5$
\\$(w_1,w_2,w_3,w_4,w_5;1,d)=(\frac{2 + 15 n}{18 n},\frac{-2 + 3 n}{18 n},\frac{1 + 12 n}{18 n},\frac{1}{2},\frac{1}{n};1,\frac{1 + 12 n}{9 n})$
\\$\mu=-1 + 26 n$
\begin{flalign*}
\begin{split}
mini=\{[z_5^{4},0],[z_3z_5^{3},0],[0,z_2^{-5 + n}z_4^{2}z_5^{3}],[0,z_2^{-4 + n}z_3z_5^{3}]\}
\end{split}&
\end{flalign*}
\\(137)$\left\{ \begin{array}{l} z_1z_2+z_4^{2}+z_5^{n}=0\\z_1z_4+z_3^{2}+z_2^{9}z_5=0\end{array}\right.$
\\$2\le n \le5$
\\$(w_1,w_2,w_3,w_4,w_5;1,d)=(\frac{2 + 17 n}{20 n},\frac{-2 + 3 n}{20 n},\frac{2 + 27 n}{40 n},\frac{1}{2},\frac{1}{n};1,\frac{2 + 27 n}{20 n})$
\\$\mu=-1 + 29 n$
\begin{flalign*}
\begin{split}
mini=\{[z_5^{-8 + 2 n},0],[z_3z_5^{-4 + n},0],[0,z_2^{3}z_4^{2}z_5^{-4 + n}],[0,z_2^{3}z_3z_5^{-4 + n}],[0,z_2^{4}z_5^{-4 + n}],[0,z_2^{4}z_4],[0,z_2^{4}z_3]\}
\end{split}&
\end{flalign*}
\\(138)$\left\{ \begin{array}{l} z_1z_2+z_4^{2}+z_5^{n}=0\\z_1z_4+z_3^{2}+z_2^{10}z_5=0\end{array}\right.$
\\$2\le n \le5$
\\$(w_1,w_2,w_3,w_4,w_5;1,d)=(\frac{2 + 19 n}{22 n},\frac{-2 + 3 n}{22 n},\frac{1 + 15 n}{22 n},\frac{1}{2},\frac{1}{n};1,\frac{1 + 15 n}{11 n})$
\\$\mu=-1 + 32 n$
\begin{flalign*}
\begin{split}
mini=\{[z_4z_5^{4},0],[z_3z_5^{4},0],[0,z_2^{-6 + n}z_5^{10}],[0,z_2^{-5 + n}z_5^{7}],[0,z_2^{-9 + 2 n}z_5^{5}],[0,z_2^{-5 + n}z_4z_5^{4}],[0,z_2^{-8 + 2 n}z_5^{4}],\\ [0,z_2^{-12 + 3 n}z_5]\}
\end{split}&
\end{flalign*}
\\(139)$\left\{ \begin{array}{l} z_1z_2+z_4^{2}+z_5^{n}=0\\z_1z_4+z_3^{2}+z_2^{11}z_5=0\end{array}\right.$
\\$2\le n \le5$
\\$(w_1,w_2,w_3,w_4,w_5;1,d)=(\frac{2 + 21 n}{24 n},\frac{-2 + 3 n}{24 n},\frac{2 + 33 n}{48 n},\frac{1}{2},\frac{1}{n};1,\frac{2 + 33 n}{24 n})$
\\$\mu=-1 + 35 n$
\begin{flalign*}
\begin{split}
mini=\{[z_4z_5^{-2 + n},0],[z_3z_5^{-2 + n},0],[0,z_2^{9}z_5^{-1 + 2 n}],[0,z_2^{10}z_5^{-1 + n}],[0,z_2^{10}z_4z_5^{-2 + n}],[0,z_2^{11}z_4],[0,z_2^{22}]\}
\end{split}&
\end{flalign*}
\\(140)$\left\{ \begin{array}{l} z_1z_2+z_4^{2}+z_5^{2}=0\\z_1z_4+z_3^{2}+z_2^{n}z_5=0\end{array}\right.$
\\$n\ge12$
\\$(w_1,w_2,w_3,w_4,w_5;1,d)=(\frac{n}{1 + n},\frac{1}{1 + n},\frac{1 + 3 n}{4 (1 + n)},\frac{1}{2},\frac{1}{2};1,\frac{1 + 3 n}{2 (1 + n)})$
\\$\mu=3 + 6 n$
\begin{flalign*}
\begin{split}
mini=\{[z_4,0],[z_3,0],[0,z_2^{-2 + n}z_5^{2}],[0,z_2^{-1 + n}z_5],[0,z_2^{-1 + 2 n}z_4],[0,z_2^{2 n}]\}
\end{split}&
\end{flalign*}
\\(141)$\left\{ \begin{array}{l} z_1z_2+z_4^{2}+z_5^{3}=0\\z_1z_4+z_3^{2}+z_2^{n}z_5=0\end{array}\right.$
\\$n\ge12$
\\$(w_1,w_2,w_3,w_4,w_5;1,d)=(\frac{-1 + 6 n}{6 (1 + n)},\frac{7}{6 (1 + n)},\frac{2 + 9 n}{12 (1 + n)},\frac{1}{2},\frac{1}{3};1,\frac{2 + 9 n}{6 (1 + n)})$
\\$\mu=5 + 9 n$
\begin{flalign*}
\begin{split}
mini=\{[z_4z_5,0],[z_3z_5,0],[0,z_2^{-2 + n}z_5^{4}],[0,z_2^{-1 + n}z_5^{2}],[0,z_2^{-1 + n}z_4z_5],[0,z_2^{-1 + 2 n}z_4],[0,z_2^{2 n}]\}
\end{split}&
\end{flalign*}
\\(142)$\left\{ \begin{array}{l} z_1z_2+z_4^{2}+z_5^{4}=0\\z_1z_4+z_3^{2}+z_2^{n}z_5=0\end{array}\right.$
\\$n\ge12$
\\$(w_1,w_2,w_3,w_4,w_5;1,d)=(\frac{-1 + 4 n}{4 (1 + n)},\frac{5}{4 (1 + n)},\frac{1 + 6 n}{8 (1 + n)},\frac{1}{2},\frac{1}{4};1,\frac{1 + 6 n}{4 (1 + n)})$
\\$\mu=7 + 12 n$
\begin{flalign*}
\begin{split}
mini=\{[z_4z_5^{2},0],[z_3z_5^{2},0],[0,z_2^{-2 + n}z_5^{6}],[0,z_2^{-1 + n}z_5^{3}],[0,z_2^{-1 + n}z_4z_5^{2}],[0,z_2^{-1 + 2 n}z_4],[0,z_2^{2 n}]\}
\end{split}&
\end{flalign*}
\\(143)$\left\{ \begin{array}{l} z_1z_2+z_4^{2}+z_5^{n_5}=0\\z_1z_5+z_3^{2}+z_2^{n_2}=0\end{array}\right.$
\\$n_2 \ge 3~and~3\le n_5 \le n_2$
\\$(w_1,w_2,w_3,w_4,w_5;1,d)=(\frac{-1 + n_2 n_5}{(1 + n_2) n_5},\frac{1 + n_5}{(1 + n_2) n_5},\frac{n_2 + n_2 n_5}{2 (1 + n_2) n_5},\frac{1}{2},\frac{1}{n_5};1,\frac{n_2 + n_2 n_5}{(1 + n_2) n_5})$
\\$\mu=-3 + 2 n_5 + n_2 (2 + n_5)$
\begin{flalign*}
\begin{split}
mini=\{[z_5^{-1 + n_5},0],[z_3z_5^{-2 + n_5},0],[0,z_2^{-2 + n_2}z_5^{n_5}],[0,z_2^{-1 + n_2}z_5^{-1 + n_5}],[0,z_2^{-2 + n_2}z_4]\}
\end{split}&
\end{flalign*}
\\(144)$\left\{ \begin{array}{l} z_1z_2+z_4^{2}+z_5^{n_5}=0\\z_1z_5+z_3^{2}+z_2^{n_2}z_4=0\end{array}\right.$
\\$n_2 \ge 2~and~3\le n_5 \le 2 n_2$
\\$(w_1,w_2,w_3,w_4,w_5;1,d)=(\frac{-2 + n_5 + 2 n_2 n_5}{2 (1 + n_2) n_5},\frac{2 + n_5}{2 (1 + n_2) n_5},\frac{2 n_2 + n_5 + 2 n_2 n_5}{4 (1 + n_2) n_5},\frac{1}{2},\frac{1}{n_5};1,\frac{2 n_2 + n_5 + 2 n_2 n_5}{2 (1 + n_2) n_5})$
\\$\mu=3 (-1 + n_5) + 2 n_2 (1 + n_5)$
\begin{flalign*}
\begin{split}
mini=\{[z_5^{-1 + n_5},0],[z_3z_5^{-2 + n_5},0],[0,z_2^{-2 + n_2}z_5^{n_5}],[0,z_2^{-1 + 2 n_2}z_5^{-1 + n_5}],[0,z_2^{2 n_2}z_5^{-2 + n_5}],[0,z_2^{-1 + n_2}z_4]\}
\end{split}&
\end{flalign*}
\\(145)$\left\{ \begin{array}{l} z_1z_2+z_4^{2}+z_3z_5^{2}=0\\z_1z_4+z_3^{2}+z_2^{6}+z_2z_5^{6}=0\end{array}\right.$
\\$(w_1,w_2,w_3,w_4,w_5;1,d)=(\frac{11}{14},\frac{3}{14},\frac{9}{14},\frac{1}{2},\frac{5}{28};1,\frac{9}{7})$
\\$\mu=97$
\begin{flalign*}
\begin{split}
mini=\{[z_5^{6},0],[z_4z_5^{2},0],[z_3,0],[0,z_5^{14}],[0,z_2^{2}z_5^{10}],[0,z_2^{4}z_5^{8}],[0,z_4z_5^{6}],[0,z_2^{3}z_4z_5^{4}],[0,z_2^{5}z_5^{4}],\\ [0,z_2^{4}z_4z_5^{2}],[0,z_2^{5}z_4]\}
\end{split}&
\end{flalign*}
\\(146)$\left\{ \begin{array}{l} z_1z_2+z_4^{2}+z_3z_5^{3}=0\\z_1z_4+z_3^{2}+z_2^{4}+z_5^{9}=0\end{array}\right.$
\\$(w_1,w_2,w_3,w_4,w_5;1,d)=(\frac{7}{10},\frac{3}{10},\frac{3}{5},\frac{1}{2},\frac{2}{15};1,\frac{6}{5})$
\\$\mu=94$
\begin{flalign*}
\begin{split}
mini=\{[z_5^{16},0],[z_4z_5^{4},0],[z_3z_5,0],[0,z_5^{16}],[0,z_2z_5^{13}],[0,z_4z_5^{10}],[0,z_2z_4z_5^{7}],[0,z_2^{2}z_5^{7}],[0,z_2^{2}z_4z_5^{4}],\\ [0,z_2^{3}z_5^{4}],[0,z_2^{3}z_4z_5]\}
\end{split}&
\end{flalign*}
\\(147)$\left\{ \begin{array}{l} z_1z_2+z_4^{2}+z_3z_5^{2}=0\\z_1z_4+z_3^{2}+z_2^{2}z_4+z_2z_5^{4}=0\end{array}\right.$
\\$(w_1,w_2,w_3,w_4,w_5;1,d)=(\frac{2}{3},\frac{1}{3},\frac{7}{12},\frac{1}{2},\frac{5}{24};1,\frac{7}{6})$
\\$\mu=50$
\begin{flalign*}
\begin{split}
mini=\{[z_5^{6},0],[z_4z_5^{2},0],[z_3,0],[0,z_5^{10}],[0,z_4z_5^{6}],[0,z_2^{2}z_5^{6}],[0,z_2z_4z_5^{2}],[0,z_2^{3}z_5^{2}],[0,z_2^{4}]\}
\end{split}&
\end{flalign*}
\\(148)$\left\{ \begin{array}{l} z_1z_2+z_4^{2}+z_3z_5^{3}=0\\z_1z_4+z_3^{2}+z_2^{2}z_4+z_2z_5^{6}=0\end{array}\right.$
\\$(w_1,w_2,w_3,w_4,w_5;1,d)=(\frac{2}{3},\frac{1}{3},\frac{7}{12},\frac{1}{2},\frac{5}{36};1,\frac{7}{6})$
\\$\mu=80$
\begin{flalign*}
\begin{split}
mini=\{[z_5^{10},0],[z_4z_5^{4},0],[z_3z_5,0],[0,z_5^{16}],[0,z_4z_5^{10}],[0,z_2^{2}z_5^{10}],[0,z_2z_4z_5^{4}],[0,z_2^{3}z_5^{4}],[0,z_2^{4}z_5]\}
\end{split}&
\end{flalign*}
\\(149)$\left\{ \begin{array}{l} z_1z_2+z_4^{2}+z_3z_5^{4}=0\\z_1z_4+z_3^{2}+z_2^{2}z_4+z_2z_5^{8}=0\end{array}\right.$
\\$(w_1,w_2,w_3,w_4,w_5;1,d)=(\frac{2}{3},\frac{1}{3},\frac{7}{12},\frac{1}{2},\frac{5}{48};1,\frac{7}{6})$
\\$\mu=110$
\begin{flalign*}
\begin{split}
mini=\{[z_5^{14},0],[z_4z_5^{6},0],[z_3z_5^{2},0],[0,z_5^{22}],[0,z_4z_5^{14}],[0,z_2^{2}z_5^{14}],[0,z_2z_4z_5^{6}],[0,z_2^{3}z_5^{6}],[0,z_2^{4}z_5^{2}]\}
\end{split}&
\end{flalign*}
\\(150)$\left\{ \begin{array}{l} z_1z_2+z_4^{2}+z_3z_5^{n_5}=0\\z_1z_5+z_3^{2}+z_2^{n_2}=0\end{array}\right.$
\\$n_2 \ge 4~and~4\le 2 n_5 \le n_2$
\\$(w_1,w_2,w_3,w_4,w_5;1,d)=(\frac{-2 + n_2 + 2 n_2 n_5}{n_2 + 2 n_5 + 2 n_2 n_5},\frac{2 (1 + n_5)}{n_2 + 2 n_5 + 2 n_2 n_5},\frac{n_2 (1 + n_5)}{n_2 + 2 n_5 + 2 n_2 n_5},\frac{1}{2},\frac{2 + n_2}{n_2 + 2 n_5 + 2 n_2 n_5};1,\frac{2 n_2 (1 + n_5)}{n_2 + 2 n_5 + 2 n_2 n_5})$
\\$\mu=-3 + 2 n_5 + n_2 (3 + 2 n_5)$
\begin{flalign*}
\begin{split}
mini=\{[z_5^{2 n_5},0],[z_3z_5^{-1 + n_5},0],[0,z_2^{-3 + n_2}z_5^{1 + 2 n_5}],[0,z_2^{-2 + n_2}z_5^{2 n_5}],[0,z_2^{-1 + n_2}z_5^{-1 + n_5}],[0,z_2^{-2 + n_2}z_4]\}
\end{split}&
\end{flalign*}
\\(151)$\left\{ \begin{array}{l} z_1z_2+z_4^{2}+z_3z_5^{n_5}=0\\z_1z_5+z_3^{2}+z_2^{n_2}z_4=0\end{array}\right.$
\\$n_2 \ge 2~and~2\le n_5 \le n_2$
\\$(w_1,w_2,w_3,w_4,w_5;1,d)=(\frac{-3 + 2 n_2 + 2 n_5 + 4 n_2 n_5}{2 (n_2 + 2 n_5 + 2 n_2 n_5)},\frac{3 + 2 n_5}{2 (n_2 + 2 n_5 + 2 n_2 n_5)},\frac{2 n_2 + n_5 + 2 n_2 n_5}{2 (n_2 + 2 n_5 + 2 n_2 n_5)},\frac{1}{2},\frac{3 + 2 n_2}{2 (n_2 + 2 n_5 + 2 n_2 n_5)};\\ 1,\frac{2 n_2 + n_5 + 2 n_2 n_5}{n_2 + 2 n_5 + 2 n_2 n_5})$
\\$\mu=-3 + 4 n_5 + 4 n_2 (1 + n_5)$
\begin{flalign*}
\begin{split}
mini=\{[z_5^{2 n_5},0],[z_3z_5^{-1 + n_5},0],[0,z_2^{-3 + n_2}z_5^{1 + 2 n_5}],[0,z_2^{-2 + 2 n_2}z_5^{2 n_5}],[0,z_2^{-1 + 2 n_2}z_5^{-1 + 2 n_5}],\\ [0,z_2^{2 n_2}z_5^{-2 + n_5}],[0,z_2^{-1 + n_2}z_4]\}
\end{split}&
\end{flalign*}
\\(152)$\left\{ 
\right.$
\\$(w_1,w_2,w_3,w_4,w_5;1,d)=(\frac{34}{35},\frac{1}{35},\frac{5}{7},\frac{2}{7},\frac{17}{35};1,\frac{51}{35})$
\\$\mu=463$
\begin{flalign*}
\begin{split}
mini=\{[z_5^{2},0],[0,z_2^{50}z_5^{2}],[0,z_4^{5}],[0,z_2^{50}z_4^{4}],[0,z_3^{2}],[0,z_2^{50}z_3],[0,z_2^{99}]\}
\end{split}&
\end{flalign*}

\begin{proof}
It is easy to check that each singularities defined by pairs of polynomials in the list above are thee dimensional isolated rational complete intersection singularities.
By Lemma \ref{l3} (1), we know that for any $i \in \{ 1,2,3,4,5 \}$, one of the following cases occurs:
  \\ ~ (1a)~$z_i^{n_i}$ appears in $f_1$ for some $n_i$,
  \\ ~ (1b)~$z_i^{n_i}$ appears in $f_2$ for some $n_i$,
  \\ ~ (1c)~there exist $j_i, k_i \in \{1,2,3,4,5\}\setminus \{i\}~and~j_i \ne k_i$ such that $z_i^{n_i}z_{j_i}$ appears in $f_1$ for some $n_i$ and $z_i^{m_i}z_{k_i}$ appears in $f_2$ for some $m_i$.

For each $i \in \{1,2,3,4,5\}$, if one of (1a), (1b) and (1c) occurs, then  there are $3^5=243$ cases. If (1a) or (1b) occurs, one  monomial which appear in $f_1$ and $f_2$ can be determined. And if (1c) occurs, then two monomials which appear in $f_1$ and $f_2$ can be determined. Now we consider the following two cases:
\\(I)~There exists $i \in \{1,2,3,4,5\}$ such that (1c) occurs. Therefore more than 6 monomials in $f_1$ and $f_2$ are determined. Thus we get more than 6 equations of $w_1,\cdots,w_5 ,d$ (for instance, if we have $z_1^{n_1}z_2$ appears in $f_2$, then we have $n_1w_1+w_2=d$). So $(w_1,\cdots,w_5,d)$ is uniquely determined  by solving these 6 linear equations. And we have checked that each weight type $(w_1,\cdots,w_5;1,d)$ obtained by this way, which satisfies the rational condition $w_1+\cdots+w_5>1+d$ and the conditions listed in Corollary \ref{c2}, is the same as one of the weight types of the singularities in the list up to permutation of coordinates. 

More explicitly, for example, we treat the case that $z_1^{n_1},\cdots, z_4^{n_4}, z_5^{n_5}z_4$ appear in $f_1$ and  $z_5^{m_5}z_3$ appears in $f_2$. Then we can get $w_1=\frac{1}{n_1},\cdots, w_4=\frac{1}{n_4}, w_5=\frac{n_4-1}{n_4n_5}, d=\frac{m_5(n_4-1)}{n_4n_5}+\frac{1}{n_3}$ by solving the 6 corresponding linear equations. Without lose of generality, we may assume that $w_1 \ge w_2$. Since we have $w_1+\cdots+w_5>1+d$ and $d \ge 1$, so we conclude that $(n_1,\cdots,n_5,m_5)$ can only be one of the following cases:
\\(1)~$(2,2,u,v,1,1), \;2 \le u\le v$
\\(2)~$(2,3,u,v,1,1), \;2 \le u\le v \le 5$
\\(3)~$(2,4,u,v,1,1), \;2 \le u\le v \le 3$
\\(4)~$(2,5,u,v,1,1), \;2 \le u\le v \le 3$
\\(5)~$(2,u,2,2,1,1), \;u \ge 6$
\\(6)~$(3,u,2,2,1,1), \;3 \le u \le 5$
\\(7)~$(2,2,2,2,u,u), \;u \ge 2$
\\(8)~$(2,2,2,u,2,1), \;u \ge 2$
\\(9)~$(2,3,2,2,2,2)$.
\\Then we only consider the infinite cases (1), (5) and (7). The other finite cases can be checked easily.
\\For infinite case (1), we have $$w_1=w_2=\frac{1}{2}, w_3=\frac{1}{u}, w_4=\frac{1}{v}, w_5=1-\frac{1}{v},d=1-\frac{1}{v}+\frac{1}{u},~2 \le u\le v$$  By Corollary \ref{c2} (2), we have $d \in N(w_1,w_2)$ or there exists $k\in \{3,4,5\}$ such that $d-w_k\in N(w_1,w_2)$, it follows that one of following cases occurs: 
\\(i).~$d \in N(\frac{1}{2},\frac{1}{2})$ 
\\(ii).~$d-\frac{1}{u} \in N(\frac{1}{2},\frac{1}{2})$
\\(iii). ~$d-\frac{1}{v} \in N(\frac{1}{2},\frac{1}{2})$ 
\\(iv). ~$d+\frac{1}{v}-1 \in N(\frac{1}{2},\frac{1}{2})$. 

If (i) $d \in N(\frac{1}{2},\frac{1}{2})$ occurs, since $v \ge u \ge 2$, so we have $1 \le d =1-\frac{1}{v}+\frac{1}{u} <\frac{3}{2}$, thus $d=1$ and $u=v$. 

 If (ii) $d-\frac{1}{u} \in N(\frac{1}{2},\frac{1}{2})$ occurs, since $v \ge 2$, we have $\frac{1}{2} \le d-\frac{1}{u}=1-\frac{1}{v} < 1$, thus  $1-\frac{1}{v}= \frac{1}{2}$. Hence we have $v=2$. Since $2 \le u \le v$, so we have  $u=2$.
 
  If (iv) $d+\frac{1}{v}-1 \in N(\frac{1}{2},\frac{1}{2})$ occurs, i.e. $\frac{1}{u}=d+\frac{1}{v}-1  \in N(\frac{1}{2},\frac{1}{2})$.  Since  $0<\frac{1}{u} \le \frac{1}{2}$, we have $\frac{1}{u}=\frac{1}{2}$.  Therefore we have $u=2$.

\leavevmode \\If (iii) $d-\frac{1}{v} \in N(\frac{1}{2},\frac{1}{2})$ occurs, since $2 \le u \le v$, we have  $0<d-\frac{1}{v}=1-\frac{2}{v}+\frac{1}{u} < \frac{3}{2}$. Thus we have $d-\frac{1}{v}=1-\frac{2}{v}+\frac{1}{u}=\frac{1}{2}$ or 1. If $d-\frac{1}{v}=\frac{1}{2}$, notice that $d \ge 1$ and $2 \le u \le v$, thus we have $v=u=2$.  If $d-\frac{1}{v}=1-\frac{2}{v}+\frac{1}{u}=1$, we have $2u=v$. 
\\In conclusion, we have $2 \le u=v$ or $2=u \le v$ or $4 \le 2u=v$. So $(w_1,..,w_5;1,d)$ is same as one of the following cases:
\\case (1)~$2 \le u=v \Rightarrow (w_1,..,w_5;1,d)=(\frac{1}{2},\frac{1}{2},\frac{1}{n},\frac{1}{n},\frac{n-1}{n};1,1),~n \ge 2$
\\case (2)~$2=u \le v \Rightarrow (w_1,..,w_5;1,d)=(\frac{1}{2},\frac{1}{2},\frac{1}{2},\frac{1}{n},\frac{1}{2};1,1),~n \ge 2$
\\case (3)~$2u=v \Rightarrow (w_1,..,w_5;1,d)=(\frac{1}{2},\frac{1}{2},\frac{1}{n},\frac{1}{2n},\frac{2n-1}{2n};1, \frac{2n+1}{2n}),~n \ge 2$.
\\And we can see that case (1); case (2) and case (3) corresponds to the 1st $(n=2)$, 59th; 1th; and 55th, 60th singularities in the list respectively.

For infinite case (5), we have 

$$w_1=w_3=w_4=w_5=\frac{1}{2}, w_2=\frac{1}{u}, d=1.$$

For infinite case (7),  we have 

$$w_1=w_2=w_3=w_4=\frac{1}{2}, w_5=\frac{1}{2u}, d=1.$$
 it is easy to seen that $(w_1,..,w_5;1,d)$ of infinite cases (5) and (7) is included in the weight types of the 1th singularities in the list up to permutation of coordinates.
\\
\\(II)~For each $i \in \{1,2,3,4,5\}$, (1a) or (1b) occurs. Then there are only 5 monomials in $f_1$ and $f_2$ can be determined. In order to determine $(w_1,\cdots, w_5;d)$,  we need at least one more monomial included in $f_1$ and $f_2$. Since 5 monomials in $f_1$ and $f_2$ are known, so it is easy to seen that one of following will occurs:
\\(a)~there exists $\{i_1, i_2\} \subset \{1,2,3,4,5\}$ such that (1a) occurs when $i=i_1, i_2$
\\(b)~there exists $\{j_1, j_2\} \subset \{1,2,3,4,5\}$ such that (1b) occurs when $i=j_1, j_2$. 
\\If (a) occurs, then by Lemma \ref{l3} (2), we have $z_{i_1}^az_{i_2}^b$ appears in $f_2$ for some non-negative integer $a,b$ or there exist $k \in \{1,2,3,4,5\} \setminus \{i_1,i_2\}$ such that $z_kz_{i_1}^az_{i_2}^b$ appears in $f_2$ for some non-negative integer $a,b$. Thus we have 6 monomials in $f_1$ and $f_2$ are determined now.
\\If (b) occurs, then by Lemma \ref{l3} (2), we have $z_{j_1}^az_{j_2}^b$ appears in $f_1$ for some non-negative integer $a,b$ or there exist $k \in \{1,2,3,4,5\} \setminus \{j_1,j_2\}$ such that $z_kz_{j_1}^az_{j_2}^b$ appears in $f_1$ for some non-negative integer $a,b$. Thus there are 6 monomials in $f_1$ and $f_2$ are determined now.
\\More explicitly, let us consider the example that $z_1^{n_1}, z_2^{n_2},z_3^{n_3}$ appear in $f_1$, and $z_4^{n_4}, z_5^{n_5}$ appear in $f_2$. Then by Lemma \ref{l3} (2) we have $z_4^az_5^b$ appears in $f_1$ for some non-negative integers $a,b$ or there exist $k \in \{1,2,3\}$ such that $z_kz_4^az_5^b$ appears in $f_1$ for some non-negative integer $a,b$. Thus there are 6 monomials in $f_1$ and $f_2$ are determined.  It follows that $(w_1,\cdots,w_5; d)$ is determined as above. And we have checked that each weight type $(w_1,\cdots,w_5;1,d)$ gotten by this way, which satisfies the rational condition $w_1+\cdots+w_5>1+d$ and the conditions listed in Corollary \ref{c2}, is the same as one of the weight types of the singularities in the above list up to permutation of coordinates.

\end{proof}

\providecommand{\bysame}{\leavevmode\hbox to3em{\hrulefill}\thinspace}
\providecommand{\MR}{\relax\ifhmode\unskip\space\fi MR }
\providecommand{\MRhref}[2]{%
  \href{http://www.ams.org/mathscinet-getitem?mr=#1}{#2}
}
\providecommand{\href}[2]{#2}

\end{document}